\newcommand{\calF}{\mathcal{F}}
\newcommand{\calP}{\mathcal{P}}
\newcommand{\A}{\mathscr{A}}
\newcommand{\B}{\mathscr{B}}
\newcommand{\E}{\mathbb{E}}
\newcommand{\calE}{\mathcal{E}}
\newcommand{\F}{\mathcal{F}}
\newcommand{\G}{\mathcal{G}}
\newcommand{\R}{\mathcal{R}}
\newcommand{\calL}{\mathcal{L}}
\newcommand{\V}{\mathcal{V}}
\newcommand{\M}{\mathcal{M}}
\newcommand{\X}{\mathcal{X}}
\newcommand{\Y}{\mathcal{Y}}
\newcommand{\Prob}{\mathit{Prob}}
\newcommand{\Nset}{\mathbb{N}}
\newcommand{\Nseto}{\Nset_0}
\newcommand{\Zset}{\mathbb{Z}}
\newcommand{\Qset}{\mathbb{Q}}
\newcommand{\Rset}{\mathbb{R}}
\newcommand{\fpath}{\mathit{FPath}}
\newcommand{\run}{\mathit{Run}}
\newcommand{\len}{\mathit{length}}
\newcommand{\pre}{\mathit{Pre}}
\newcommand{\post}{\mathit{Post}}
\newcommand{\norm}[1]{\|#1\|}
\newcommand{\pfterm}{T^{>0}_{<\infty}}
\newcommand{\pterm}{T^{>0}}
\newcommand{\tran}[1]{{}\mathchoice%
    {\stackrel{#1}{\rightarrow}}
    {\mathop {\smash\rightarrow}\limits^{\vrule width 0pt height 0pt
                                                depth 4pt\smash{#1}}}
    {\stackrel{#1}{\rightarrow}}
    {\stackrel{#1}{\rightarrow}}
{}}
\newcommand{\prule}[1]{{}\mathchoice%
    {\stackrel{#1}{\longrightarrow}}
    {\mathop {\smash\longrightarrow_{>0}}\limits^{\vrule width 0pt height 0pt
          depth 2.5pt\smash{#1\hspace*{5pt}}}}
    {\stackrel{#1}{\longrightarrow}_{>0}}
    {\stackrel{#1}{\longrightarrow}_{>0}}
{}}
\newcommand{\zrule}[1]{{}\mathchoice%
    {\stackrel{#1}{\longrightarrow_z}}
    {\mathop {\smash\longrightarrow_{=0}}\limits^{\vrule width 0pt height 0pt
          depth 2.5pt\smash{#1\hspace*{5pt}}}}
    {\stackrel{#1}{\longrightarrow}_{=0}}
    {\stackrel{#1}{\longrightarrow}_{=0}}
{}}
\definecolor{orange3}{rgb}{1.0,0.2538,0.1681}
\definecolor{blau}{rgb}{0,.39608,0.74118}
\definecolor{rot}{rgb}{0.79216,.12941,0.24706}
\definecolor{ml}{rgb}{0.2,0.7,0.7}
\newcommand\Tony[1]{}
\newcommand\Stefan[1]{}
\newcommand\Tomas[1]{}
\newcommand{\tony}[1]{}
\newcommand\stefan[1]{}
\newcommand\tomas[1]{}
\newcommand{\vzero}{\boldsymbol{0}}
\newcommand{\vone}{\boldsymbol{1}}
\newcommand{\vv}{\boldsymbol{v}}
\newcommand{\vd}{\boldsymbol{d}}%
\newcommand{\vu}{\boldsymbol{u}}%
\newcommand{\vmax}{\vv_{\max}}
\newcommand{\vmin}{\vv_{\min}}
\newcommand{\va}{|\vv|}
\newcommand{\vs}{\boldsymbol{s}}
\newcommand{\valpha}{\boldsymbol{\alpha}}
\newcommand{\xmin}{x_{\min}}
\newcommand{\cs}[1]{c^{(#1)}}
\newcommand{\ps}[1]{p^{(#1)}}
\newcommand{\ms}[1]{m^{(#1)}}
\newcommand{\acc}{\mathit{acc}}
\newcommand{\rej}{\mathit{rej}}
\newcommand{\cons}{\mathit{cons}}
\newcommand{\inco}{\mathit{inco}}
\newcommand{\qedv}[1]{\ifmmode\squareforqed\else{\unskip\nobreak\hfil
\penalty50\hskip1em\null\nobreak\hfil(#1)\squareforqed
\parfillskip=0pt\finalhyphendemerits=0\endgraf}\fi}
\newcommand{\kw}[1]{\textbf{\rmfamily #1}}
\newcommand{\msp}{\hspace*{1.5em}}
\newcommand{\theoremlike}[2]{\par\medskip\penalty-250%
{{\bfseries\noindent
#2 \ref{#1}.}}\it}
\newcommand{\thmhelperpre}[2]{\theoremlike{#1}{#2}}
\newcommand{\thmhelperpost}{\par\medskip}
\newenvironment{reftheorem}[1]{\thmhelperpre{#1}{Theorem}}{\thmhelperpost}
\newenvironment{refproposition}[1]{\thmhelperpre{#1}{Proposition}}{\thmhelperpost }
\begin{document}

\title{Efficient Analysis of Probabilistic Programs with
an Unbounded Counter
}

\author{Tom\'{a}\v{s} Br\'{a}zdil\inst{1} \and
        Stefan Kiefer\inst{2} \and
        Anton\'{\i}n Ku\v{c}era\inst{1}}

\institute{Faculty of Informatics, Masaryk University, Czech Republic.\\
    \texttt{\{brazdil,kucera\}@fi.muni.cz} \and
     Oxford University Computing Laboratory, United Kingdom.\\
  \texttt{stefan.kiefer@comlab.ox.ac.uk}}

\maketitle

\begin{abstract}
\noindent
We show that a subclass of infinite-state probabilistic programs
that can be modeled by probabilistic one-counter automata (pOC)
admits an efficient quantitative analysis. In particular, we show
that the expected termination time can be approximated up to an arbitrarily
small relative error with polynomially many arithmetic operations, and the same holds
for the probability of all runs that satisfy a given $\omega$-regular
property.
Further, our results establish a powerful link between pOC
and martingale theory, which leads to fundamental
observations about quantitative properties of runs in pOC.
In particular, we provide a ``divergence gap theorem'', which bounds
a positive non-termination probability in pOC away from zero.
\end{abstract}

\section{Introduction}
\label{sec-intro}

In this paper we aim at designing \emph{efficient} algorithms for
analyzing basic properties of probabilistic programs operating
on unbounded data domains that can be abstracted into
a non-negative integer counter. Consider, e.g., the recursive
program of Fig.~\ref{fig-and-or} which evaluates a given AND-OR
tree, i.e., a tree whose root is an AND node, all descendants
of AND nodes are either leaves or OR nodes, and all descendants
of OR nodes are either leaves or AND nodes. Note that the program
evaluates a subtree only when necessary. In general, the program
may not terminate and we cannot say anything about its expected
termination time. Now assume that we \emph{do} have some knowledge about
the actual input domain of the program, which might have been gathered
empirically:
\begin{itemize}
\item an AND node has about $a$ descendants on average;
\item an OR node has about $o$ descendants on average;
\item the length of a branch is $b$ on average;
\item the probability that a leaf evaluates to $1$ is $z$.
\end{itemize}
Further, let us assume that the actual number of descendants and the
actual length of a branch are \emph{geometrically} distributed (which
is a reasonably good approximation in many cases). Hence, the probability
that an AND node has \emph{exactly} $n$ descendants is
$(1-x_a)^{n-1} x_a$ with $x_a = \frac{1}{a}$. Under these assumption, the
behaviour of the program is well-defined in the probabilistic sense,
and we may ask the following questions:
\begin{itemize}
\item[1)] Does the program terminate with probability one? If not,
  what is the termination probability?
\item[2)] If we restrict ourselves to terminating runs, what is the
  expected termination time? (Note that this conditional expected
  value is defined even if our program does not terminate with probability
  one.)
\end{itemize}
These questions are not trivial, and at first glance it is
not clear how to approach them.
Apart of the expected termination time,
which is a fundamental characteristic of terminating runs, we are
also interested in the properties on \emph{non-terminating} runs,
specified by linear-time logics or automata on infinite words. Here,
we ask for the probability of all runs satisfying a given
linear-time property. Using the results of this paper, answers to such
questions can be computed \emph{efficiently} for a large class of
programs, including the one of Fig.~\ref{fig-and-or}. More precisely,
the first question about the probability of termination can be answered
using the existing results \cite{EWY:one-counter}; the original
contributions of this paper are efficient algorithms for computing
answers to the remaining questions.

\begin{figure}[t]
\noindent\hspace{\fill}
\parbox{.45\textwidth}{\ttfamily\footnotesize
\kw{procedure} AND(node)\\[.5ex]
\kw{if} node is a leaf\\
\msp   \kw{then} \kw{return} node.value\\
\kw{else}\\
\msp   \kw{for each} successor s of node \kw{do}\\
\msp\msp      \kw{if} OR(s) = 0 \kw{then} \kw{return} 0\\
\msp   \kw{end for}\\
\msp   \kw{return} 1\\
\kw{end if}}\hspace{\fill}
\parbox{.45\textwidth}{\ttfamily\footnotesize
\kw{procedure} OR(node)\\[.5ex]
\kw{if} node is a leaf\\
\msp   \kw{then} \kw{return} node.value\\
\kw{else}\\
\msp   \kw{for each} successor s of node \kw{do}\\
\msp\msp      \kw{if} AND(s) = 1 \kw{then} \kw{return} 1\\
\msp   \kw{end for}\\
\msp   \kw{return} 0\\
\kw{end if}}\hspace{\fill}
\caption{A recursive program for evaluating AND-OR trees.}
\label{fig-and-or}
\end{figure}

The abstract class of probabilistic programs considered in this paper
corresponds to \emph{probabilistic one-counter automata (pOC)}.
Informally, a pOC has finitely many control states $p,q,\ldots$ that
can store global data, and a single non-negative counter that can be
incremented, decremented, and tested for zero.  The dynamics of a
given pOC is described by finite sets of \emph{positive} and
\emph{zero} rules of the form $p \prule{x,c} q$ and $p \zrule{x,c} q$,
respectively, where $p,q$ are control states, $x$ is the
\emph{probability} of the rule, and $c \in \{-1,0,1\}$ is the
\emph{counter change} which must be non-negative in zero rules. A
\emph{configuration} $p(i)$ is given by the current control state $p$
and the current counter value~$i$.  If $i$ is positive/zero, then
positive/zero rules can be applied to $p(i)$ in the natural way. Thus,
every pOC determines an infinite-state Markov chain where states are
the configurations and transitions are determined by the rules. As an
example, consider a pOC model of the program of
Fig.~\ref{fig-and-or}. We use the counter to abstract the stack of
activation records. Since the procedures AND and OR alternate
regularly in the stack, we keep just the current stack height in the
counter, and maintain the ``type'' of the current procedure
in the finite control
(when we increase or decrease the counter, the ``type'' is
swapped). The return values of the two procedures are also stored in
the finite control. Thus, we obtain the pOC model of
Fig.~\ref{fig-and-or-model} with $6$ control states and $12$ positive
rules (zero rules are irrelevant and hence not shown in
Fig.~\ref{fig-and-or-model}). The initial configuration is
$(\textit{and,init})(1)$, and the pOC terminates either in
$(\textit{or,return,0})(0)$ or $(\textit{or,return,1})(0)$, which
corresponds to evaluating the input tree to $0$ and $1$, respectively.
We set $x_a := 1/a$, $x_o := 1/o$ and $y := 1/b$
 in order to obtain the average numbers $a, o, b$ from the beginning.

\begin{figure}[t]
\noindent\hspace{\fill}
\parbox{.46\textwidth}{\footnotesize
\textrm{/* if we have a leaf, return 0 or 1 */}\\
\msp$(\textit{and,init}) \xrightarrow{y\, z,\,-1} (\textit{or,return,1})$,\\
\msp$(\textit{and,init}) \xrightarrow{y (1-z),\,-1}
    (\textit{or,return,0})$\\[.5ex]
\textrm{/* otherwise, call OR */}\\
\msp$(\textit{and,init}) \xrightarrow{(1-y),1}   (\textit{or, init})$\\[.5ex]
\textrm{/* if OR returns 1, call another OR? */}\\
\msp$(\textit{and,return,1}) \xrightarrow{(1-x_a),\,1}  (\textit{or,init})$\\
\msp$(\textit{and,return,1}) \xrightarrow{x_a,\,-1}
   (\textit{or,return,1})$\\[.5ex]
\textrm{/* if OR returns 0, return 0 immediately */}\\
\msp$(\textit{and,return,0}) \xrightarrow{1,\,-1} (\textit{or,return,0})$}
\hspace{\fill}
\parbox{.46\textwidth}{\footnotesize
\textrm{/* if we have a leaf, return 0 or 1 */}\\
\msp$(\textit{or,init}) \xrightarrow{y\,z,\,-1} (\textit{and,return,1})$,\\
\msp$(\textit{or,init}) \xrightarrow{y(1-z),\,-1}
    (\textit{and,return,0})$\\[.5ex]
\textrm{/* otherwise, call AND */}\\
\msp$(\textit{or,init}) \xrightarrow{(1-y),1}   (\textit{and,init})$\\[.5ex]
\textrm{/* if AND returns 0, call another AND? */}\\
\msp$(\textit{or,return,0}) \xrightarrow{(1-x_o),\,1}  (\textit{and, init})$\\
\msp$(\textit{or,return,0}) \xrightarrow{x_o,\,-1}
   (\textit{and,return,0})$\\[.5ex]
\textrm{/* if AND returns 1, return 1 immediately */}\\
\msp$(\textit{or,return,1}) \xrightarrow{1,\,-1} (\textit{and,return,1})$}
\hspace{\fill}
\caption{A pOC model for the program of Fig.~\ref{fig-and-or}.}
\label{fig-and-or-model}
\end{figure}

As we already indicated, pOC can model recursive
programs operating on unbounded data structures such as trees, queues,
or lists, assuming that the structure can be faithfully abstracted
into a counter.
Let us note that modeling general recursive
programs requires more powerful formalisms such as
\emph{probabilistic pushdown automata (pPDA)} or
\emph{recursive Markov chains (RMC)}. However, as it is
mentioned below,
pPDA and RMC do not admit \emph{efficient} quantitative analysis
for fundamental reasons. Hence, we must inevitably sacrifice
a part of pPDA modeling power to gain efficiency in algorithmic
analysis, and pOC seem to be a convenient compromise for achieving this
goal.

The relevance of pOC is not limited just
to recursive programs. As observed in \cite{EWY:one-counter},
pOC are equivalent,
in a well-defined sense, to discrete-time \emph{Quasi-Birth-Death
processes (QBDs)},
a well-established stochastic model that has been deeply studied since
late~60s.
Thus, the applicability of pOC extends to queuing theory, performance
evaluation, etc., where  QBDs are considered as a fundamental
formalism. Very recently, games over (probabilistic) one-counter
automata, also called ``energy games'', were considered in several
independent works \cite{CHD:energy-games,CHDHR:energy-mean-payoff,%
BBEKW:OC-MDP,BBE:OC-games}. The study is motivated by optimizing the use
of resources (such as energy) in modern computational devices.

\textbf{Previous work.} In \cite{EKM:prob-PDA-PCTL,EY:RMC-SG-equations},
it has been shown that the vector of termination probabilities in
pPDA and RMC is the least solution of an effectively constructible
system of quadratic equations. The termination probabilities may take
irrational values, but can be effectively approximated up to an
arbitrarily small absolute error $\varepsilon >0$
in polynomial space by employing the decision procedure for the
existential fragment of Tarski algebra (i.e., first order theory of
the reals) \cite{Canny:Tarski-exist-PSPACE}. Due to the results of
\cite{EY:RMC-SG-equations}, it is possible to approximate termination probabilities
in pPDA and RMC ``iteratively'' by using the decomposed Newton's method.
However, this approach may need exponentially many iterations of the
method before it starts to produce one bit of precision per iteration
\cite{KLE07:stoc}.
Further, any non-trivial approximation of the non-termination probabilities
is at least as hard as the \textsc{SquareRootSum}
problem~\cite{EY:RMC-SG-equations}, whose exact
complexity is a long-standing open question in exact numerical computations
(the best known upper bound for \textsc{SquareRootSum} is PSPACE).
Computing termination probabilities in pPDA and RMC up to a given
\emph{relative} error $\varepsilon > 0$, which is more relevant from
the point of view of this paper, is \emph{provably} infeasible because
the termination probabilities can be doubly-exponentially small
in the size of a given pPDA or RMC~\cite{EY:RMC-SG-equations}.

The expected termination time and the expected reward per transition
in pPDA and RMC has been studied in \cite{EKM:prob-PDA-expectations}.
In particular, it has been
shown that the tuple of expected termination times is the least solution
of an effectively constructible system of linear equations, where the
(products of) termination probabilities are used as coefficients. Hence,
the equational system can be represented only symbolically, and the
corresponding approximation algorithm again employs
the decision procedure for Tarski algebra. There also other results
for pPDA and RMC, which concern model-checking problems for
linear-time \cite{EY:RMC-LTL-complexity,EY:RMC-LTL-QUEST} and
branching-time \cite{BKS:pPDA-temporal} logics, long-run average
properties \cite{BEK:prob-PDA-predictions}, discounted
properties of runs \cite{BBHK:pPDA-discounted}, etc.

\textbf{Our contribution.}
In this paper, we build on the previously established results for pPDA
and RMC, and on the recent results of \cite{EWY:one-counter}
where is shown that the  decomposed Newton method
of \cite{KLE:Newton-STOC} can be used to compute
termination probabilities in pOC up to a given \emph{relative} error
$\varepsilon>0$ in time which is \emph{polynomial} in the size of pOC and
$\log(1/\varepsilon)$, assuming the unit-cost rational arithmetic
RAM (i.e., Blum-Shub-Smale) model of computation. Adopting
the same model, we show the following:
\begin{itemize}
\item[1.] The expected termination time in a pOC $\A$ is computable up to
  an arbitrarily small relative error $\varepsilon > 0$ in time
  polynomial in $|\A|$ and $\log(1/\varepsilon)$. Actually, we can even
  compute the expected termination time up to an arbitrarily small
  \emph{absolute} error, which is a better estimate because
  the expected termination time is always at least~$1$.
\item[2.] The probability of all runs in a pOC $\A$ satisfying
  an $\omega$-regular property encoded by a deterministic Rabin
  automaton $\R$  is computable up to
  an arbitrarily small relative error $\varepsilon > 0$ in time
  polynomial in $|\A|$, $|\R|$, and $\log(1/\varepsilon)$.
\end{itemize}
The crucial step towards obtaining these results is the construction
of a suitable \emph{martingale}  for a given
pOC, which allows to apply powerful results of martingale theory
(such as the optional stopping theorem or Azuma's inequality, see,
e.g., \cite{Rosenthal:book,Williams:book}) to
the quantitative analysis of pOC. In particular, we use this martingale
to establish the crucial \emph{divergence gap theorem} in
Section~\ref{sec-LTL}, which bounds a positive divergence
probability in pOC away from~$0$. The divergence gap theorem is
indispensable in analysing properties of non-terminating runs,
and together with the constructed martingale provide generic tools
for designing efficient approximation
algorithms for other interesting quantitative properties of~pOC.

Although our algorithms have polynomial worst-case complexity, the
obtained bounds look complicated and it is not immediately clear
whether the algorithms are practically usable. Therefore,
we created a simple experimental implementation which computes
the expected termination time for pOC, and used this tool to analyse the
pOC model of Fig.~\ref{fig-and-or-model}. The details are given
in Section~\ref{sec-concl}.



\section{Definitions}
\label{sec-defs}

\noindent
We use $\Zset$, $\Nset$, $\Nset_0$, $\Qset$, and $\Rset$
to denote the set of all integers, positive integers, non-negative
integers, rational numbers, and real numbers, respectively.
Let $\delta > 0$, $x \in \Qset$, and $y \in \Rset$.
We say that $x$ approximates $y$ up to
a relative error $\delta$, if either \mbox{$y \neq 0$} and
\mbox{$|x-y|/|y| \leq \delta$}, or $x = y = 0$. Further, we
say that $x$ approximates $y$ up to
an absolute error $\delta$ if $|x-y| \leq \delta$.
We use standard notation for intervals, e.g.,
$(0,1]$ denotes \mbox{$\{x \in \Rset \mid 0 < x \leq 1 \}$}.

Given a finite set~$Q$, we regard elements of~$\Rset^Q$ as vectors over~$Q$.
We use boldface symbols like $\vu, \vv$ for vectors.
In particular we write~$\vone$ for the vector whose entries are all~$1$.
Similarly, matrices are elements of~$\Rset^{Q \times Q}$.

Let $\V = (V,\tran{})$, where $V$ is a non-empty set of vertices and
${\tran{}} \subseteq V \times V$ a \emph{total} relation
(i.e., for every $v \in V$ there is some $u \in
V$ such that $v \tran{} u$). The reflexive and transitive closure of
$\tran{}$ is denoted by $\tran{}^*$. A \emph{finite path} in $\V$ of
\emph{length} $k \geq 0$ is a finite sequence
of vertices $v_0,\ldots,v_k$, where
$v_i \tran{} v_{i+1}$ for all $0 \leq i <k$. The length of a finite
path $w$ is denoted by $\len(w)$. A \emph{run} in $\V$ is an infinite
sequence $w$ of vertices such that every finite prefix of $w$ is
a finite path in $\V$. The individual vertices of $w$ are denoted by
$w(0),w(1),\ldots$ The sets of all finite paths and all runs
in $\V$ are denoted by $\fpath_{\V}$ and $\run_{\V}$, respectively.
The sets of all finite paths and all runs in $\V$
that start with a given finite path $w$ are denoted by
$\fpath_{\V}(w)$ and $\run_{\V}(w)$, respectively.
A \emph{bottom strongly connected component (BSCC)} of $\V$ is a
subset $B \subseteq V$ such that for all $v,u \in B$
we have that $v \tran{}^* u$, and whenever $v \tran{} u'$ for some $u' \in V$,
then $u' \in B$.

We assume familiarity with basic notions of probability theory, e.g.,
\emph{probability space}, \emph{random variable}, or the \emph{expected
value}. As usual, a \emph{probability distribution} over a finite or
countably infinite
set $X$ is a function
$f : X \rightarrow [0,1]$ such that \mbox{$\sum_{x \in X} f(x) = 1$}.
We call $f$ \emph{positive} if
$f(x) > 0$ for every $x \in X$, and \emph{rational} if $f(x) \in
\Qset$ for every $x \in X$.

\begin{definition}
  A \emph{Markov chain} is  a
  triple \mbox{$\M = (S,\tran{},\Prob)$}
  where $S$ is a finite or countably infinite set of \emph{states},
  ${\tran{}} \subseteq S \times S$ is a total \emph{transition relation},
  and $\Prob$ is a function that assigns to each state $s \in S$
  a positive probability distribution over the outgoing transitions
  of~$s$. As usual, we write $s \tran{x} t$ when $s \tran{} t$
  and $x$ is the probability of $s \tran{} t$.
\end{definition}
A Markov chain $\M$ can be also represented by its
\emph{transition matrix} $M \in [0,1]^{S{\times}S}$, where
$M_{s,t} = 0$ if $s \not\rightarrow t$, and $M_{s,t} = x$ if $s \tran{x} t$.

To every $s \in S$ we associate the probability
space $(\run_{\M}(s),\calF,\calP)$ of runs starting at $s$,
where
$\calF$ is the \mbox{$\sigma$-field} generated by all \emph{basic cylinders},
$\run_{\M}(w)$, where $w$ is a finite path starting at~$s$, and
$\calP: \calF \rightarrow [0,1]$ is the unique probability measure such that
$\calP(\run_{\M}(w)) = \prod_{i{=}1}^{\len(w)} x_i$ where
$w(i{-}1) \tran{x_i} w(i)$ for every $1 \leq i \leq \len(w)$.
If $\len(w) = 0$, we put $\calP(\run_{\M}(w)) = 1$.

\begin{definition}
\label{def-pOC}
A \emph{probabilistic one-counter automaton (pOC)} is a tuple,
$\A = (Q,\delta^{=0},\delta^{>0},P^{=0},P^{>0})$, where
\begin{itemize}
  \item $Q$ is a finite set of \emph{states},
  \item
    $\delta^{>0} \subseteq Q \times \{-1,0,1\} \times Q$
    and $\delta^{=0} \subseteq Q \times \{0,1\} \times Q$ are the sets
    of \emph{positive} and \emph{zero rules} such that each
    $p \in Q$ has an outgoing positive rule and an outgoing zero rule;
  \item $P^{>0}$ and $P^{=0}$ are \emph{probability assignments}:
    both assign to each $p \in Q$, a positive rational probability
    distribution over the outgoing
    rules in $\delta^{>0}$ and $\delta^{=0}$,
    respectively, of~$p$.
\end{itemize}
\end{definition}

\noindent
In the following, we often write $p \zrule{x,c} q$ to denote that
$(p,c,q) \in \delta^{=0}$ and $P^{=0}(p,c,q) = x$, and similarly
$p \prule{x,c} q$ to denote that
$(p,c,q) \in \delta^{>0}$ and $P^{>0}(p,c,q) = x$. The size of $\A$,
denoted by $|\A|$, is the length of the string which represents~$\A$,
where the probabilities of rules are written in binary.
A \emph{configuration} of $\A$ is an element of $Q \times \Nseto$,
written as $p(i)$. To $\A$ we associate an infinite-state Markov
chain $\M_\A$ whose states are the configurations of $\A$, and for all
$p,q \in Q$, $i\in \Nset$, and $c \in \Nseto$ we have that
$p(0) \tran{x} q(c)$ iff $p \zrule{x,c} q$, and
$p(i) \tran{x} q(c)$ iff $p \prule{x,c{-}i} q$.
For all $p,q \in Q$, let
\begin{itemize}
\item $\run_\A(p{\downarrow}q)$ be the set of all
  runs in $\M_\A$ initiated in $p(1)$ that visit $q(0)$ and the counter
  stays positive in all configurations preceding this visit;
\item $\run_\A(p{\uparrow})$ be the set of all runs in $\M_\A$
  initiated in $p(1)$ where the counter never reaches zero.
\end{itemize}
We omit the ``$\A$'' in $\run_\A(p{\downarrow}q)$ and
$\run_\A(p{\uparrow})$ when it is clear from the context, and
we use $[p{\downarrow}q]$ and $[p{\uparrow}]$ to denote the
probability of $\run(p{\downarrow}q)$
and  $\run(p{\uparrow})$, respectively. Observe that
$[p{\uparrow}] = 1 - \sum_{q \in Q} [p{\downarrow}q]$ for
every $p \in Q$.

At various places in this paper we rely on the following proposition
proven in~\cite{EWY:one-counter} (recall that
we adopt the unit-cost rational arithmetic RAM model of computation):

\begin{proposition}\label{prop:termprobs}
\label{prop-termination}
Let $\A = (Q,\delta^{=0},\delta^{>0},P^{=0},P^{>0})$ be a pOC,
and $p,q \in Q$.
\begin{itemize}
\item The problem whether $[p{\downarrow}q] >0$ is decidable in
  polynomial time.
\item If $[p{\downarrow}q] >0$, then $[p{\downarrow}q] \geq \xmin^{|Q|^3}$,
  where $\xmin$ is the least (positive) probability used in the rules
  of~$\A$.
\item The probability $[p{\downarrow}q]$ can be approximated up to an
  arbitrarily small relative error $\varepsilon > 0$
  in a time polynomial in $|\A|$ and $\log(1/\varepsilon)$.
\end{itemize}
\end{proposition}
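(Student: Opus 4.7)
The plan is to treat the three claims separately, all stemming from the standard monotone polynomial system whose least non-negative fixed point equals $([p{\downarrow}q])_{p,q\in Q}$. Conditioning on the first transition from $p(1)$, a positive rule $p \prule{x,-1} r$ contributes the constant $x$ to the equation for $(p,q)$ precisely when $r=q$, a rule $p \prule{x,0} r$ contributes $x \cdot [r{\downarrow}q]$, and a rule $p \prule{x,+1} r$ contributes the quadratic term $x \cdot \sum_{s\in Q}[r{\downarrow}s]\cdot[s{\downarrow}q]$. All reasoning below works with this system.

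For the first claim I would compute the set $R \subseteq Q \times Q$ of pairs with $[p{\downarrow}q]>0$ by saturation: initialize $R$ with the base pairs $(p,q)$ for which some decrement rule $p \prule{x,-1} q$ exists, and repeatedly add $(p,q)$ whenever there is a rule $p \prule{x,0} r$ with $(r,q) \in R$, or a rule $p \prule{x,+1} r$ together with some $s \in Q$ such that $(r,s),(s,q)\in R$. There are only $|Q|^2$ candidate pairs and each saturation step is polynomial, so the whole procedure runs in polynomial time. Monotonicity of the polynomial operator ensures that $R$ coincides with the support of the least fixed point.

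For the second claim I would unfold any $(p,q)\in R$ into an explicit counter-positive finite path from $p(1)$ to $q(0)$ in $\M_\A$. A standard pumping argument on one-counter graphs bounds the length of the shortest such witness by $|Q|^3$: in any longer witness one finds, by the pigeonhole principle, two occurrences of the same control state at the same counter level inside a subpath whose counter floor is that level, and the intervening loop can be excised without violating counter positivity or altering the endpoints; iterating eliminates all such redundancy. Since each edge of the remaining witness carries probability at least $\xmin$, its probability, and therefore $[p{\downarrow}q]$, is bounded below by $\xmin^{|Q|^3}$.

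The third claim is the main obstacle. Following \cite{EWY:one-counter}, I would apply the decomposed Newton method of \cite{KLE:Newton-STOC} to the polynomial system above, after first deleting variables that are known to vanish by the first claim. Each Newton iteration requires polynomially many rational arithmetic operations. The delicate task is to bound the number of iterations by a polynomial in $|\A|$ and $\log(1/\varepsilon)$. The leverage is the special quadratic shape of the system, in which every counter change lies in $\{-1,0,+1\}$, combined with the bound-away-from-zero of the second claim, which controls the conditioning of the Newton operator: after a warm-up phase whose length is polynomial in $|\A|$, each subsequent iteration at least doubles the number of accurate bits, so $O(\log(1/\varepsilon))$ further iterations suffice. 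Combining these ingredients yields the claimed polynomial-time relative-error approximation in the unit-cost rational arithmetic RAM model.
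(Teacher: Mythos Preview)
The paper does not actually prove this proposition: it is stated as a result ``proven in~\cite{EWY:one-counter}'' and used as a black box throughout. So there is no paper-proof to compare against beyond the citation.

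Your sketch is a faithful outline of how the cited result is obtained. Two remarks. First, for the second item your pumping argument is morally right but, as written, does not obviously yield the specific exponent~$|Q|^3$: ``same control state at the same counter level with that level as the floor of the enclosing subpath'' is the correct excisable pattern, but you still need to argue why after all excisions the remaining path has length at most~$|Q|^3$ (the counter can range up to the path length, so a naive pigeonhole does not close the loop). The clean way is to decompose the path at its successive minima into ``mountains'' and bound each mountain by~$O(|Q|^2)$, or to appeal directly to the known short-witness bounds for one-counter reachability; either route gives a polynomial exponent, and~$|Q|^3$ is a safe uniform choice. Second, for the third item you correctly identify the decomposed Newton method and the two-phase convergence (polynomial warm-up followed by bit-doubling), which is exactly the content of~\cite{EWY:one-counter}; just be aware that the ``warm-up length polynomial in~$|\A|$'' step is where all the work lies and is specific to the pOC/QBD structure---it does not follow from the lower bound of the second item alone.
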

Due to Proposition~\ref{prop-termination}, the set $\pterm$ of all pairs
$(p,q)\in Q\times Q$ satisfying $[p{\downarrow}q]>0$ is computable in
polynomial time.


\section{Expected Termination Time}
\label{sec-etime}

In this section we give an efficient algorithm which approximates
the expected termination time in pOC up to an arbitrarily small
relative (or even absolute) error $\varepsilon > 0$.

For the rest of this section, we fix a pOC
$\A = (Q,\delta^{=0},\delta^{>0},P^{=0},P^{>0})$. For all \mbox{$p,q \in Q$},
let $R_{p{\downarrow}q} : \run(p(1)) \rightarrow \Nseto$ be
a random variable defined as follows:
\begin{eqnarray*}
   R_{p{\downarrow}q}(w) & = &
                       \begin{cases}
                          k & \mbox{if } w \in \run(p{\downarrow}q)
                              \mbox{ and $k$ is the least index such
                                     that $w(k) = q(0)$;}\\
                          0 & \mbox{otherwise.}
                       \end{cases}\\[.5ex]
\end{eqnarray*}
If $(p,q)\in \pterm$, we use $E(p{\downarrow}q)$ to denote the
conditional expectation $\E[R_{p{\downarrow}q} \mid \run(p{\downarrow}q)]$.
Note that $E(p{\downarrow}q)$ can be finite even if $[p{\downarrow}q] < 1$.


The first problem we have to deal with is that the expectation
$E(p{\downarrow}q)$ can be infinite, as illustrated by the
following example.
\begin{example}
  Consider a simple pOC with only one control state~$p$ and two positive rules
  $(p,-1,p)$ and $(p,1,p)$ that are both assigned the probability $1/2$.
  Then $[p{\downarrow}p] =1$, and due to results of
  \cite{EKM:prob-PDA-expectations},
  $E(p{\downarrow}p)$ is the least solution
  (in $\Rset^{+} \cup \{\infty\}$) of the equation
  $x = 1/2 + 1/2 (1+2x)$, which is $\infty$.
\end{example}
We proceed as follows. First, we show that the problem whether
$E(p{\downarrow}q) = \infty$ is decidable in polynomial time
(Section~\ref{subsec:inftermtime}). Then, we eliminate all infinite
expectations, and show how to approximate the finite values of
the remaining $E(p{\downarrow}q)$ up to a given absolute (and hence
also relative) error $\varepsilon > 0$ efficiently
(Section~\ref{subsec:fintermtime}).
\subsection{Finiteness of the expected termination time}
\label{subsec:inftermtime}
Our aim is to prove the following:
\begin{theorem}
\label{thm-exp-infinite}
  Let $(p,q) \in T^{>0}$. The problem whether
  $E(p{\downarrow}q)$ is finite is decidable in polynomial time.
\end{theorem}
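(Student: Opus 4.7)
The plan is to reduce finiteness of $E(p{\downarrow}q)$ to a \emph{rational} drift condition on bottom strongly connected components (BSCCs) of the finite control graph of $\A$, which can then be checked in polynomial time by standard linear algebra.

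\textbf{Setup.} Using Proposition~\ref{prop-termination}, compute in polynomial time the set $T^{>0}$, and hence $R = \{r \in Q : (r,q) \in T^{>0}\}$. Build the directed graph $\G$ on $Q$ with an edge $r \to r'$ for each positive rule $r \prule{x,c} r'$; compute its BSCCs. Call a BSCC $B$ \emph{relevant} if it is reachable from $p$ in $\G$ and $B \cap R \neq \emptyset$ (equivalently, by strong connectivity, $B \subseteq R$). For each relevant $B$, let $\pi_B$ be the stationary distribution of the stochastic chain on $B$ with transition probabilities $P^{>0}$ restricted to $B$, and define the rational drift
\[
t_B \;=\; \sum_{r \in B} \pi_B(r) \sum_{(r,c,r') \in \delta^{>0}} c \cdot P^{>0}(r,c,r').
\]
All of $R$, $\G$, its BSCCs, each $\pi_B$, and each $t_B$ have polynomial size in $|\A|$ and are computable in polynomial time via graph algorithms and rational linear systems; in particular $t_B \in \Qset$.

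\textbf{Key claim and sketch.} I claim $E(p{\downarrow}q) < \infty$ iff $t_B \neq 0$ for every relevant BSCC $B$. For the forward direction, if $t_B = 0$ for some relevant $B$, then with positive probability a run from $p(1)$ eventually enters $B$ at a positive counter value and stays in $B$ forever; inside $B$ the counter is an ergodic driftless random walk, and the classical null-recurrent argument (apply optional stopping to the martingale $C_n^2 - n \sigma^2$) yields infinite expected hitting time of $0$, forcing $E(p{\downarrow}q) = \infty$. For the reverse direction, if every relevant BSCC has $t_B \neq 0$, transient SCCs contribute only bounded expected sojourn, so it suffices to handle each relevant $B$ in isolation. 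When $t_B < 0$, the linear martingale $M_n = C_n - n t_B$ combined with optional stopping at $\tau = R_{p{\downarrow}q}$ gives $E[\tau]$ bounded linearly in the entering counter value. When $t_B > 0$, one conditions on termination and invokes the divergence-gap and Azuma-type estimates promised in the abstract to obtain a finite bound on the conditional expectation.

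\textbf{Algorithm and main obstacle.} The polynomial-time decision procedure is then immediate: compute $R$, $\G$, its BSCCs, identify the relevant ones, compute $t_B \in \Qset$ for each, and accept iff all such $t_B$ are nonzero. Each step is polynomial in $|\A|$. The hardest part of the argument will be the positive-drift case ($t_B > 0$) of the reverse direction: because termination then has probability strictly below one, the martingale analysis must be performed under conditioning on the (relatively rare) event of termination, and converting an almost-sure finiteness statement into a finite bound on the \emph{conditional} expectation requires precisely the quantitative divergence-gap inequality that the paper develops later. Without such a quantitative gap, conditioning on a rare event could a priori inflate the conditional expectation, so the martingale construction foreshadowed in the abstract is essential to close this step.
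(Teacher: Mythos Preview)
Your equivalence ``$E(p{\downarrow}q)<\infty$ iff $t_B\neq 0$ for every relevant BSCC~$B$'' is false. Take two control states $p,q$ with the only positive rules $p\prule{1,-1}q$ and $q\prule{1,+1}p$. The chain~$\X$ is a single BSCC~$B=\{p,q\}$ with stationary distribution $(1/2,1/2)$ and $t_B=0$. Since $[p{\downarrow}q]=1$ (the run $p(1)\to q(0)$ is deterministic), we have $p\in R$, so $B$ is relevant; your test outputs ``infinite''. But $E(p{\downarrow}q)=1$. The same example also refutes your parenthetical claim that $B\cap R\neq\emptyset$ implies $B\subseteq R$: here $[q{\downarrow}q]=0$ because $q(1)\to p(2)\to q(1)\to\cdots$ never hits counter~$0$.

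The missing idea is that in the zero-drift case the counter need not behave like a null-recurrent walk: conditioned on $\run(p{\downarrow}q)$, the reachable-and-coreachable set $\pre^*(q(0))\cap\post^*(p(1))$ may be \emph{finite}, in which case the relevant dynamics live in a finite-state Markov chain and the conditional expectation is finite. The paper's criterion in the $t_B=0$ case is exactly this: $E(p{\downarrow}q)=\infty$ iff $\pre^*(q(0))\cap\post^*(p(1))$ is infinite, which is a polynomial-time check via the $\pre^*/\post^*$ automata of~\cite{EHRS:MC-PDA}. Your ``null-recurrent'' argument implicitly assumes the counter is unbounded along conditioned runs, which is precisely what fails in the counterexample. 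A secondary point: you misplace the difficulty in the $t_B\neq 0$ case. The paper does not need the divergence-gap theorem or any conditioning argument there; it applies Azuma's inequality to the martingale $\cs{i}+\vv_{\ps{i}}-it$ to get an exponential tail $[p{\downarrow}q,i]\le a^i$ \emph{unconditionally}, from which finiteness of $E(p{\downarrow}q)$ follows immediately. The genuinely delicate direction is $t_B=0$ with infinite $\pre^*\cap\post^*$, where optional stopping is used to show the expected hitting time is infinite.
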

%
Theorem~\ref{thm-exp-infinite} is proven by analysing the underlying
finite-state Markov chain~$\X$ of the considered pOC~$\A$.
The transition matrix $A \in [0,1]^{Q \times Q}$ of $\X$ is given by
\[
   A_{p,q} = \sum_{(p,c,q) \in \delta^{>0}} P^{>0}(p,c,q).
\]
We start by assuming that $\X$ is strongly connected (i.e. that for all $p,q\in Q$ there is a path from $p$ to $q$ in $\X$). Later we show how to generalize
our results to an arbitrary $\X$.
\bigskip

\noindent{\bf Strongly connected $\X$:}
Let $\valpha \in (0,1]^Q$ be the
\emph{invariant distribution} of~$\X$, i.e., the unique (row)
vector satisfying $\valpha A = \valpha$ and $\valpha \vone = 1$ (see, e.g.,
\cite[Theorem 5.1.2]{KS60}). Further, we define the (column) vector
$\vs \in \Rset^Q$ of \emph{expected counter changes} by
\[
  \vs_p = \sum_{(p,c,q) \in \delta^{>0}} P^{>0}(p,c,q) \cdot c
\]
and the  \emph{trend} $t \in \Rset$ of $\X$ by $t = \valpha \vs$.
Note that~$t$ is easily computable in polynomial time.
Now consider some $E(p{\downarrow}q)$, where $(p,q) \in T^{>0}$. We show the
following:
\begin{itemize}
\item[(A)] If $t \neq 0$, then $E(p{\downarrow}q)$ is finite.
\item[(B)] If $t = 0$, then $E(p{\downarrow}q) = \infty$ iff the set
  $\pre^*(q(0)) \cap \post^*(p(1))$ is infinite, where
  \begin{itemize}
  \item $\pre^*(q(0))$ consists of all $r(k)$ that can reach $q(0)$
    along a run $w$ in $\M_\A$ such that the counter stays positive
    in all configurations preceding the visit to $q(0)$;
  \item $\post^*(p(1))$ consists of all $r(k)$ that can be reached from $p(1)$
    along a run $w$ in $\M_\A$ where the counter stays positive
    in all configurations preceding the visit to  $r(k)$.
  \end{itemize}
\end{itemize}
Note that the conditions of Claims~(A) and~(B) are easy to verify in
polynomial time. (Due to \cite{EHRS:MC-PDA}, there are finite-state automata
constructible in polynomial time
recognizing the sets $\pre^*(q(0))$ and $\post^*(p(1))$. Hence,
we can efficiently compute a finite-state automaton $\calF$
recognizing  the set $\pre^*(q(0)) \cap \post^*(p(1))$ and check whether the
language accepted by $\calF$ is infinite.)
Thus, if $\X$ is strongly connected and $(p,q)\in T^{>0}$,
we can decide in polynomial time whether $E(p{\downarrow}q)$ is finite.

It remains to prove Claims~(A) and~(B).
This is achieved by employing a generic observation which
connects the study of pOC to martingale theory. Recall that
a stochastic process  $\ms{0},\ms{1},\dots$ is a martingale if,
for all $i \in \Nset$, $\E(|\ms{i}|) < \infty$, and
\mbox{$\E(\ms{i+1} \mid \ms{1},\dots,\ms{i}) = \ms{i}$} almost surely. Let us fix
some initial configuration \mbox{$r(c) \in Q \times \Nset$}.
Our aim is to construct a suitable martingale over $\run(r(c))$.
Let $\ps{i}$ and $\cs{i}$ be random variables
which to every run $w \in \run(r(c))$ assign the control state
and the counter value of the configuration $w(i)$, respectively.
Note that if the vector~$\vs$ of expected counter changes
is constant, i.e., $\vs = \vone \cdot t$ where $t$ is the trend of $\X$,
then we can define a martingale $\ms{0},\ms{1},\dots$ simply by
\[
 \ms{i} = \begin{cases}
            \cs{i} \ -\  i \cdot t &
                 \text{if $\cs{j} \ge 1$ for all $0 \leq j < i$;}\\
            \ms{i-1} & \text{otherwise.}
          \end{cases}
\]
Since~$\vs$ is generally not constant, we might try to ``compensate''
the difference among the individual control states by a suitable
vector $\vv \in \Rset^{Q}$. The next proposition shows that
this is indeed possible.
\begin{proposition}
\label{prop-martingale}
  There is a vector $\vv \in \Rset^{Q}$ such that the stochastic process
  $\ms{0},\ms{1},\dots$ defined by
  \[
    \ms{i} = \begin{cases}
            \cs{i} \ +\ \vv_{\ps{i}}  \ -\  i \cdot t &
                 \text{if $\cs{j} \ge 1$ for all $0 \leq j < i$;}\\
            \ms{i-1} & \text{otherwise}
          \end{cases}
  \]
  is a martingale, where $t$ is the trend of $\X$.

  Moreover, the vector $\vv$ satisfies
  $\vmax-\vmin\  \le\  2 |Q| / \xmin^{|Q|}$, where $\xmin$ is the smallest
  positive transition probability in~$\X$, and $\vmax$ and $\vmin$
  are the maximal and the minimal components of $\vv$, respectively.
\end{proposition}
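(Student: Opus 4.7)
The plan is to read off the linear system $\vv$ must satisfy so that the one-step conditional mean of $\ms{i+1}-\ms{i}$ vanishes, establish its solvability using the invariant distribution $\valpha$, and then bound $\vmax-\vmin$ via optional stopping together with a standard hitting-time estimate in $\X$.

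A direct computation on the event $E_i:=\{\cs{j}\geq 1 \text{ for all } 0\leq j\leq i\}$ gives $\E(\ms{i+1}-\ms{i}\mid\ps{i}=p, E_i)=\vs_p+(A\vv)_p-\vv_p-t$; on the complement $\ms{i+1}=\ms{i}$ by definition, so the condition is vacuous there. Hence $(\ms{i})_{i\geq 0}$ is a martingale if and only if
\[
 (I-A)\vv\ =\ \vs - t\vone.
\]
Integrability of $\ms{i}$ is automatic from the deterministic bound $|\ms{i}|\leq i\,(1+|t|)+\|\vv\|_\infty+c$, where $c$ is the starting counter value. Since $\X$ is strongly connected, $A$ is irreducible; thus $\ker((I-A)^T)=\Rset\valpha$, and the choice $t=\valpha\vs$ forces $\valpha(\vs-t\vone)=0$, placing $\vs-t\vone$ in the image of $I-A$. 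A solution $\vv$ therefore exists, unique up to an additive multiple of $\vone$.

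For the quantitative bound I exploit this freedom: pick $q^*$ attaining $\vmin$ and shift $\vv$ so that $\vv_{q^*}=0$; it then suffices to show $\vmax\leq 2|Q|/\xmin^{|Q|}$. Consider the auxiliary process $N_n=\vv_{\ps{n}}+\sum_{i<n}(\vs_{\ps{i}}-t)$ on runs of the \emph{finite} chain $\X$ started at $p$. Using $(I-A)\vv=\vs-t\vone$ one checks $N_n$ is a martingale with increments bounded in absolute value by $(\vmax-\vmin)+2$. Let $\tau$ be the first hitting time of $q^*$; since $\E_p[\tau]<\infty$ and $N_n$ has bounded increments, optional stopping gives $\vv_p=\E_p[\sum_{i<\tau}(\vs_{\ps{i}}-t)]$, whence $\vv_p\leq 2\,\E_p[\tau]$ because $|\vs_q|,|t|\leq 1$. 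Finally, irreducibility guarantees a path of length $\leq |Q|$ from every state to $q^*$ with probability at least $\xmin^{|Q|}$, so $\Prob[\tau>k|Q|]\leq(1-\xmin^{|Q|})^k$ and $\E_p[\tau]\leq |Q|/\xmin^{|Q|}$; the stated bound follows.

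The main subtlety I expect is justifying the optional-stopping step cleanly: the stopped martingale converges only almost surely, and one needs the standard version that requires bounded one-step increments together with $\E_p[\tau]<\infty$ (equivalently, dominate $N_{n\wedge\tau}$ by $\|\vv\|_\infty+2\tau\in L^1$ and apply dominated convergence). Everything else reduces to linear algebra for irreducible stochastic matrices and the elementary geometric tail estimate on $\tau$.
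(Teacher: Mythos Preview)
Your argument is correct. Both you and the paper reduce the martingale property to the Poisson-type equation $(I-A)\vv=\vs-t\vone$ and then verify the one-step conditional expectation directly; the paper phrases existence via the fundamental matrix $Z=(I-A+\vone\valpha)^{-1}$ and sets $\vv=Z\vs$, whereas you invoke the Fredholm alternative using $\valpha(\vs-t\vone)=0$. These are equivalent observations.

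The genuinely different part is the bound on $\vmax-\vmin$. The paper shifts $\vv$ so that $\vmax=2|Q|/\xmin^{|Q|}$, fixes a state $q$ attaining $\vmax$, and runs an induction on the graph distance from~$q$: using $\vv_p=(A\vv)_p+\vs_p-t$ it propagates a lower bound on $\vv_p$ along edges of~$\X$, concluding $\vmin\geq 0$. Your route is probabilistic: you shift so that $\vmin=0$, recognise $\vv_p$ as the expected accumulated deviation $\E_p\bigl[\sum_{i<\tau}(\vs_{\ps{i}}-t)\bigr]$ until the first visit to the minimising state~$q^*$ (via optional stopping applied to the auxiliary martingale~$N_n$ on the finite chain~$\X$), and then bound $\E_p[\tau]\le |Q|/\xmin^{|Q|}$ by the standard geometric tail estimate. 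Your approach gives a clean probabilistic interpretation of the potential and makes the origin of the constant $2|Q|/\xmin^{|Q|}$ transparent; the paper's inductive argument, on the other hand, stays purely combinatorial and needs no optional-stopping machinery. Either way the same bound drops out.

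Your caveat about optional stopping is the right one, and you handle it correctly: the increments of~$N_n$ are bounded (by $(\vmax-\vmin)+2$, a finite constant since $\vv$ is a fixed vector in~$\Rset^Q$) and $\E_p[\tau]<\infty$, so the standard bounded-increment version applies; the domination $|N_{n\wedge\tau}|\le\|\vv\|_\infty+2\tau$ you mention is an equally valid justification. There is no circularity, since the a~priori bound on the increments is only used to license optional stopping, not in the final estimate.
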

Due to Proposition~\ref{prop-martingale}, powerful results of
martingale theory such as optional stopping theorem or Azuma's inequality
(see, e.g., \cite{Rosenthal:book,Williams:book}) become applicable to pOC.
In this paper, we use the constructed martingale to complete the
proof of Claims~(A) and~(B), and to establish the crucial
\emph{divergence gap theorem} in Section~\ref{sec-LTL} (due to
space constraints, we only include brief sketches of
Propositions~\ref{lem:expected-term-bound-prob}
and~\ref{prop-term-inf} which demonstrate the use of Azuma's inequality
and optional stopping theorem).
The range of possible applications of Proposition~\ref{prop-martingale}
is of course wider.

\smallskip

\noindent
\textbf{A proof of Claim A.}
For every $i \in \Nset$, let $\run(p{\downarrow}q,i)$ be the set of all
$w \in \run(p{\downarrow}q)$ that visit $q(0)$ in \emph{exactly}~$i$
transitions, and let $[p{\downarrow}q,i]$ be the probability of
$\run(p{\downarrow}q,i)$.  Claim~(A) is proven by demonstrating that
if $t \neq 0$, then the probabilities $[p{\downarrow}q,i]$ decay
exponentially in~$i$. Hence, $E(p{\downarrow}q) =
\sum_{i=1}^{\infty} i \cdot [p{\downarrow}q,i]/[p{\downarrow}q]$ is finite.
%
\begin{proposition} \label{lem:expected-term-bound-prob}
There are $0<a<1$ and $h\in \Nset$ such that for all $i\geq h$ we have that $[p{\downarrow}q,i]\leq a^i$.
\end{proposition}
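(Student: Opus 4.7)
The plan is to use the martingale $m^{(0)}, m^{(1)}, \ldots$ from Proposition~\ref{prop-martingale} together with the Azuma--Hoeffding inequality, which is a natural fit because the bound $\vmax - \vmin \leq 2|Q|/\xmin^{|Q|}$ gives us a uniform bound on the martingale differences.

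First I would observe that the increments of the martingale are bounded. On steps before the counter first hits zero we have
\[
  |\ms{i+1} - \ms{i}| \ \leq \ |\cs{i+1}-\cs{i}| + |\vv_{\ps{i+1}} - \vv_{\ps{i}}| + |t| \ \leq \ 2 + (\vmax - \vmin),
\]
and on later steps the increment is zero by definition. So $d := 2 + (\vmax - \vmin)$ is a uniform bound, and Azuma--Hoeffding gives
\[
  \calP\bigl(\,|\ms{i} - \ms{0}| \geq \lambda\,\bigr) \ \leq \ 2 \exp\!\bigl(-\lambda^2 / (2 i d^2)\bigr)
\]
for every $\lambda > 0$ and every $i \in \Nset$.

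Next I would pin down the values of $\ms{0}$ and $\ms{i}$ on runs in $\run(p{\downarrow}q,i)$. For any such run, $\ps{0} = p$ and $\cs{0} = 1$, while $\cs{j} \geq 1$ for all $0 \leq j < i$ and $\ps{i} = q$, $\cs{i} = 0$. Reading off the definition of the martingale, this forces
\[
  \ms{0} = 1 + \vv_p, \qquad \ms{i} = \vv_q - i \cdot t,
\]
so deterministically $|\ms{i} - \ms{0}| = |\vv_q - \vv_p - 1 - i t|$ on $\run(p{\downarrow}q,i)$. Since $t \neq 0$, there exists $h \in \Nset$ (depending only on $|t|$ and $\vmax - \vmin$) such that for every $i \geq h$ this quantity is at least $i|t|/2$. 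Consequently
\[
  [p{\downarrow}q,i] \ \leq \ \calP\bigl(\,|\ms{i} - \ms{0}| \geq i|t|/2\,\bigr) \ \leq \ 2 \exp\!\bigl(-\, i\, t^2 / (8 d^2)\bigr).
\]

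Finally, choosing $a \in (0,1)$ slightly larger than $\exp(-t^2/(8 d^2))$ and enlarging $h$ if necessary, the factor $2$ is absorbed and we obtain $[p{\downarrow}q,i] \leq a^i$ for all $i \geq h$, which is the claim. The one subtlety worth flagging is the need to ensure that the event in Azuma's inequality actually contains $\run(p{\downarrow}q,i)$ as a subset; this is precisely where the deterministic values of $\ms{0}$ and $\ms{i}$ on that event become essential, and is the only part of the argument that requires care beyond a routine application of the concentration inequality.
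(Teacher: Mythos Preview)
Your proposal is correct and follows essentially the same approach as the paper: apply Azuma's inequality to the martingale of Proposition~\ref{prop-martingale}, using that on $\run(p{\downarrow}q,i)$ the values $\ms{0}$ and $\ms{i}$ are deterministic and differ by roughly $i|t|$. The only cosmetic difference is that the paper treats the cases $t>0$ and $t<0$ separately via one-sided Azuma (yielding $a^i$ on the nose with $a=\exp(-t^2/8(\va+t+1)^2)$), whereas you use the two-sided bound and absorb the extra factor~$2$ by slightly enlarging $a$ and~$h$.
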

 \begin{proof}[Sketch]
  Consider the martingale $\ms{0},\ms{1},\ldots$ over $\run(p(1))$ as defined in Proposition~\ref{prop-martingale}.
  A relatively straightforward computation reveals that for sufficiently large $h\in \Nset$ and all $i\geq h$ we have the following:
  If $t<0$, then $[p{\downarrow}q,i]\le \calP\left(\ms{i} - \ms{0} \ge (i/2) \cdot (-t)\right)$, and
  if $t>0$, then $[p{\downarrow}q,i]\le \calP\left(\ms{0} - \ms{i} \ge (i/2) \cdot t \right)$.
  In each step, the martingale value changes by at most $\vmax - \vmin + t + 1$, where $\vv$ is from Proposition~\ref{prop-martingale}.
  Hence Azuma's inequality (see~\cite{Williams:book}) asserts for $t \ne 0$ and $i \ge h$:
  \begin{align*}
   [p{\downarrow}q,i]
   & \quad \le \quad \exp  \left(- \frac{(i/2)^2 t^2}{2 i (\vmax - \vmin + t + 1)^2} \right) && \text{(Azuma's inequality)} \\
   & \quad =\quad a^i \,.
  \end{align*}
   Here $a=\exp\left(- t^2\,/\,\, 8 (\vmax - \vmin + t + 1)^2 \right)$.
 \qed
 \end{proof}
It follows directly from Proposition~\ref{lem:expected-term-bound-prob} that
\[
E(p{\downarrow}q) \quad = \quad
\sum_{i=1}^{\infty} i \cdot
   \frac{[p{\downarrow}q,i]}{[p{\downarrow}q]} \quad \leq \quad
\frac{1}{[p{\downarrow}q]}\left( \sum_{i=1}^{h-1} i \cdot [p{\downarrow}q,i]+\sum_{i=h}^{\infty} i\cdot a^i\right) \quad < \quad \infty
\]


\smallskip

\noindent
\textbf{A proof of Claim B.}
%
We start with the ``$\Rightarrow$'' direction of Claim~(B), which
is easy to prove by contradiction. Intuitively, if
$\pre^*(q(0)) \cap \post^*(p(1))$
is finite, then we can transform the states of
$\pre^*(q(0)) \cap \post^*(p(1))$ into a finite-state Markov chain
and show that $E(p{\downarrow}q)$ is finite.
\begin{proposition}
  If $\pre^*(q(0)) \cap \post^*(p(1))$ is finite, then
  $E(p{\downarrow}q)$ is also finite.
\end{proposition}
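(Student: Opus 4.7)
The plan is to reduce the computation of $E(p{\downarrow}q)$ to a routine calculation inside a \emph{finite} Markov chain. Let $S := \pre^*(q(0)) \cap \post^*(p(1))$, which is finite by hypothesis, and note that $q(0) \in S$ since $(p,q) \in \pterm$. The first step is the structural observation that every run $w \in \run(p{\downarrow}q)$ only visits configurations of $S$ strictly before its first visit to $q(0)$: any such intermediate configuration is reached from $p(1)$ along a counter-positive prefix (hence lies in $\post^*(p(1))$) and can reach $q(0)$ along the suffix of $w$ with positive counter (hence lies in $\pre^*(q(0))$).

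Next I would introduce a finite-state Markov chain $\M'$ on the state space $S \cup \{\bot\}$, where $\bot$ is a fresh absorbing state. Transitions between states of $S$ are inherited verbatim from $\M_\A$; the configuration $q(0)$ is made absorbing; and for every other state of $S$ the residual probability mass (that is, mass going in $\M_\A$ to a configuration outside $S$) is redirected to $\bot$. Using the Markov property together with the observation above, the restrictions to $\run(p{\downarrow}q)$ of the path-space measures of $\M_\A$ started at $p(1)$ and of $\M'$ started at $p(1)$ coincide. Consequently $[p{\downarrow}q]$ equals the probability of absorption at $q(0)$ in $\M'$, and the conditional distribution of $R_{p{\downarrow}q}$ given $\run(p{\downarrow}q)$ agrees with the conditional distribution of the hitting time $\tau_{q(0)}$ in $\M'$ given $\tau_{q(0)} < \infty$. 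Hence
\[
  E(p{\downarrow}q) \;=\; \E_{\M'}\!\left[\tau_{q(0)} \,\big|\, \tau_{q(0)} < \infty\right].
\]

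Finally I would observe that this conditional expectation is finite. Using the same concatenation argument as in the first step, every state $r(k) \in S \setminus \{q(0)\}$ admits a counter-positive path in $\M_\A$ to $q(0)$ that stays entirely inside $S$, so in $\M'$ every non-absorbing state has positive probability of reaching the absorbing set $\{q(0),\bot\}$; thus all non-absorbing states are transient, and the standard fundamental-matrix argument gives that the unconditional expected absorption time from $p(1)$ in $\M'$ is finite. Dividing by the positive probability $[p{\downarrow}q] > 0$ of the conditioning event then yields a finite conditional expectation. The only non-routine step is the coupling of $\M_\A$ and $\M'$ in the second paragraph: one must verify that collapsing all ``escape'' transitions to the single sink $\bot$ does not disturb the distribution on $\run(p{\downarrow}q)$, which is precisely where the structural observation that successful trajectories genuinely stay inside $S$ until $q(0)$ is reached becomes essential.
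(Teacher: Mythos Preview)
Your argument is correct and follows essentially the same route as the paper. The paper (Lemma~\ref{lem:etime-case-E} in the appendix) builds the identical finite chain on $\pre^*(q(0)) \cap \post^*(p(1)) \cup \{o\}$ with escapes redirected to the sink~$o$, observes that $\calP(T=k) = \calP(R_{p{\downarrow}q} = k)$ for the hitting time~$T$ of~$q(0)$ in that chain, and then bounds $[p{\downarrow}q]\cdot E(p{\downarrow}q) = \sum_k \calP(T\ge k)$; the only difference is that the paper invokes an explicit tail bound (Lemma~\ref{lem:hitting-time-finite-chain}) to get a quantitative estimate, whereas you appeal to the fundamental-matrix argument for mere finiteness, which suffices for the proposition as stated.
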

The other direction of Claim~(B) is more complicated. Let us first
introduce some notation. For every $k \in \Nseto$, let
$Q(k)$ be the set of all configurations where the counter
value equals~$k$. Let $p,q \in Q$ and $\ell,k \in \Nseto$, where
$\ell > k$. An \emph{honest path from $p(\ell)$ to $q(k)$}
is a finite path $w$ from $p(\ell)$ to $q(k)$ such that
the counter stays above~$k$ in all configurations of $w$ except for
the last one. We use $\mathit{hpath}(p(\ell),Q(k))$ to denote the set
of all honest paths from $p(\ell)$ to some $q(k)\in Q(k)$.
For a given $P \subseteq \mathit{hpath}(p(\ell),Q(k))$, the
\emph{expected lenght of an honest path in $P$} is defined as
$\sum_{w \in P} \calP(\run(w))\cdot \len(w)$.
Using the above constructed  martingale, we show the following:
\begin{proposition}
\label{prop-term-inf}
  If $\pre^*(q(0))$ is infinite, then almost all runs initiated
  in an arbitrary configuration reach $Q(0)$. Moreover, there is
  $k_1\in \Nset$ such that, for all $\ell\geq k_1$, the expected
  length of an honest path from $r(\ell)$ to $Q(0)$ is infinite.
%
\end{proposition}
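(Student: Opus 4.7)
The plan is to exploit the martingale $\ms{0}, \ms{1}, \ldots$ from Proposition~\ref{prop-martingale}, which for the current $t=0$ setting reduces to $\ms{i} = \cs{i} + \vv_{\ps{i}}$ as long as the counter has stayed positive. Both claims will follow by applying variants of Doob's optional stopping theorem to suitable hitting times.

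For the first part, I first note that $\pre^*(q(0))$ being infinite, combined with $t = 0$, forces $\delta^{>0}$ to contain both a $-1$-rule and a $+1$-rule. Indeed, if all counter changes were non-negative then no $r(k)$ with $k \ge 1$ could ever reach $q(0)$, making $\pre^*(q(0))$ trivially finite; and if all counter changes were non-positive, then $t = \valpha \vs = 0$ together with $\valpha > 0$ would force $\vs = \vzero$, i.e., only $0$-rules exist, again contradicting infiniteness of $\pre^*(q(0))$. Fix a starting configuration $r(c)$ with $c \ge 1$ and let $T = \inf\{n : \cs{n} = 0\}$, $\tau_H = \inf\{n : \cs{n} = H\}$ for $H > c$. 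Strong connectivity of $\X$ combined with the existence of a $-1$-rule yields a uniform lower bound of the form $\xmin^{O(H)} > 0$ on the probability that, from any slab configuration in $\{1,\ldots,H-1\} \times Q$, the counter reaches~$0$ within $O(H)$ steps; iterating, $T \wedge \tau_H$ is almost surely finite. The stopped martingale $\ms{n \wedge T \wedge \tau_H}$ takes values in the bounded interval $[\vmin, H + \vmax]$, so bounded convergence applied to the bounded stopping times $N \wedge T \wedge \tau_H$ (letting $N \to \infty$) yields
\[
  c + \vv_r \ = \ \ms{0} \ = \ \E[\ms{T \wedge \tau_H}] \ \ge \ \vmin + H \cdot \Prob[\tau_H < T].
\]
Hence $\Prob[\tau_H < T] \le (c + \vv_r - \vmin)/H \to 0$ as $H \to \infty$. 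Since $\{T = \infty, \sup_n \cs{n} = \infty\} \subseteq \bigcap_H \{\tau_H < T\}$ and the remaining event $\{T = \infty, \sup_n \cs{n} < \infty\}$ is ruled out by the same slab-escape lower bound (the chain cannot remain in a finite slab forever), we conclude $\Prob[T = \infty] = 0$.

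For the second part, suppose for contradiction that the expected length of an honest path from $r(\ell)$ to $Q(0)$ is finite; since $T < \infty$ almost surely by part~1, this is precisely the statement $\E[T] < \infty$. The increments of $\ms{i}$ are bounded by $\vmax - \vmin + 1$, so Doob's optional stopping theorem in its $\E[T] < \infty$ variant applies to give $\E[\ms{T}] = \ms{0}$. Evaluating both sides, $\ms{0} = \ell + \vv_r$ while $\ms{T} = \cs{T} + \vv_{\ps{T}} = \vv_{\ps{T}}$, so
\[
  \ell + \vv_r \ = \ \E[\vv_{\ps{T}}] \ \le \ \vmax,
\]
which forces $\ell \le \vmax - \vv_r \le \vmax - \vmin$. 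Taking $k_1 := \lfloor \vmax - \vmin \rfloor + 1$ then yields the required threshold, which is a finite integer bounded by $\lfloor 2|Q|/\xmin^{|Q|} \rfloor + 1$ via Proposition~\ref{prop-martingale}.

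The main obstacle is the careful justification of optional stopping in part~1, where the stopping time $T \wedge \tau_H$ is not bounded. The natural workaround is to truncate at time $N$, apply the elementary bounded-stopping-time theorem, and pass to the limit $N \to \infty$ via bounded convergence; this requires in turn that $T \wedge \tau_H$ be almost surely finite, which is precisely what the slab-escape lemma delivers. Part~2, by contrast, is essentially a single-line application of optional stopping once part~1 is in place, because the hypothesis directly provides the $\E[T] < \infty$ assumption needed by the standard bounded-increments form of the theorem.
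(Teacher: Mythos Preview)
Your argument is correct, and for both parts it takes a route different from the paper's.

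For the first claim the paper simply invokes the strong law of large numbers together with a lemma from an external reference to conclude that the counter almost surely hits~$0$ when $t=0$ and $Q(0)$ is reachable. Your argument is self-contained: you bound $\Prob[\tau_H<T]$ via optional stopping on the bounded martingale stopped at the slab boundary, let $H\to\infty$, and handle the residual bounded-slab event by a positive-probability escape estimate. This buys independence from the cited paper at the price of spelling out the slab-escape lemma (which you only sketch, but which is routine once you have both a $+1$- and a $-1$-rule).

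For the second claim the paper applies optional stopping at a \emph{two-sided} barrier $\{\ms{\tau}\le\vmax\}\cup\{\ms{\tau}\ge\vmax+k\}$ to derive the tail lower bound $\calP(T\ge k)\ge(\ell+\vv_r-\vmax)/(k+M)$, and then sums the resulting harmonic series to get $\E T=\infty$. Your contrapositive is shorter: assuming $\E T<\infty$, the bounded-increments version of optional stopping applied directly at~$T$ forces $\ell+\vv_r=\E[\vv_{\ps{T}}]\le\vmax$, which is impossible once $\ell>\vmax-\vmin$. Both arguments yield essentially the same threshold $k_1\approx\vmax-\vmin+1$; the paper's approach additionally produces the quantitative estimate $\calP(T\ge k)\gtrsim 1/k$, which is not needed here but could be useful downstream, while yours is the cleaner path to the stated proposition.
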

\begin{proof}[Sketch]
Assume that $\pre^*(q(0))$ is infinite. The fact that almost all runs initiated
  in an arbitrary configuration reach $Q(0)$ follows from results of~\cite{BBEKW:OC-MDP}.

Consider an initial configuration $r(\ell)$ with $\ell + \vv_r > \vmax$.
We will show that the expected length of an honest path from~$r(\ell)$ to~$Q(0)$ is infinite; i.e., we can take $k_1 := \lceil \vmax - \vmin + 1\rceil$.
Consider the martingale $\ms{0},\ms{1},\dots$ defined in Proposition~\ref{prop-martingale} over $\run(r(\ell))$.
Note that as $t=0$, the term $i\cdot t$ vanishes from the definition of the martingale.

Now let us fix $k\in \Nset$ such that $\ell + \vv_r < \vmax + k$ and
define a {\em stopping time}~$\tau$ (see e.g.~\cite{Williams:book})
which returns the first point in time in which either $m^{(\tau)}\geq \vmax + k$, or $m^{(\tau)}\leq \vmax$.
A routine application of optional stopping theorem gives us the following
\begin{equation}
\calP(\ms\tau \geq \vmax + k)\quad\geq\quad \frac{\ell + \vv_r - \vmax}{k+M} \,. \label{eq:mart-opt-stop-upper-sketch}
\end{equation}
Denote by~$T$ the number of steps to hit~$Q(0)$.
Note that $\ms\tau \ge \vmax + k$ implies
 $
  \cs\tau = \ms\tau - \vv_{\ps\tau} \ge \vmax + k - \vv_{\ps\tau} \ge k,
 $
 and thus also $T \ge k$, as at least $k$ steps are required to decrease the counter value from~$k$ to~$0$.
It follows that
$\calP(\ms\tau \ge \vmax + k)\le \calP(T \ge k)$.
By putting this inequality together with the inequality~\eqref{eq:mart-opt-stop-upper-sketch} we obtain
\[
 \E T\ =\ \sum_{k \in \Nset} \calP(T \ge k)\ \ge\ \sum_{k=\ell + 1}^\infty \calP(T \ge k)
     \ \ge\ \sum_{k=\ell + 1}^\infty \frac{\ell + \vv_r - \vmax}{k+M}\ =\ \infty \,.
\]
\qed
\end{proof}

\noindent
Further, we need the following observation about the structure of
$\M_\A$, which holds also for non-probabilistic one-counter automata:
\begin{proposition}
\label{prop-pre-closed}
  There is $k_2 \in \Nset$ such that for every configuration
  $r(\ell) \in \pre^*(q(0))$, where $\ell \geq k_2$, we have that
  if $r(\ell) \tran{} r'(\ell')$, then $r'(\ell') \in \pre^*(q(0))$.
\end{proposition}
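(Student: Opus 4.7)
The proposition is a purely structural (non-probabilistic) assertion about the configuration graph of $\A$, so I will argue using only reachability in $\M_\A$, together with the standing assumption of this subsection that $\X$ is strongly connected. My strategy is to establish a sharper dichotomy, from which the statement follows immediately: either $\pre^*(q(0))$ is finite, or there exist $N \in \Nset$, a period $d \ge 1$, and residues $(c_r)_{r \in Q}$ with $c_r \in \{0,\dots,d{-}1\}$ that are \emph{rule-consistent}, meaning $c_{r'} \equiv c_r + c \pmod{d}$ for every rule $r \to r'$ of counter change $c$, such that for every $r \in Q$ and every $\ell \ge N$,
\[
  r(\ell) \in \pre^*(q(0)) \iff \ell \equiv c_r \pmod{d}.
\]
Taking $k_2 := N + 1$ then works: in the finite case the hypothesis is vacuous for $\ell \ge k_2$; in the periodic case, any successor $r'(\ell')$ of $r(\ell)$ satisfies $\ell' \ge N$ and $\ell' \equiv c_r + c \equiv c_{r'} \pmod{d}$, and therefore $r'(\ell') \in \pre^*(q(0))$.

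\textbf{Proving the dichotomy.} The $\pre^*$ of a regular configuration set of a one-counter automaton is again regular \cite{EHRS:MC-PDA}, so each fiber $S_r := \{\ell : r(\ell) \in \pre^*(q(0))\}$ is eventually periodic in $\Nseto$. If every $S_r$ is finite, $\pre^*(q(0))$ is finite and we are in the first case. Otherwise some $S_{r_0}$ is infinite, which via a pumping argument (Dickson's lemma applied to the witnessing paths from $r_0(\ell)$ to $q(0)$) produces a cycle in $\X$ of strictly negative counter change. Strong connectedness of $\X$ then spreads infiniteness from $S_{r_0}$ to every $S_r$ by prepending an $\X$-path from $r$ to $r_0$ to the witnesses, the counter staying positive for a large enough initial value. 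The common ultimate period is exactly $d := \gcd$ of the counter changes of all cycles of $\X$, and the residues are defined by letting $c_r$ equal, modulo $d$, the counter change of any $\X$-path from $r_0$ to $r$; this is well-defined since any two such paths differ by a concatenation of cycles, whose total change lies in $d\Zset$.

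\textbf{Main obstacle.} The central subtlety is to verify rule-consistency of $(c_r)_{r \in Q}$ with respect to \emph{every} rule of $\A$ and to establish the \emph{sufficiency} of the residue condition for $\ell \in S_r$ at large $\ell$. Consistency: for every rule $r \to r'$ of change $c$, composing it with a returning $\X$-path from $r'$ to $r$ (which exists by strong connectedness) forms a cycle of $\X$ whose counter change lies in $d\Zset$, and unfolding the definition of the $c_r$'s yields $c_{r'} \equiv c_r + c \pmod{d}$. Sufficiency: given large $\ell$ with $\ell \equiv c_r \pmod{d}$, we travel from $r$ to the negative cycle along a fixed $\X$-path, iterate the cycle (combined with other cycles as needed, since their counter changes generate exactly $d\Zset$) to lower the counter to the precise value required by a fixed tail path from $r_0$ to $q(0)$, and then run that tail; picking $N$ larger than the combined counter ``sag'' of all these fixed paths keeps the counter positive throughout, yielding $r(\ell) \in \pre^*(q(0))$.
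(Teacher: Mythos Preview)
Your approach is correct but takes a substantially different route from the paper's. You prove a stronger structural result—that in the infinite case each fibre $S_r = \{\ell : r(\ell) \in \pre^*(q(0))\}$ is eventually a \emph{single} arithmetic progression with common difference $d$—and then read off the proposition from rule-consistency of the residues. The paper argues far more directly: it uses only the eventual periodicity of the sequence $\pre(i) := \{r : r(i) \in \pre^*(q(0))\}$ (with some period~$m$, never identified with your~$d$), defines the \emph{index} of $r(\ell)$ as $|\{i \in [0,m) : r(\ell+i) \in \pre^*(q(0))\}|$, observes that in the periodic region this depends only on~$r$, and then shows that a transition $r(\ell) \to r'(\ell')$ leaving $\pre^*(q(0))$ would strictly \emph{decrease} the index, while a short $\X$-path from $r'$ back to $r$ (strong connectedness) would make it weakly \emph{increase}—contradiction. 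This monotonicity trick completely avoids your ``sufficiency'' direction, which in your sketch hides a non-trivial numerical-semigroup step: cycle weights generate $d\Zset$ as a \emph{group}, but you need non-negative combinations of a fixed finite family of closed walks to realise every sufficiently negative multiple of~$d$, with uniformly bounded sag. Your route yields a sharper description of $\pre^*(q(0))$; the paper's yields a much shorter proof and an explicit polynomial bound $k_2 = m_1 + |Q| + 1$.

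Two minor points on your write-up. First, the residue attached to~$r$ should be $c_r - c_q \pmod d$ rather than $c_r$: any witness path $r(\ell)\to q(0)$ has $\X$-weight $-\ell \equiv c_q - c_r \pmod d$. This does not affect rule-consistency, so the conclusion stands. Second, your sag bound must be argued for the \emph{variable} closed walk, not just the fixed connecting paths: since the walk depends on~$\ell$, you need to assemble it from a fixed finite set of building-block cycles arranged in a controlled order (e.g.\ positive-weight blocks before negative-weight ones) so that the minimum counter value is attained near the tail and is bounded independently of~$\ell$.
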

To show that  $E(p{\downarrow}q) = \infty$, it suffices
to identify a subset \mbox{$W \subseteq R(p{\downarrow}q)$} such that
\mbox{$\calP(W) > 0$} and $\E[R_{p{\downarrow}q} \mid W] = \infty$.
Now observe that if $\pre^*(q(0)) \cap \post^*(p(1))$ is infinite,
there is a configuration $r(\ell) \in \pre^*(q(0))$ reachable
from $p(1)$ along a finite path $u$ such that $\ell \geq k_1+k_2$, where $k_1$ and $k_2$ are the constants of
Propositions~\ref{prop-term-inf}
and~\ref{prop-pre-closed}.

Due to Proposition~\ref{prop-term-inf}, the expected length of an honest
path from $r(\ell-k_2)$ to $Q(0)$ is infinite. Howeover,
then also the expected length of an honest path from
$r(\ell)$ to $Q(k_2)$ is infinite. This means that
there is a state $s \in Q$ such that the expected length of an honest
path from $r(\ell)$ to $s(k_2)$ in infinite.
Further, it follows directly
from Proposition~\ref{prop-pre-closed} that $s(k_2) \in \pre^*(q(0))$
because there is an honest path from $r(\ell)$ to $s(k_2)$.

Now consider the set $W$ of all runs $w$ initiated in $p(1)$
that start with the finite path~$u$, then follow an honest path from
$r(\ell)$ to $s(k_2)$, and then follow
an honest path from $s(k_2)$ to $q(0)$. Obviously,
$\calP(W) > 0$, and $\E[R_{p{\downarrow}q} \mid W] = \infty$ because
the expected length of the middle subpath is infinite. Hence,
$E(p{\downarrow}q) = \infty$ as needed.


\bigskip

\noindent {\bf Non-strongly connected $\X$:} The general case still requires
some extra care. First, realize that each BSCC $\B$ of $\X$ can be seen
as a strongly connected finite-state Markov chain, and hence all
notions and arguments of the previous subsection can be applied to~$\B$
immediately (in particular, we can compute the trend of $\B$ in polynomial
time). We prove the following claims:
\begin{itemize}
\item[(C)] If $q$ does not belong to a BSCC of $\X$, then
  $E(p{\downarrow}q)$ is finite.
\item[(D)] If $q$ belongs to a BSCC $\B$ of $\X$ such that the trend
  of $\B$ is different from~$0$, then $E(p{\downarrow}q)$ is finite.
\item[(E)] If $q$ belongs to a BSCC $\B$ of $\X$ such that the trend
  of $\B$ is~$0$, then $E(p{\downarrow}q) = \infty$ iff
  the set $\pre^*(q(0)) \cap \post^*(p(1))$ is infinite.
\end{itemize}
Note that the conditions of Claims~(C)-(E) are verifiable in polynomial
time.

Intuitively, Claim~(C) is proven by observing that if
$q$ does not belong to a BSCC of $\X$, then for all
$s(\ell) \in \post^*(p(1))$, where $\ell \geq |Q|$, we have that
$s(\ell)$ can reach a configuration outside $\pre^*(q(0))$
in at most $|Q|$ transitions. It follows that the probability of
performing an honest path from $p(1)$ to $q(0)$ of length~$i$
decays exponentially in~$i$, and hence $\E(p{\downarrow}q)$ is finite.

Claim~(D) is obtained by combining the arguments of Claim~(A) together
with the fact that the conditional expected number of transitions needed
to reach $\B$ from $p(0)$, under the condition that $\B$ is indeed reached
from $p(0)$, is finite (this is a standard result for finite-state
Markov chains).

Finally, Claim~(E) follows by re-using the arguments of Claim~(B).

%
%
%
\subsection{Efficient approximation of finite expected termination time}\label{subsec:fintermtime}
Let us denote by $\pfterm$ the set of all pairs $(p,q)\in \pterm$
satisfying $E(p{\downarrow}q)<\infty$. Our aim is to prove the following:
\begin{theorem}
\label{thm-regular}
  For all $(p,q) \in \pfterm$, the value of $E(p{\downarrow}q)$ can
  be approximated up to an arbitrarily small absolute error $\varepsilon > 0$
  in time polynomial in $|\A|$ and $\log(1/\varepsilon)$.
\end{theorem}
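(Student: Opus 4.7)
\textbf{Proof proposal for Theorem~\ref{thm-regular}.}

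My plan is to reduce the approximation of $E(p{\downarrow}q)$ to the solution of a system of linear equations in which the coefficients are polynomials of bounded degree in the termination probabilities $f_{p,q} := [p{\downarrow}q]$. By Proposition~\ref{prop-termination} the $f_{p,q}$ are computable to any relative precision in polynomial time, so once the linear system is well-conditioned we can solve it in polynomial time via standard linear algebra. Concretely, setting $g_{p,q'} := \E[R_{p{\downarrow}q'}] = E(p{\downarrow}q')\cdot f_{p,q'}$ and conditioning on the first transition from $p(1)$, I would derive the equations
\[
g_{p,q'} \;=\; f_{p,q'} \;+\; \sum_{p \prule{x,0} q} x\, g_{q,q'} \;+\; \sum_{p \prule{x,1} q} x \sum_{r \in Q} \bigl(g_{q,r}\, f_{r,q'} + f_{q,r}\, g_{r,q'}\bigr),
\]
which in matrix form read $\mathbf{g}=\mathbf{f}+M(f)\,\mathbf{g}$ with $M(f)$ of entry-degree at most one in the $f$'s. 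Note that $E(p{\downarrow}q)\ge 1$ whenever $(p,q)\in\pfterm$, so any absolute error bound on $g_{p,q}$ and $f_{p,q}$ translates to a relative error bound on their quotient; hence it suffices to work with absolute errors.

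The next step is to isolate the genuinely finite part of the system. Pairs with $f_{p,q}=0$ contribute $g_{p,q}=0$ and are detected in polynomial time using Proposition~\ref{prop-termination}. Pairs in $\pterm\setminus\pfterm$ are characterized by the BSCC-based criteria (C)--(E) established in Section~\ref{subsec:inftermtime} and are also detected in polynomial time; for these, $g_{p,q}=\infty$. I would then argue that for $(p,q')\in\pfterm$ every $g_{r,s}$ appearing with \emph{strictly positive} coefficient on the right-hand side must itself be finite, since otherwise $g_{p,q'}$ would be forced to be infinite, contradicting $(p,q')\in\pfterm$. Thus the restriction of the system to $\pfterm$ is an autonomous linear system $\mathbf{g}_{\pfterm}=\mathbf{f}_{\pfterm}+M_{\pfterm}(f)\,\mathbf{g}_{\pfterm}$ with a unique finite solution.

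The core technical step is a polynomial lower bound on $1/\|(I-M_{\pfterm}(f))^{-1}\|$ (equivalently an upper bound polynomial in $|\A|$ under the unit-cost arithmetic RAM model). I would obtain this by combining the martingale of Proposition~\ref{prop-martingale} with the tail estimate of Proposition~\ref{lem:expected-term-bound-prob} on BSCCs of nonzero trend and the standard geometric escape bound from transient parts of $\X$: both yield a uniform exponential decay constant $a<1$ with $1-a\ge 1/\mathrm{poly}(|\A|)$ (in the unit-cost sense, given that $\vmax-\vmin\le 2|Q|/x_{\min}^{|Q|}$ and the trend $t$ of each relevant BSCC has $|t|\ge x_{\min}^{|Q|}$). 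This decay rate controls the spectral radius of $M_{\pfterm}(f)$ away from $1$, and hence bounds $\|(I-M_{\pfterm}(f))^{-1}\|$ by a polynomial quantity.

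With this in place, the algorithm is: (i) compute $\pfterm$ and the subsystem structure in polynomial time; (ii) invoke Proposition~\ref{prop-termination} to compute approximations $\tilde f_{p,q}$ with a relative error $\varepsilon'$ that is polynomial in $\varepsilon$ and the condition number bound from step~(iii); (iii) solve the approximated linear subsystem $(I-M_{\pfterm}(\tilde f))\,\tilde{\mathbf{g}}_{\pfterm}=\tilde{\mathbf{f}}_{\pfterm}$ by Gaussian elimination; (iv) return $\tilde g_{p,q}/\tilde f_{p,q}$. A routine perturbation analysis, using the condition number bound of the previous paragraph and the fact that $M(f)$ depends linearly on $f$, shows that setting $\varepsilon' := \varepsilon/\mathrm{poly}(|\A|)$ is sufficient to guarantee absolute error at most $\varepsilon$ in $E(p{\downarrow}q)$; since $\log(1/\varepsilon')$ is polynomial in $|\A|$ and $\log(1/\varepsilon)$, the entire algorithm runs in polynomial time. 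The main obstacle I anticipate is precisely the condition number bound of step~(iii), where the martingale and divergence-gap machinery of the paper are the natural tools, but the estimate must be uniform across all pairs in $\pfterm$ and across all BSCCs of $\X$, including the handling of transient states and zero-trend BSCCs whose pairs still happen to lie in $\pfterm$.
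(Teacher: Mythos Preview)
Your overall strategy---set up a linear system in quantities derived from $E(p{\downarrow}q)$, excise the pairs outside $\pfterm$ using Section~\ref{subsec:inftermtime}, approximate the termination-probability coefficients via Proposition~\ref{prop-termination}, and do a perturbation analysis---is exactly the paper's strategy. Two points of comparison are worth making.

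First, the paper works with the variables $V(p{\downarrow}q)=E(p{\downarrow}q)$ rather than your $g_{p,q}=E(p{\downarrow}q)\cdot f_{p,q}$; the two systems differ by a diagonal similarity, so this is cosmetic. But the paper's choice buys a clean shortcut you do not exploit: after dividing through by $[p{\downarrow}q]$ and using the identity~\eqref{eq:termination-probabilities}, the constant vector becomes exactly~$\vone$, and since $H$ is nonnegative with $H^*=\sum_i H^i$ convergent, one has $(I-H)^{-1}\ge 0$ and hence $\|(I-H)^{-1}\|=\|(I-H)^{-1}\vone\|=\max_{(p,q)\in\pfterm}E(p{\downarrow}q)$. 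Thus the condition-number bound reduces \emph{immediately} to an upper bound on the expectations themselves, which is precisely what the martingale/Azuma analysis of Section~\ref{subsec:inftermtime} delivers (Proposition~\ref{prop:exp-time-bound-special}). Your proposed route---bounding the spectral radius of $M_{\pfterm}(f)$ via the tail decay $[p{\downarrow}q,i]\le a^i$---can be made to work but is less direct, and you would still need to translate pointwise tail decay into a matrix-norm statement.

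Second, your claimed lower bound $|t|\ge x_{\min}^{|Q|}$ on the nonzero trends is not justified and is likely false in general (the trend is a \emph{signed} combination $\valpha\vs$, so near-cancellation is possible). What is true, and what the paper implicitly relies on, is that each BSCC trend $t$ is a rational number computable exactly in polynomial time from the input, hence has bit-size polynomial in~$|\A|$; so $|t|\ge 2^{-\mathrm{poly}(|\A|)}$ whenever $t\ne 0$, which is all that is needed for $\log(1/\delta)$ to be polynomial. With that correction your argument goes through.
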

Note that if $y$ approximates $E(p{\downarrow}q)$ up to an absolute
error  $1 > \varepsilon > 0$, then $y$ approximates $E(p{\downarrow}q)$
also up to the relative error $\varepsilon$ because
$E(p{\downarrow}q) \geq 1$.

The proof of Theorem~\ref{thm-regular} is based on the fact that
the vector of all
$E(p{\downarrow}q)$, where $(p,q)\in \pfterm$, is the unique solution
of a system of linear equations whose coefficients can be efficiently
approximated (see below). Hence, it suffices to approximate the
coefficients, solve the approximated equations, and then bound the
error of the approximation using standard arguments from numerical
analysis.

Let us start by setting up the system of linear equations for
$E(p{\downarrow}q)$.
For all $p,q \in T^{>0}$, we fix a fresh variable $V(p{\downarrow}q)$,
and construct the following system of linear equations, $\calL$, where the
termination probabilities are treated as constants:
\begin{eqnarray*}
  V(p{\downarrow}q) & = & \sum_{(p,-1,q) \in \delta^{>0}}
    \frac{P^{>0}(p,-1,q)}{[p{\downarrow}q]}
  \ + \
  \sum_{(p,0,t) \in \delta^{>0}}
      \frac{P^{>0}(p,0,t) \cdot [t{\downarrow}q]}{[p{\downarrow}q]} \cdot
      \bigg(1+ V(t{\downarrow}q)\bigg)\\[1ex]
  & + &
  \sum_{(p,1,t) \in \delta^{>0}} \sum_{r \in Q}
      \frac{P^{>0}(p,1,t) \cdot
           [t{\downarrow}r]\cdot[r{\downarrow}q]}{[p{\downarrow}q]} \cdot
      \bigg(1+ V(t{\downarrow}r)+ V(r{\downarrow}q)\bigg)
\end{eqnarray*}
It has been shown in \cite{EKM:prob-PDA-expectations}
that the tuple of all $E(p{\downarrow}q)$,
where $(p,q)\in T^{>0}$, is the least solution
of~$\calL$ in $\Rset^{+} \cup \{\infty\}$
with respect to component-wise ordering (where $\infty$ is treated
according to the standard conventions).
Due to Theorem~\ref{thm-exp-infinite}, we can further simplify the
system~$\calL$ by erasing the defining equations for
all $V(p{\downarrow}q)$ such that $E(p{\downarrow}q) = \infty$
(note that if $E(p{\downarrow}q) < \infty$, then the defining equation
for $V(p{\downarrow}q)$ in~$\calL$ cannot contain any
variable $V(r{\downarrow}t)$ such that $E(r{\downarrow}t) = \infty$).

Thus, we obtain the system~$\calL'$. It is straightforward
to show that the vector of all finite $E(p{\downarrow}q)$ is
the \emph{unique} solution of the system $\calL'$ (see, e.g., Lemma~6.2.3 and
Lemma~6.2.4 in~\cite{Brazdil:PhD}).
If we rewrite~$\calL'$ into a standard matrix form, we obtain a
system $\vec{V} = H \cdot \vec{V} + \vec{b}$, where $H$ is a
nonsingular nonnegative matrix, $\vec{V}$ is the vector of variables
in~$\calL'$, and $\vec{b}$ is a vector.  Further,
we have that $\vec{b} = \vone$, i.e., the constant coefficients are all~$1$.
This follows from the following equality
(see \cite{EKM:prob-PDA-PCTL,EY:RMC-SG-equations}):
\begin{equation} \label{eq:termination-probabilities}
\begin{aligned} \
[p{\downarrow}q]=\sum_{(p,-1,q) \in \delta^{>0}}
    P^{>0}(p,-1,q)\ & +  \sum_{(p,0,t) \in \delta^{>0}}
     P^{>0}(p,0,t) \cdot [t{\downarrow}q] \\
     & +
  \sum_{(p,1,t) \in \delta^{>0}} \sum_{r \in Q}
      P^{>0}(p,1,t) \cdot
           [t{\downarrow}r]\cdot[r{\downarrow}q]
\end{aligned}
\end{equation}
Hence, $\calL'$ takes the form $\vec{V} = H \cdot \vec{V} + \vone$.
Unfortunately, the entries of~$H$ can take irrational values and
cannot be computed precisely in general.  However, they can be
approximated up to an arbitrarily small relative error using
Proposition~\ref{prop:termprobs}.  Denote by $G$ an approximated
version of~$H$.  We aim at bounding the error of the solution of the
``perturbed'' system $\vec{V} = G \cdot \vec{V} + \vone$ in terms of
the error of~$G$.  To measure these errors, we use the $l_\infty$ norm
of vectors and matrices, defined as follows: For a vector $\vec{V}$ we have
that $\norm{\vec{V}}=\max_i |\vec{V}_i|$, and for a matrix~$M$ we have
$\norm{M}=\max_i \sum_j |M_{ij}|$. Hence, $\norm{M} = \norm{M \cdot \vone}$
if $M$ is nonnegative.  We show the following:
%
%
\begin{proposition}
\label{prop-exp-approx}
  Let $b \ge \max\left\{E(p{\downarrow}q)\mid (p,q)\in \pfterm\right\}$.
  Then for each $\varepsilon$, where $0 < \varepsilon < 1$, let
  $\delta = \varepsilon\, / (12\cdot b^2)$.
  If $\norm{G-H} \le \delta$, then the perturbed system
   $\vec{V} = G \cdot \vec{V} + \vone$
  has a unique solution $\vec{F}$, and in addition, we have that
  \[
   |E(p{\downarrow} q) - \vec{F}_{pq}| \quad \leq\quad \varepsilon \qquad \text{for all $(p,q) \in \pfterm$.}
  \]
  Here $\vec{F}_{pq}$ is the component of $\vec{F}$ corresponding to the variable $V(p{\downarrow}q)$.
\end{proposition}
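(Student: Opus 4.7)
The plan is to treat Proposition~\ref{prop-exp-approx} as a standard perturbation result for the linear system $(I-H)\vec{V} = \vone$, using the hypothesis $b \ge \norm{\vec{E}}$ (where $\vec{E}_{pq} := E(p{\downarrow}q)$ for $(p,q)\in\pfterm$) to control $\norm{(I-H)^{-1}}$, and then converting the small perturbation $G-H$ of the matrix into a small change of the solution via a Neumann series.

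The first and most important step is to prove the norm bound $\norm{(I-H)^{-1}} \le b$. Because $H$ is entrywise nonnegative and the Picard iteration $\vec{V}^{(n+1)} = H\vec{V}^{(n)} + \vone$ started at $\vzero$ increases monotonically to the (finite, by assumption) vector $\vec{E}$, the partial sums $\sum_{k=0}^n H^k \vone$ are bounded, forcing the spectral radius of $H$ to be strictly less than $1$; hence $(I-H)^{-1} = \sum_{k\ge 0} H^k$ exists and is itself entrywise nonnegative. For any nonnegative matrix $A$ one has $\norm{A} = \norm{A\vone}$ in the $\ell_\infty$ operator norm, and applying this to $(I-H)^{-1}$ yields $\norm{(I-H)^{-1}} = \norm{\vec{E}} \le b$.

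Next, with $\Delta := G - H$ I would use the factorization $I - G = (I-H)\bigl(I - (I-H)^{-1}\Delta\bigr)$. Since $b \ge 1$ (every $E(p{\downarrow}q) \ge 1$) and $\delta = \varepsilon/(12 b^2)$, one has $\norm{(I-H)^{-1}\Delta} \le b\delta \le 1/12 < 1$, so the Neumann series $\sum_{k\ge 0}((I-H)^{-1}\Delta)^k$ converges; consequently $(I-G)^{-1}$ exists, the perturbed system has the \emph{unique} solution $\vec{F} = (I-G)^{-1}\vone$, and $\norm{(I-G)^{-1}} \le b/(1 - b\delta) \le 12 b/11$.

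Finally, the resolvent identity $A^{-1} - B^{-1} = A^{-1}(B-A)B^{-1}$ applied to $A = I - G$ and $B = I - H$ gives $\vec{F} - \vec{E} = (I-G)^{-1}\Delta\,\vec{E}$, so that $\norm{\vec{F} - \vec{E}} \le (12 b/11)\cdot \delta \cdot b = (12/11)\,b^2 \delta = \varepsilon/11 \le \varepsilon$, which is exactly the required componentwise bound on $|E(p{\downarrow}q) - \vec{F}_{pq}|$. The one nontrivial point, and the place where care is needed, is the upgrade from $\norm{(I-H)^{-1}\vone} \le b$ to $\norm{(I-H)^{-1}} \le b$: the hypothesis supplies information only about the vector $(I-H)^{-1}\vone$, and it is precisely the nonnegativity of $H$ (and hence of $(I-H)^{-1}$) that makes this upgrade valid; the rest of the argument is routine perturbation bookkeeping.
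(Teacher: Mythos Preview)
Your proof is correct. The crucial step---showing that $(I-H)^{-1}$ is entrywise nonnegative (via the convergent Neumann/Kleene series $\sum_{k\ge 0}H^k$) and hence that $\norm{(I-H)^{-1}}=\norm{(I-H)^{-1}\vone}=\norm{\vec E}\le b$---is exactly what the paper does. From there the two arguments diverge only in packaging: the paper invokes a black-box condition-number theorem (their Theorem~\ref{thm:error} and its wrapper Proposition~\ref{prop:error}), which additionally requires bounding $\norm{I-H}\le 3$ using the termination-probability equation~\eqref{eq:termination-probabilities}, and then reads off $\norm{\vec E-\vec F}/\norm{\vec E}\le 4\delta uv$ with $u=3$, $v=b$. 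You instead carry out the perturbation analysis by hand via the factorization $I-G=(I-H)\bigl(I-(I-H)^{-1}\Delta\bigr)$, the Neumann series for the second factor, and the resolvent identity. Your route is more self-contained (no external theorem, no separate bound on $\norm{I-H}$) and even yields a sharper constant $\varepsilon/11$; the paper's route has the advantage of isolating a reusable numerical-analysis lemma. Both rest on the same nontrivial observation you rightly flagged at the end: the passage from $\norm{(I-H)^{-1}\vone}\le b$ to $\norm{(I-H)^{-1}}\le b$ is valid precisely because $H$, and hence $(I-H)^{-1}$, is nonnegative.
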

%
%
The proof of Proposition~\ref{prop-exp-approx} is based on estimating
the size of the condition number $\kappa = \norm{1-H} \cdot
\norm{(1-H)^{-1}}$ and applying standard results of numerical
analysis. The $b$ in Proposition~\ref{prop-exp-approx} can be estimated
as follows:
\newcommand{\stmtpropexptimeboundspecial}{
 Let $\xmin$ denote the smallest nonzero probability in~$A$.
 Then we have:
 \[
  E(p{\downarrow}q) \quad\le\quad 85000 \cdot |Q|^6 / \left( \xmin^{6 |Q|^3} \cdot t_{\min}^4 \right) \qquad \text{for all $(p,q) \in \pfterm$,}
 \]
 where $t_{\min} = \{|t| \ne 0 \mid \text{$t$ is the trend in a BSCC of~$\X$}\}$.
}
\begin{proposition} \label{prop:exp-time-bound-special}
 \stmtpropexptimeboundspecial
\end{proposition}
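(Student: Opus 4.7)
The plan is to exploit the identity
\[
E(p{\downarrow}q)\;=\;\frac{1}{[p{\downarrow}q]}\sum_{i=1}^{\infty} i \cdot [p{\downarrow}q,i]
\]
and bound the two factors separately. For the denominator, Proposition~\ref{prop-termination} supplies the lower bound $[p{\downarrow}q]\geq \xmin^{|Q|^3}$ directly. For the numerator it suffices to establish a geometric tail $[p{\downarrow}q,i]\leq a^{i}$, valid for all $i$ beyond a polynomially bounded threshold, with $1-a$ explicitly controlled in terms of $|Q|$, $\xmin$ and $t_{\min}$; the conclusion is then obtained from $\sum_{i\geq 1} i\cdot a^{i}\leq 1/(1-a)^{2}$.

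The geometric tail is produced by distinguishing the three finiteness cases~(C), (D), (E) of Theorem~\ref{thm-exp-infinite}. The main case is~(D), where $q$ lies in a BSCC $\B$ of~$\X$ of trend $t$ with $|t|\geq t_{\min}$. After a preliminary phase in which the run reaches $\B$ in the finite chain~$\X$---whose contribution to the termination length is bounded polynomially in $|Q|$ and $1/\xmin^{|Q|}$ by standard finite-Markov-chain estimates for hitting times---we apply the martingale of Proposition~\ref{prop-martingale} \emph{restricted to}~$\B$ and reuse verbatim the Azuma-based calculation from the sketch of Proposition~\ref{lem:expected-term-bound-prob}. Using $\vmax-\vmin\leq 2|Q|/\xmin^{|Q|}$ we obtain
\[
a\;\leq\;\exp\!\Bigl(-\,t_{\min}^{2}\,/\,8(\vmax-\vmin+t+1)^{2}\Bigr),
\]
and the elementary inequality $1-e^{-x}\geq x/2$ for $x\in(0,1]$ yields $1-a\geq c\, t_{\min}^{2}\,\xmin^{2|Q|}/|Q|^{2}$ for an absolute constant~$c$.

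Cases~(C) and~(E) do not involve the trend. For case~(C), the structural observation used there shows that from any configuration in $\pre^{*}(q(0))\cap \post^{*}(p(1))$ with counter at least $|Q|$ the chain leaves $\pre^{*}(q(0))$ within $|Q|$ steps with probability at least $\xmin^{|Q|}$, giving a geometric tail with $1-a\geq \xmin^{|Q|}$. For case~(E), $\pre^{*}(q(0))\cap \post^{*}(p(1))$ is finite, so the termination length is dominated by the hitting time of $q(0)$ in a finite induced sub-chain whose size and smallest transition probability are explicitly controlled; the expected hitting time is then polynomial in $|Q|$ and $1/\xmin^{|Q|}$. Since $t_{\min}\leq 1$, both estimates are strictly dominated by the case~(D) bound, so it suffices to verify the claimed numerical inequality for case~(D).

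Combining the pieces yields $E(p{\downarrow}q)\leq C\,|Q|^{4}/(\xmin^{|Q|^{3}+4|Q|}\, t_{\min}^{4})$ for an absolute constant~$C$, and the statement follows after absorbing $|Q|^{3}+4|Q|\leq 6|Q|^{3}$ (which holds for every $|Q|\geq 1$). The main obstacle is arithmetic bookkeeping: one has to bound the additive contribution of the entry phase without inflating the exponent of~$\xmin$, dominate $(\vmax-\vmin+t+1)^{4}$ by a multiple of $(|Q|/\xmin^{|Q|})^{4}$, track the constants appearing in the Azuma step as well as in the truncation at the threshold~$h$ of Proposition~\ref{lem:expected-term-bound-prob}, and finally collect the numerical factors to show that the overall constant does not exceed~$85000$.
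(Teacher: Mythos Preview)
Your overall strategy coincides with the paper's: split into the three finiteness cases~(C), (D),~(E) of Theorem~\ref{thm-exp-infinite}, bound each explicitly, and take the maximum. The paper packages this as Proposition~\ref{prop:grand-bound} (proved via Lemmas~\ref{lem:etime-case-C}, \ref{lem:etime-case-D}, \ref{lem:etime-case-E}) and then simply observes that the stated bound dominates all three case bounds. There are, however, two places where your sketch does not go through.

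\emph{Case~(E).} You assert that the expected hitting time in the finite induced sub-chain is ``polynomial in $|Q|$ and $1/\xmin^{|Q|}$''. But the set $\pre^*(q(0))\cap\post^*(p(1))$, while finite, can have cardinality of order~$|Q|^3$; bounding it requires a pumping argument on the $\calP$-automata for $\pre^*$ and $\post^*$ (Lemma~\ref{lem:pumping-pre-post}). Hence the hitting-time bound in that sub-chain involves $1/\xmin^{O(|Q|^3)}$, not $1/\xmin^{O(|Q|)}$, and the paper obtains $E(p{\downarrow}q)\le 15|Q|^3/\xmin^{4|Q|^3}$ here. Your claim that case~(E) is ``strictly dominated by the case~(D) bound'' is therefore false at the level of the $\xmin$-exponent: $4|Q|^3 > |Q|^3+4|Q|$ for $|Q|\ge 2$. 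The exponent $6|Q|^3$ in the statement is needed precisely to absorb case~(E); your claimed combined exponent $|Q|^3+4|Q|$ is too small.

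\emph{Case~(D).} You cannot ``reuse verbatim'' the Azuma calculation after the entry phase, because the threshold~$h$ in Proposition~\ref{lem:expected-term-bound-prob} depends on the counter value at the moment of entering~$\B$, and this value is random---it can be as large as the entry time~$k_1$. Adding the expected entry time does not control this interaction: one must sum over~$k_1$, apply Azuma with $h\approx 2(\va+k_1)/|t|$ for each~$k_1$, and use the geometric tail of the entry-time distribution to make the resulting double series converge (this is the $E_1+E_2$ decomposition in Lemma~\ref{lem:etime-case-D}). This coupling is what produces the factor $|Q|^6$ rather than your $|Q|^4$.
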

Although $b$ appears large, it is really the value
of $\log(1/b)$ which matters, and it is still reasonable.
Theorem~\ref{thm-regular} now follows by combining
Propositions~\ref{prop:exp-time-bound-special}, \ref{prop-exp-approx} and~\ref{prop:termprobs},
because the approximated matrix~$G$ can be computed using a number of arithmetical operations which is
polynomial in $|\A|$ and $\log(1/\varepsilon)$.

\section{Quantitative Model-Checking of  $\omega$-regular
Properties}
\label{sec-LTL}

In this section, we show that for every $\omega$-regular property
encoded by a deterministic Rabin automaton, the probability
of all runs in a given pOC that satisfy the property can be approximated
up to an arbitrarily small relative error $\varepsilon>0$ in
polynomial time. This is achieved by designing and analyzing
a new quantitative model-checking algorithm for pOC
and $\omega$-regular properties, which
is \emph{not} based on techniques developed for pPDA and
RMC in \cite{EKM:prob-PDA-PCTL,EY:RMC-LTL-complexity,EY:RMC-LTL-QUEST}.

Recall that a deterministic Rabin automaton (DRA) over a finite
alphabet $\Sigma$ is a deterministic finite-state automaton
$\R$ with total transition function and \emph{Rabin acceptance
condition} $(E_1,F_1),\ldots,(E_k,F_k)$, where $k \in \Nset$, and all
$E_i$, $F_i$ are subsets of control states of~$\R$.
For a given infinite word $w$ over $\Sigma$, let $\inf(w)$ be the
set of all control states visited infinitely often along the unique
run of $\R$ on $w$. The word $w$ is accepted by $\R$ if there is
$i \leq k$ such that $\inf(w) \cap E_i = \emptyset$ and
$\inf(w) \cap F_i \neq \emptyset$.

Let $\Sigma$ be a finite alphabet, $\R$ a DRA over $\Sigma$,
and $\A = (Q,\delta^{=0},\delta^{>0},P^{=0},P^{>0})$ a pOC. A
\emph{valuation} is a function $\nu$ which to every configuration
$p(i)$ of $\A$ assigns a unique letter of~$\Sigma$. For simplicity, we
assume that $\nu(p(i))$ depends only on the control state $p$ and the
information whether $i \geq 1$. Intuitively, the letters
of $\Sigma$ correspond to collections of predicates that are
valid in a given configuration of $\A$. Thus, every run
$w \in \run_{\A}(p(i))$ determines a unique infinite word
$\nu(w)$ over $\Sigma$ which is either accepted by $\R$ or not.
The main result of this section is the following theorem:

\begin{theorem}
\label{thm-omega-regular}
  For every $p \in Q$, the probability of all $w \in \run_\A(p(0))$
  such that $\nu(w)$ is accepted by $\R$ can be approximated
  up to an arbitrarily small relative error $\varepsilon > 0$
  in time polynomial in $|\A|$, $|\R|$, and
  $\log(1/\varepsilon)$.
\end{theorem}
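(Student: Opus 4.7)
The plan is to reduce the computation to a finite-state Markov chain problem on a product construction, whose coefficients are termination and divergence probabilities from Proposition~\ref{prop-termination}, and to control the error propagation using the divergence gap theorem of this section.

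First, I would form the synchronous product pOC $\A\otimes\R$, whose control states are pairs $(p,r)$ with $p\in Q$ and $r$ a state of $\R$, whose counter dynamics are inherited from~$\A$, and whose $\R$-component updates deterministically according to the valuation~$\nu$. The product has size polynomial in $|\A|\cdot|\R|$, and the target probability from $p(0)$ in $\A$ equals the probability, from $(p,r_0)(0)$ in $\A\otimes\R$, of runs whose second-coordinate projection of the set of infinitely-often-visited control states satisfies the Rabin condition $(E_1,F_1),\ldots,(E_k,F_k)$.

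Second, I would classify runs of $\A\otimes\R$ by their asymptotic behaviour: either (a)~the counter diverges to infinity, or (b)~the counter returns to zero infinitely often. For divergent runs the process eventually proceeds only via positive rules, so by finite-state ergodicity it almost surely settles in a single BSCC $B$ of the underlying finite chain $\X'$ of $\A\otimes\R$, with the set of infinitely-often-visited control states equal to~$B$. For recurrent runs, I would build a finite Markov chain $\Z_0$ on the zero-counter configurations of $\A\otimes\R$ whose transitions come from zero rules and from ``compressed'' excursions: from $(p,r)(0)$ and a positive rule $p\prule{x,1}p'$, go to $(q,r'')(0)$ with probability $x\cdot[(p',r'){\downarrow}(q,r'')]$, where $r',r''$ are the $\R$-updates dictated by~$\nu$. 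A Borel--Cantelli argument then shows that, for a run absorbed into a BSCC $C$ of $\Z_0$, the set of control states visited infinitely often is almost surely $C$ together with all configurations reachable from $C$ along positive-counter excursions. Each BSCC of $\X'$ or of $\Z_0$ is then labelled \emph{accepting} or \emph{rejecting} in polynomial time by checking the Rabin pairs against these $\inf$-sets.

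Third, the acceptance probability becomes the sum, over accepting BSCCs $B$ of $\X'$, of the probability of eventually diverging into $B$, plus the sum, over accepting BSCCs $C$ of $\Z_0$, of the probability of being absorbed into $C$ in the chain $\Z_0$. Each summand reduces to a polynomial-size linear system or finite-state absorption computation whose coefficients are termination probabilities $[(p,r){\downarrow}(q,r')]$ and divergence probabilities $[(p,r){\uparrow}]=1-\sum_{(q,r')}[(p,r){\downarrow}(q,r')]$. All required termination probabilities can be computed to relative error $\delta$ in time polynomial in $|\A\otimes\R|$ and $\log(1/\delta)$ by Proposition~\ref{prop-termination}.

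The main obstacle is preserving \emph{relative} error, not just absolute error, through the computation: divergence probabilities enter as the differences $1-\sum_{(q,r')}[(p,r){\downarrow}(q,r')]$, so small absolute errors on the termination probabilities need not give small relative errors on such differences, and hence on the final answer. This is exactly the point where the divergence gap theorem of this section is indispensable: it bounds every positive divergence probability away from zero by a quantity depending only polynomially on $|\A\otimes\R|$ and the least rule probability, so that choosing $\delta$ polynomially small in $\varepsilon$ yields relative-error-close divergence probabilities. Combining this with standard condition-number bounds for the final linear systems, analogous to Proposition~\ref{prop-exp-approx}, delivers an $\varepsilon$-relative approximation of the acceptance probability in time polynomial in $|\A|$, $|\R|$, and $\log(1/\varepsilon)$.
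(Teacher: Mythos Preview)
Your plan follows the paper's three-step strategy: take the product with the DRA (the paper's Step~1, Proposition~\ref{prop-product}), reduce to a reachability problem in a finite chain whose edge weights are termination and non-termination probabilities (Step~2), and control relative error via the divergence gap theorem (Step~3, Theorem~\ref{thm-gap}). The organisation differs only cosmetically: the paper builds a single chain~$\G$ on $Q\times\{0,1\}\cup\{\acc,\rej\}$, sending $p(1)$ to $q(0)$ with weight $[p{\downarrow}q]$ and to $\acc$/$\rej$ with the accepting/rejecting divergence masses, whereas you split explicitly into divergent runs (BSCCs of~$\X'$) and recurrent runs (BSCCs of your~$\Z_0$).

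There is, however, one point where your sketch does not close. You assert that every summand ``reduces to a polynomial-size linear system \ldots\ whose coefficients are termination probabilities \ldots\ and divergence probabilities $[(p,r){\uparrow}]=1-\sum_{(q,r')}[(p,r){\downarrow}(q,r')]$''. But for the divergent part you need the probability of diverging \emph{into a specific} BSCC~$B$ of~$\X'$ (or at least into the union of accepting ones), and this is not expressible in terms of the aggregate $[(p,r){\uparrow}]$: a configuration may diverge into several different BSCCs with different probabilities. The paper handles this (Corollary~\ref{cor-nonterm} and the paragraph after it) by building two auxiliary pOCs $\A_{\cons}$ and $\A_{\inco}$ in which the positive rules out of every state in an unwanted BSCC are replaced by a forced decrement $q\prule{1,-1}q$; the ordinary non-termination probability in these modified pOCs is then exactly the divergence mass into the desired group of BSCCs, and the divergence gap theorem applies to them. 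Without this trick (or an equivalent one), your Step~3 does not yield efficiently computable coefficients.

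Two minor points. In your description of~$\Z_0$ you write ``positive rule $p\prule{x,1}p'$'' from a level-$0$ configuration; this should be a zero rule with increment~$+1$, and since $\Z_0$ is substochastic (the missing mass is divergence) you should say explicitly how BSCCs are taken. On the other hand, your observation that the $\inf$-set of a recurrent run includes the control states visited along positive-counter excursions, not just the states in the BSCC of~$\Z_0$, is the right refinement when checking the Rabin pairs and is worth keeping.
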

Our proof of Theorem~\ref{thm-omega-regular} consists of three steps:
\begin{itemize}
\item[1.] We show that the problem of our interest is equivalent
  to the problem of computing the probability of all accepting
  runs in a given pOC $\A$ with Rabin acceptance condition.
\item[2.] We introduce a finite-state Markov chain $\G$
  (with possibly irrational transition probabilities) such that
  the probability of all accepting runs in $\M_\A$ is
  equal to the probability of reaching a ``good'' BSCC in $\G$.
\item[3.] We show how to compute the probability of reaching
  a ``good'' BSCC in $\G$ with relative error at most~$\varepsilon$
  in time polynomial in $|\A|$ and $\log(1/\varepsilon)$.
\end{itemize}
Let us note that Steps~1 and~2 are relatively simple, but
Step~3 requires several insights. In particular, we cannot
solve Step~3 without bounding a positive non-termination probability in
pOC (i.e., a positive probability of the form $[p{\uparrow}]$) away from
zero. This is achieved in our ``divergence gap theorem'' (i.e.,
Theorem~\ref{thm-gap}), which is based on applying Azuma's inequality
to the martingale constructed in Section~\ref{sec-etime}. Now
we elaborate the three steps in more detail.
\smallskip

\noindent
\textbf{Step~1.}
For the rest of this section, we fix a pOC
$\A = (Q,\delta^{=0},\delta^{>0},P^{=0},P^{>0})$, and a \emph{Rabin acceptance
condition} $(\calE_1,\F_1),\ldots,(\calE_k,\F_k)$, where $k \in \Nset$
and $\calE_i,\F_i \subseteq Q$ for all \mbox{$1 \leq i \leq k$}.
For every run $w \in \run_{\A}$, let $\inf(w)$ be the set of
all $p \in Q$ visited infinitely often along~$w$. We use
$\run_\A(p(0),\acc)$ to denote the set of all \emph{accepting runs}
$w \in \run_{\A}(p(0))$ such that  $\inf(w) \cap \calE_i = \emptyset$ and
$\inf(w) \cap \F_i \neq \emptyset$ for some $i \leq k$.
Sometimes we also write $\run_\A(p(0),\rej)$ to denote the set
  $\run_{\A}(p(0)) \smallsetminus \run_\A(p(0),\acc)$ of \emph{rejecting}
runs.

Our next proposition says that the problem of computing/approximating
the probability of all runs $w$ in a given pOC that are accepted by
a given DRA is efficiently reducible to the problem of
computing/approximating the probability of all accepting runs in a given
pOC with Rabin acceptance condition. The proof is very simple (we just
``synchronize'' a given pOC with a given DRA, and setup the Rabin
acceptance condition accordingly).

\begin{proposition}
\label{prop-product}
  Let $\Sigma$ be a finite alphabet, $\A$ a pOC, $\nu$ a valuation,
  $\R$ a DRA over $\Sigma$, and $p(0)$ a configuration of $\A$.
  Then there is a pOC $\A'$ with Rabin acceptance condition
  and a configuration $p'(0)$ of $\A'$ constructible in polynomial time
  such that the probability of all $w \in \run_{\A}(p(0))$ where $\nu(w)$
  is accepted by $\R$ is equal to the probability of all accepting
  $w \in \run_{\A'}(p'(0))$.
\end{proposition}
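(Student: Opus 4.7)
The plan is to carry out the standard synchronous product between $\A$ and $\R$. Let $R$, $r_0$, $\delta_\R\colon R \times \Sigma \to R$, and $(E_1,F_1),\ldots,(E_k,F_k)$ denote the control states, initial state, transition function, and Rabin pairs of $\R$. Since by assumption $\nu(p(i))$ depends only on $p$ and on whether $i=0$ or $i\geq 1$, I write $\nu_{>}(p) := \nu(p(1))$ and $\nu_{=}(p) := \nu(p(0))$. The pOC $\A'$ has state space $Q \times R$ and carries the same counter. For every positive rule $p \prule{x,c} q$ of $\A$ and every $r \in R$, I add the positive rule $(p,r) \prule{x,c} (q,\,\delta_\R(r,\nu_{>}(p)))$ to $\A'$; for every zero rule $p \zrule{x,c} q$ and every $r \in R$, I add $(p,r) \zrule{x,c} (q,\,\delta_\R(r,\nu_{=}(p)))$. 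The probabilities are copied unchanged, so at each product state $(p,r)$ the outgoing positive (resp.\ zero) rules still form a positive rational distribution. The construction is clearly polynomial, and the designated initial configuration is $p'(0) := (p,r_0)(0)$.

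Next, I lift the Rabin condition of $\R$ onto $\A'$ by setting $\calE_i := Q \times E_i$ and $\F_i := Q \times F_i$ for $i=1,\ldots,k$. A simple induction produces a bijection $w \mapsto w'$ between $\run_\A(p(0))$ and $\run_{\A'}(p'(0))$: a run $p_0(i_0), p_1(i_1), \ldots$ of $\M_\A$ corresponds to $(p_0,r_0)(i_0), (p_1,r_1)(i_1), \ldots$ of $\M_{\A'}$, where $r_{k+1} = \delta_\R(r_k, \nu(p_k(i_k)))$. The transition probabilities match step by step between the two sides, so this bijection preserves cylinder measures and hence lifts to a measure-preserving map between the associated probability spaces.

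It remains to verify that accepting runs correspond to accepting runs. By construction, $r_0, r_1, r_2, \ldots$ is exactly the $\R$-run on $\nu(w)$, so the projection of $\inf(w')$ onto the $R$-coordinate equals the set of $\R$-states visited infinitely often while reading $\nu(w)$. Consequently $\inf(w') \cap \calE_i = \emptyset$ iff no element of $E_i$ is visited infinitely often by $\R$, and analogously for $\F_i$ and $F_i$; hence $w'$ is accepting in $\A'$ iff $\nu(w)$ is accepted by $\R$, which is what we need. The only subtlety — and the reason the hypothesis on $\nu$ is required — is that the $\R$-transition fired at step $k$ must be determinable from the current product state together with the zero/positive status of the counter, and this is exactly what the restriction on $\nu$ provides. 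There is no serious obstacle beyond getting this bookkeeping right.
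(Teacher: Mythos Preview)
Your proposal is correct and follows essentially the same approach as the paper: the synchronized product with state space $Q \times R$, rules $(p,r) \prule{x,c} (q,\delta_\R(r,\nu(p(1))))$ and $(p,r) \zrule{x,c} (q,\delta_\R(r,\nu(p(0))))$, and Rabin pairs $(Q\times E_i,\,Q\times F_i)$. You give more detail on the measure-preserving bijection and the role of the restriction on~$\nu$ than the paper does, but the underlying construction is identical.
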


\smallskip

\noindent
\textbf{Step~2.} Let $\G$ be a finite-state Markov chain, where
$Q \times \{0,1\} \ \cup\ \{\acc,\rej\}$ is the set of
states (the elements of $Q \times \{0,1\}$ are written
as $p(i)$, where $i \in \{0,1\}$), and
the transitions of $\G$ are determined as follows:
  \begin{itemize}
  \item $p(0) \tran{x} q(j)$ is a transition of $\G$
     iff $p(0) \tran{x} q(j)$ is a transition of $\M_{\A}$;
  \item $p(1) \tran{x} q(0)$ iff
     $x = [p{\downarrow}q] > 0$;
  \item $p(1) \tran{x} \acc$ iff
     $x = \calP(\run_{\A}(p(1),\acc) \cap \run_\A(p{\uparrow})) > 0$;
  \item $p(1) \tran{x} \rej$ iff
     $x = \calP(\run_{\A}(p(1),\rej) \cap \run_\A(p{\uparrow})) > 0$;
  \item $\acc \tran{1} \acc$, $\rej \tran{1} \rej$;
  \item there are no other transitions.
  \end{itemize}
A BSCC $B$ of $\G$ is \emph{good} if either $B = \{\acc\}$, or
there is some $i\leq k$ such that $\calE_i \cap Q(B) = \emptyset$
and $\F_i \cap Q(B) \neq \emptyset$, where
$Q(B) = \{p \in Q \mid p(j) \in B \text{ for some } j \in \{0,1\}\}$.
For every $p \in Q$, let $\run_\G(p(0),\mathit{good})$ be the set
of all $w \in \run_{\G}(p(0))$ that visit a good BSCC of~$\G$.
The next proposition is obtained by a simple case analysis of
accepting runs in~$\M_\A$.
\begin{proposition}
\label{prop-X}
  For every $p \in Q$ we have
  $\calP(\run_\A(p(0),\acc)) = \calP(\run_\G(p(0),\mathit{good}))$.
\end{proposition}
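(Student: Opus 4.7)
The plan is to exhibit a measure-preserving map $\phi \colon \run_\A(p(0)) \to \run_\G(p(0))$ that collapses each maximal excursion of a run into a single macro-step of~$\G$. Concretely, given $w \in \run_\A(p(0))$, I would keep the $\M_\A$-transitions between configurations in $Q{\times}\{0\}$ verbatim, and for each $\M_\A$-transition $r(0) \tran{} s(1)$ I would inspect the tail of~$w$: if the counter next returns to zero at some $q(0)$, the macro-step is $r(0) \tran{} s(1) \tran{} q(0)$; if the counter stays positive forever thereafter, the macro-step is $r(0) \tran{} s(1) \tran{} \acc$ (resp.\ $\tran{} \rej$) according as the infinite tail from $s(1)$ is accepting (resp.\ rejecting), after which $\phi(w)$ is completed by the absorbing self-loop. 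I would first verify the pushforward identity $\phi_{*}\calP_{\M_\A} = \calP_{\G}$ on cylinders: this is immediate from the construction of $\G$, since the shortcut probability $[p{\downarrow}q]$ is by definition the $\M_\A$-probability that an excursion from $p(1)$ first returns to zero at $q(0)$, the probabilities $p(1) \tran{} \acc$ and $p(1) \tran{} \rej$ are defined precisely as the measures of the accepting and rejecting divergent tails from $p(1)$, and independence of successive excursions from the prefix follows from the strong Markov property at successive visits to $Q(0)$.

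Next I would partition $\run_\A(p(0))$ into (II) runs whose counter is zero only finitely often, and (I) runs that visit $Q(0)$ infinitely often. A type-(II) run is accepting iff its $\phi$-image enters the BSCC $\{\acc\}$ of $\G$, and the two probabilities agree by the preceding paragraph. A type-(I) run maps under $\phi$ to a run of $\G$ that stays in $Q{\times}\{0,1\}$ forever and thus almost surely enters some BSCC $B \subseteq Q{\times}\{0,1\}$ of $\G$. It would then suffice to show that, conditional on $\phi(w)$ entering $B$, the set $\inf(w)$ of control states visited infinitely often by $w$ in $\M_\A$ coincides a.s.\ with $Q(B)$; for then $w$ is accepting iff $B$ is good by definition, and summing over BSCCs together with case~(II) yields the claimed equality.

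The main obstacle is precisely the a.s.\ identification $\inf(w)=Q(B)$ for type-(I) runs. The inclusion $Q(B) \subseteq \inf(w)$ is routine, because every $p(0)$ or $q(1)$ in $B$ appears infinitely often along $\phi(w)$ and hence along $w$ at the corresponding counter value. The reverse inclusion is delicate, since the interior of an excursion can a priori traverse control states absent from $Q(B)$; I would handle this by a Borel--Cantelli argument applied to the i.i.d.\ family of excursions issuing from each fixed head state $s(1) \in B$, followed by a closure argument showing that every control state visited with positive probability in such an excursion already lies in $Q(B)$ by virtue of the BSCC property of~$\G$. Once this identification is in place, the equality of probabilities is obtained by summing over all good BSCCs of $\G$ inside $Q{\times}\{0,1\}$ and adding the contribution of case~(II).
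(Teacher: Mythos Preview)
Your overall architecture is the natural one: the collapsing map $\phi$, the pushforward identity on cylinders via the strong Markov property, and the split into runs that visit counter value zero finitely versus infinitely often. The paper supplies no proof beyond calling this a ``simple case analysis of accepting runs in~$\M_\A$,'' so there is no detailed argument to compare against; your case split is evidently the intended one, and your treatment of the divergent case is correct.

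The gap is in the recurrent case. You need that almost every type-(I) run $w$ whose $\phi$-image settles in a BSCC $B \subseteq Q \times \{0,1\}$ satisfies $\inf(w) = Q(B)$, and you propose to secure the inclusion $\inf(w) \subseteq Q(B)$ by a ``closure argument'' asserting that every control state visited with positive probability inside an excursion from some $s(1) \in B$ already lies in $Q(B)$. This assertion is false. The BSCC property of~$\G$ constrains only the macro-transitions of~$\G$, which record an excursion solely by its endpoints and carry no information about its interior. A control state~$r$ visited at counter value $\ge 2$ during an excursion need not have $r(0)$ or $r(1)$ reachable in~$\G$ at all, let alone in~$B$.

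Concretely, take $Q = \{p,q,r\}$ with zero rule $p \zrule{1,1} q$ and positive rules $q \prule{1/2,-1} p$, $q \prule{1/2,+1} r$, $r \prule{1,-1} q$ (completing the remaining mandatory rules so that they are never exercised from~$p(0)$). Then $[q{\downarrow}p] = 1$, the part of~$\G$ reachable from $p(0)$ is the two-cycle $p(0) \to q(1) \to p(0)$, and $Q(B) = \{p,q\}$. Yet in $\M_\A$ almost every run from $p(0)$ visits $r(2)$ infinitely often, so $\inf(w) = \{p,q,r\} \supsetneq Q(B)$ almost surely. With the single Rabin pair $(\calE_1,\F_1) = (\{r\},\{p\})$ the BSCC~$B$ is good while no run is accepting, so the two probabilities in the statement differ. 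Thus not only does your proposed closure argument fail, the identity $\inf(w)=Q(B)$ is itself false in general; a correct treatment of the recurrent case must replace $Q(B)$ by the (a.s.\ deterministic) set of control states visitable inside excursions between configurations of~$B$, and the definition of ``good'' must be read accordingly.
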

\smallskip

\noindent
\textbf{Step~3.} Due to Proposition~\ref{prop-X}, the problem of our
interest reduces to the problem of approximating the probability of
visiting a good BSCC in the finite-state Markov chain~$\G$.
%
%
Since the termination probabilities in~$\A$ can be approximated efficiently
(see Proposition~\ref{prop-termination}), the main problem with
$\G$ is approximating the probabilities $x$ and $y$ in transitions
of the form $p(1) \tran{x} \acc$ and $p(1) \tran{y} \rej$.
Recall that $x$ and $y$ are the probabilities
of all $w \in \run_{\A}(p{\uparrow})$ that are accepting and rejecting,
respectively. A crucial observation is that
almost all $w \in \run_{\A}(p{\uparrow})$ still behave
accordingly with the underlying finite-state Markov chain $\X$ of $\A$
(see Section~\ref{sec-etime}). More precisely, we have the following:
\begin{proposition}
\label{prop-X-BSCC}
  Let $p \in Q$. For almost all
  $w \in \run_\A(p{\uparrow})$ we have that $w$ visits a BSCC $B$
  of $\X$ after finitely many transitions, and then it
  visits all states of $B$ infinitely often.
\end{proposition}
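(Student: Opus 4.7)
The plan is to compare the control-state projection of runs in $\M_\A$ with sample paths of the underlying finite-state Markov chain $\X$. Let $\pi : \run_{\M_\A}(p(1)) \to Q^\omega$ send a run to its sequence of control states $\ps{0},\ps{1},\ldots$, and let $\mu$ denote the distribution on $Q^\omega$ induced by $\X$ started in $p$. The key observation is that along any run in $\run_\A(p{\uparrow})$ every transition uses a positive rule, so the control-state transition from step $i{-}1$ to step $i$ has probability $P^{>0}(\ps{i-1}, c, \ps{i})$ for some $c$, and these probabilities sum over $c \in \{-1,0,1\}$ to the matrix entry $A_{\ps{i-1},\ps{i}}$ of $\X$.

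The first step I would carry out is the domination inequality
\[
\calP\bigl(\pi^{-1}(S) \cap \run_\A(p{\uparrow})\bigr) \;\le\; \mu(S)
\qquad \text{for every Borel } S \subseteq Q^\omega.
\]
It suffices to verify this on a basic cylinder $[p_0, p_1, \ldots, p_n] \subseteq Q^\omega$, which (when $p_0 = p$; the case $p_0 \ne p$ is trivial) reduces to a one-line computation. Let $U_n$ denote the event ``the counter stays positive at steps $0, 1, \ldots, n$''. Writing out $\calP$ explicitly, $\calP(\pi^{-1}([p_0,\ldots,p_n]) \cap U_n)$ equals $\sum \prod_{j=1}^n P^{>0}(p_{j-1}, c_j, p_j)$, where the sum ranges over counter-change sequences $(c_1,\ldots,c_n) \in \{-1,0,1\}^n$ whose partial sums satisfy $1 + c_1 + \cdots + c_j \ge 1$ for every $j$. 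Bounding this restricted sum by the unrestricted one factors it into $\prod_j \sum_{c_j} P^{>0}(p_{j-1}, c_j, p_j) = \prod_j A_{p_{j-1},p_j} = \mu([p_0,\ldots,p_n])$. Since $\run_\A(p{\uparrow}) \subseteq U_n$ for every $n$, this gives the desired cylinder bound, and a routine monotone class argument extends it to all Borel $S$.

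The second step invokes the classical fact that $\mu$-almost every sample path of a finite-state Markov chain enters a BSCC in finitely many steps and then visits every state of that BSCC infinitely often. Let $S \subseteq Q^\omega$ be the complement of this event; then $\mu(S) = 0$, and hence $\calP(\pi^{-1}(S) \cap \run_\A(p{\uparrow})) = 0$ by the domination inequality. Since $\pi^{-1}(S) \cap \run_\A(p{\uparrow})$ is exactly the set of ``bad'' runs in $\run_\A(p{\uparrow})$, the proposition follows. I do not anticipate any serious obstacle; the only point requiring mild care is the measure-theoretic extension from cylinders to Borel sets, and no machinery beyond the structure of $\M_\A$ and elementary finite-state Markov chain theory is required.
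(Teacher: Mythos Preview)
Your argument is correct. The paper does not spell out a proof of this proposition; to the extent one can infer its intended justification from related passages (the proofs of Proposition~\ref{prop-term-inf} and Theorem~\ref{thm-gap} in the appendix), the authors appeal to the strong law of large numbers together with an external lemma from the literature on one-counter MDPs. Your route is different and more self-contained: you set up a domination inequality $\calP(\pi^{-1}(\cdot)\cap \run_\A(p{\uparrow})) \le \mu(\cdot)$ between the control-state projection of diverging runs and the law of~$\X$, and then transfer the classical BSCC behaviour of finite chains along this inequality. What this buys is that you avoid importing any outside structural lemmas about pOC and need only the one-line observation that summing $P^{>0}(p,c,q)$ over~$c$ yields the transition matrix entry~$A_{p,q}$. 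The only place requiring a modicum of care is the extension from cylinders to Borel sets: since finite disjoint unions of cylinders form an algebra on which the inequality holds by additivity, a monotone class argument (closure under increasing unions and, as both measures are finite, decreasing intersections) does go through.
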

A BSCC $B$ of $\X$ is \emph{consistent} with the considered Rabin
acceptance condition if there is $i \leq k$ such that
$B \cap \calE_i = \emptyset$ and $B \cap \F_i \neq \emptyset$.
If $B$ is not consistent, it is \emph{inconsistent}.
An immediate corollary to Proposition~\ref{prop-X-BSCC} is the following:
\begin{corollary}
\label{cor-nonterm}
Let $\run_\A(p(1),\cons)$ and
$\run_\A(p(1),\inco)$ be the sets of all
$w \in \run_{\A}(p(1))$ such that $w$ visit a control state
of some consistent and inconsistent BSCC of $\X$, respectively.
Then
\begin{itemize}
\item $\calP(\run_\A(p(1),\acc) \cap \run_\A(p{\uparrow})) \ = \
  \calP(\run_\A(p(1),\cons) \cap \run_\A(p{\uparrow}))$
\item $\calP(\run_\A(p(1),\rej) \cap \run(p{\uparrow})) \ = \
  \calP(\run_\A(p(1),\inco) \cap \run_\A(p{\uparrow}))$
\end{itemize}
\end{corollary}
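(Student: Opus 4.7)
The plan is to reduce both equalities to a single observation about the almost-sure behaviour of divergent runs, and then to invoke Proposition~\ref{prop-X-BSCC}.

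First I would observe that any run $w \in \run_\A(p{\uparrow})$ has a strictly positive counter at every step, so only positive rules are ever fired along~$w$. Consequently, the sequence of control states visited by~$w$ is distributed according to the underlying finite-state Markov chain~$\X$ defined in Section~\ref{sec-etime}. A further elementary observation is that every BSCC $B$ of~$\X$ is closed under positive-rule dynamics: if $p \in B$ and some positive rule $p \prule{x,c} q$ has positive probability, then $q \in B$ by the definition of~$\X$. In particular, along a divergent run, once a state of some BSCC $B$ of~$\X$ is visited, the run must stay inside $B$ forever.

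Next I would apply Proposition~\ref{prop-X-BSCC}: for almost every $w \in \run_\A(p{\uparrow})$, there exists a (necessarily unique, since BSCCs are disjoint) BSCC $B$ of~$\X$ such that $w$ eventually enters $B$ and thereafter visits every state of~$B$ infinitely often. Hence, almost surely on $\run_\A(p{\uparrow})$, we have $\inf(w) = B$, and moreover the event ``$w$ visits some control state of $B$'' coincides almost everywhere with the event ``$\inf(w) = B$''.

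Combining these two ingredients, for almost every $w \in \run_\A(p{\uparrow})$ the following chain of equivalences holds: $w \in \run_\A(p(1),\acc)$ iff there exists $i \leq k$ with $\inf(w) \cap \calE_i = \emptyset$ and $\inf(w) \cap \F_i \neq \emptyset$, iff the (unique) BSCC $B = \inf(w)$ is consistent, iff $w \in \run_\A(p(1),\cons)$. An entirely symmetric chain gives $w \in \run_\A(p(1),\rej)$ iff $w \in \run_\A(p(1),\inco)$, since a BSCC is either consistent or inconsistent. Passing to probabilities on $\run_\A(p{\uparrow})$ then yields both equalities. I do not anticipate any real obstacle: the argument is a straightforward bookkeeping exercise on top of Proposition~\ref{prop-X-BSCC}, the only mild point being to verify that the exceptional null sets supplied by that proposition do not disturb the equalities.
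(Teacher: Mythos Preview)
Your proposal is correct and matches the paper's approach: the paper simply labels this an ``immediate corollary to Proposition~\ref{prop-X-BSCC}'' without further argument, and what you have written is precisely the natural unpacking of that claim. The one observation you make explicit that the paper leaves tacit---that a divergent run, once it enters a BSCC of~$\X$, is trapped there because BSCCs are closed under positive rules---is indeed the only point requiring a moment's thought, and you handle it correctly.
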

Due to Corollary~\ref{cor-nonterm}, we can reduce the problem of computing
the probabilities of transitions of the form $p(1) \tran{x} \acc$ and
$p(1) \tran{y} \rej$ to the problem of computing the probability
of non-termination in pOC. More precisely, we construct
pOC's $\A_\cons$ and $\A_\inco$ which are the same as $\A$, except that for
each control state $q$ of an inconsistent (or consistent, resp.)
BSCC of $\X$, all positive  outgoing rules of $q$ are replaced
with $q \prule{1,-1} q$.
Then $x = \calP(\run_{\A_\cons}(p{\uparrow}))$ and
$y = \calP(\run_{\A_\inco}(p{\uparrow}))$.

Due to \cite{BBEKW:OC-MDP}, the problem whether a given non-termination
probability is positive (in a given pOC) is decidable in polynomial
time. This means that the underlying graph of $\G$ is computable in
polynomial time, and hence the sets $G_0$ and $G_1$ consisting
of all states $s$ of $\G$ such that $\calP(\run_\G(s,\mathit{good}))$
is equal to $0$ and $1$, respectively, are constructible in
polynomial time. Let $G$ be the set of all states of $\G$ that are
not contained in $G_0 \cup G_1$, and let $X_\G$ be the stochastic matrix
of~$\G$. For every $s \in G$ we fix a fresh variable $V_s$ and the equation
\begin{equation*}
  V_s \quad = \quad \sum_{s'\in G} X_\G(s,s') \cdot V_{s'}
      \quad + \quad \sum_{s'\in G_1}  X_\G(s,s')
\end{equation*}
Thus, we obtain a system of linear equations
$\vec{V} = A \vec{V} + \vec{b}$ whose unique solution $\vec{V}^*$ in
$\Rset$ is the vector of probabilities of reaching a good BSCC from the
states of~$G$. This system can also be written
as $(I-A)\vec{V} = \vec{b}$. Since the elements of $A$ and $\vec{b}$
correspond to (sums of) transition probabilities in $\G$, it suffices
to compute the transition probabilities of $\G$ with a sufficiently small
relative error so that the approximate $A$ and $\vec{b}$ produce
an approximate solution where the relative error of each component is bounded
by the $\varepsilon$. By combining standard results for finite-state
Markov chains with techniques of numerical analysis, we show the
following:
\begin{proposition}
\label{prop-visiting-approx}
  Let $c = 2|Q|$. For every $s \in G$, let $R_s$ be the probability
  of visiting a BSCC of $\G$ from $s$ in at most $c$ transitions, and
  let $R = \min\{R_s \mid s \in G\}$. Then $R > 0$ and if all transition
  probabilities in $\G$ are computed with relative error at most
  $\varepsilon R^3/8(c+1)^2$, then the resulting system
  $(I-A')\vec{V} = \vec{b}'$ has a unique solution $\vec{U}^*$ such that
  $|\vec{V}^*_s - \vec{U}^*_s|/\vec{V}^*_s \leq \varepsilon$ for every
  $s \in G$.
\end{proposition}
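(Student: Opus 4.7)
I plan to proceed in four steps: positivity of $R$, a Neumann-series bound on the condition number $\norm{(I-A)^{-1}}$, standard perturbation analysis yielding an absolute-error bound, and the more delicate conversion to a componentwise relative error.

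First, since $\G$ is a finite Markov chain whose absorbing states $\acc$ and $\rej$ are always BSCCs reachable from every other state, and all transitions of $\G$ carry positive probability by construction, $R_s > 0$ for each $s \in G$; taking the minimum over the finite set $G$ gives $R > 0$. The constant $c = 2|Q|$ is chosen so that from any state of $\G$ some BSCC is reachable within $c$ steps. Second, by definition of $R$, the probability of remaining inside $G$ for $c$ consecutive steps is at most $1 - R$, whence $\norm{A^c} \leq 1 - R$ in the $l_\infty$ norm. Submultiplicativity yields $\norm{A^{kc}} \leq (1-R)^k$, and summing the Neumann series $(I - A)^{-1} = \sum_{k \geq 0} A^k$ gives $\norm{(I - A)^{-1}} \leq c/R$. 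For $\delta$ small enough, a second Neumann-series argument shows that $(I - A')^{-1}$ exists with $\norm{(I - A')^{-1}} \leq 2c/R$. Third, from the identity $\vec{V}^* - \vec{U}^* = (I - A')^{-1}[(A' - A)\vec{V}^* + (\vec{b} - \vec{b}')]$, combined with $\norm{A - A'}, \norm{\vec{b} - \vec{b}'} \leq \delta$ (from the relative-error hypothesis and substochasticity of $A$) and $\norm{\vec{V}^*} \leq 1$, one obtains the absolute-error bound $\norm{\vec{V}^* - \vec{U}^*} \leq 4c\delta/R$.

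The main obstacle is the fourth step: converting this absolute bound into the claimed componentwise relative bound, since an individual $\vec{V}^*_s$ may be very small. The plan is to exploit the multiplicative structure $\vec{V}^*_s = \sum_\pi \calP(\pi)$, a nonnegative sum over first-passage paths $\pi$ from $s$ to $G_1$: a componentwise relative perturbation of the transition probabilities by at most $\delta$ rescales each $\calP(\pi)$ by a product $\prod_{e \in \pi}(1 + \eta_e)$ with $|\eta_e| \leq \delta$, and because all summands share a sign, the overall relative error of $\vec{V}^*_s$ is governed by the worst-case path factor. Linearising $\vec{V}^* = (I - A)^{-1}\vec{b}$ in the entries of $A$ and $\vec{b}$, and using the identity
\[
  [(I - A)^{-1}\vec{V}^*]_s \ =\ \E_s[T \,;\, X_T \in G_1],
\]
where $T$ is the first exit time of the chain from $G$ and $X_T$ is the exit state (derived from the bounded martingale whose value equals $\vec{V}^*_{X_n}$ before time $T$), leads to the clean componentwise bound
\[
  \frac{|\vec{V}^*_s - \vec{U}^*_s|}{\vec{V}^*_s} \ \leq\ \delta \cdot \E_s[T \mid X_T \in G_1] \ +\ O(\delta^2).
\]
It then remains to bound $\E_s[T \mid X_T \in G_1] \leq O((c+1)^2/R^3)$ uniformly in $s \in G$, which I would derive by splitting paths at length $\ell \approx c/R$: the short-path contribution is controlled by $\E_s[T] \leq c/R$, and the long-path contribution has geometric tail $(1-R)^{\ell/c}$ which is absorbed by the additional factor $R^2$ in the choice $\delta = \varepsilon R^3/(8(c+1)^2)$. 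Substituting then delivers the claimed componentwise relative error at most $\varepsilon$.
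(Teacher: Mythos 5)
Your first three steps are sound and essentially parallel the paper's argument: $R>0$ because $G$ is finite and every state of $\G$ reaches a BSCC with positive probability within $c$ steps; your Neumann-series bound $\norm{(I-A)^{-1}}\le c/R$ is correct (and in fact slightly sharper than the paper's bound $\norm{(I-A)^{-1}}\le ((c+1)/R)^2$, which the paper obtains by summing $i\,(1-R)^{\lfloor (i-1)/c\rfloor}$, i.e.\ by bounding the expected absorption time); and the standard perturbation identity then gives a norm-wise error bound proportional to $\delta$ times the condition number, which is all the paper's Theorem~\ref{thm:error} delivers as well.

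The genuine gap is in your fourth step, and it causes you to miss the one observation that makes the proposition a short corollary of the perturbation bound: for every $s\in G$ one has $\vec{V}^*_s\ge R$ (the probability of ever reaching a good BSCC from $s$ dominates $R_s\ge R$). Since also $\norm{\vec{V}^*}\le 1$, it suffices to make the \emph{norm-wise} relative error $\norm{\vec{V}^*-\vec{U}^*}/\norm{\vec{V}^*}$ at most $\varepsilon R$, and Theorem~\ref{thm:error} with $\kappa\le 2(c+1)^2/R^2$ yields exactly the threshold $\delta=\varepsilon R^3/8(c+1)^2$. Your replacement argument is not only much heavier but incomplete as stated. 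The first-order identity $|\vec{V}^*_s-\vec{U}^*_s|/\vec{V}^*_s\lesssim\delta\cdot\E_s[T\mid X_T\in G_1]$ is a nice observation, but (i) the $O(\delta^2)$ remainder is never controlled, and (ii) the claimed uniform bound $\E_s[T\mid X_T\in G_1]=O((c+1)^2/R^3)$ is not established by your sketch: conditioning on the \emph{future} event $\{X_T\in G_1\}$ destroys the Markov property, so the geometric tail $(1-R)^{\lfloor k/c\rfloor}$ for the unconditioned exit time does not transfer to the conditioned chain (whose Doob-transformed transition probabilities can be arbitrarily small). Indeed $\E_s[T\mid X_T\in G_1]=\E_s[T;\,X_T\in G_1]/\vec{V}^*_s$, so completing your route again requires a lower bound on $\vec{V}^*_s$ in terms of $R$ --- precisely the ingredient you are missing. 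Once that lower bound is in hand, your entire fourth step collapses to the paper's one-line reduction, so you should use it directly.
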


\noindent
Note that the constant $R$ of Proposition~\ref{prop-visiting-approx}
can be bounded from below by $x_t^{|Q|-1} \cdot x_n$, where
\begin{itemize}
\item $x_t = \min\{X_\G(s,s') \mid s,s' \in G\}$, i.e., $x_t$ is the
  minimal probability that is either explicitly used in
  $\A$, or equal to some positive termination probability in $\A$;
\item $x_n = \min\{X_\G(s,s') \mid s \in G, s' \in G_1\}$, i.e., $x_n$ is the
  minimal probability that is either a positive termination probability
  in $\A$, or a positive non-termination probability in the pOC's
  $\A_\cons$ and $\A_\inco$ constructed above.
\end{itemize}

\noindent
Now we need to employ the promised divergence gap theorem, which
bounds a positive non-termination probability in pOC away from zero
(for all $p,q \in Q$, we use $[p,q]$ to denote
the probability of all runs $w$ initiated in $p(1)$ that visit
a configuration $q(k)$, where $k \geq 1$ and the counter stays
positive in all configurations preceding this visit).

\begin{theorem}
\label{thm-gap}
  Let $\A = (Q,\delta^{=0},\delta^{>0},P^{=0},P^{>0})$ be a pOC and
  $\X$ the underlying finite-state Markov chain of $\A$.
  Let $p \in Q$ such that $[p{\uparrow}]>0$. Then there are two
  possibilities:
  \begin{enumerate}
     \item There is $q\in Q$ such that $[p,q]>0$ and $[q{\uparrow}]=1$.
        Hence, $[p{\uparrow}] \geq [p,q]$.
     \item There is a BSCC $\B$ of $\X$ and a state $q$ of $\B$ such
        that $[p,q]>0$, $t > 0$, and $\vec{v}_{q}=\vec{v}_{\max}$
        (here $t$ is the trend, $\vec{v}$ is the vector
        of Proposition~\ref{prop-martingale}, and $\vec{v}_{\max}$
        is the maximal component of~$\vec{v}$; all of these
        are considered in $\B$). Further,
        \[
            [p{\uparrow}]\quad \ge\quad
            [p,q]\cdot \frac{t^3}{12 (2 (\vmax - \vmin) + 4)^3}\,.
        \]
 \end{enumerate}
\end{theorem}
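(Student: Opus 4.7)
The plan is to dichotomise on whether divergence from $p(1)$ can be ``routed'' through a state~$q$ with $[q{\uparrow}]=1$. In the affirmative case a single use of the strong Markov property suffices; otherwise I locate a BSCC of~$\X$ with strictly positive trend and extremal $\vv$-coordinate and quantify its divergence probability via the martingale of Proposition~\ref{prop-martingale} combined with Azuma's inequality. The backbone of both cases is the inequality $[p{\uparrow}]\ge [p,q]\cdot [q{\uparrow}]$, obtained from the strong Markov property at the first visit to~$q$ with positive counter together with the monotonicity $[q(k){\uparrow}]\ge [q(1){\uparrow}]$ for $k\ge 1$ (proved by coupling: evolve $q(1)$ and $q(k)$ with identical random bits, so that the $q(k)$-trajectory has uniformly larger counter and therefore diverges whenever the $q(1)$-trajectory does). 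Thus Case~1 is immediate: if some $q$ satisfies $[p,q]>0$ and $[q{\uparrow}]=1$, then $[p{\uparrow}]\ge [p,q]$.

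Assume henceforth that $[q{\uparrow}]<1$ whenever $[p,q]>0$, yet $[p{\uparrow}]>0$. By Proposition~\ref{prop-X-BSCC}, almost every divergent run enters and eventually stays inside a single BSCC of~$\X$, so some BSCC~$\B$ is reached from~$p$ through positive-counter configurations and supports divergence from within~$\B$. BSCCs of non-positive trend cannot support divergence (for $t<0$ the martingale forces the counter to drift down and almost surely hit~$0$, and for $t=0$ Proposition~\ref{prop-term-inf} gives almost-sure termination since $\pre^*(q(0))$ is infinite in our setting), so the trend of~$\B$ satisfies $t>0$. The strong connectivity of~$\B$ in~$\X$ and the fact that runs in the divergence event visit every state of~$\B$ infinitely often let me pick $q\in\B$ with $\vv_q=\vmax$ (the maximum taken over~$\B$) and $[p,q]>0$; it then suffices to prove $[q{\uparrow}]\ge t^3/(12(2(\vmax-\vmin)+4)^3)$, whence the backbone inequality delivers conclusion~2.

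Restricting the pOC to~$\B$, I invoke Proposition~\ref{prop-martingale} to obtain the martingale $\ms{i}=\cs{i}+\vv_{\ps{i}}-i\cdot t$ with $\ms{0}=1+\vmax$ and one-step increment bound $D:=\vmax-\vmin+t+1\le\vmax-\vmin+2$ (using $|t|\le 1$), so in particular $2D\le 2(\vmax-\vmin)+4$. Writing $T_0:=\inf\{i:\cs{i}=0\}$, the equality $\cs{T_0}=0$ forces $\ms{0}-\ms{T_0}\ge 1+T_0\cdot t$, so Azuma's inequality gives $\calP(T_0=i)\le\exp\bigl(-(1+it)^2/(2iD^2)\bigr)$ for $i\ge 1$. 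I then decompose divergence as (i)~reach a level $K\asymp(\vmax-\vmin)/t$ before hitting~$0$, and (ii)~never return to~$0$ from level~$K$. Step~(i) is handled by optional stopping at $\tau:=\inf\{i:\cs{i}\in\{0,K\}\}$: $\E[\ms{\tau}]=\ms{0}$ together with the bound $\ms{\tau}\le\cs{\tau}+\vmax$ yields the gambler's-ruin lower bound $\calP(\cs{\tau}=K)\ge 1/K$. Step~(ii) uses Azuma from a generic starting configuration $r(K)$ with the splitting $(K-(\vmax-\vmin)+it)^2\ge (K-(\vmax-\vmin))^2+i^2t^2$ inside the exponent, producing a non-return lower bound polynomial in $t/D$ once $K$ has that order. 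Multiplying the two estimates and tracking constants through $2D\le 2(\vmax-\vmin)+4$ is meant to yield the stated cubic bound.

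The main obstacle is the book-keeping of Step~(ii): the tail sum $\sum_{i\ge 1}\exp\bigl(-L^2/(2iD^2)-it^2/(2D^2)\bigr)$ with $L:=K-(\vmax-\vmin)$ must be controlled tightly enough---via a Laplace-method splitting or the inequality $(a+b)^2\ge 4ab$ applied inside the exponent---so that the non-return probability is bounded below on the correct polynomial scale in $t/D$, and the precise numerical constants $12$, $2$, and $4$ in the denominator emerge only after combining this with the gambler's-ruin bound of Step~(i). Each of the three factors of~$t$ in the final numerator then corresponds to one factor of $t/(2D)$ coming either from reaching level~$K$ or from the Azuma-based non-return estimate.
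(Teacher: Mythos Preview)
Your overall architecture coincides with the paper's: reduce to a BSCC~$\B$ with positive trend, pick $q\in\B$ with $\vv_q=\vmax$, and bound $[q{\uparrow}]$ from below by composing a gambler's-ruin estimate (optional stopping on the martingale of Proposition~\ref{prop-martingale}) with an Azuma-based non-return bound from a high level. The paper proves exactly these two auxiliary facts as separate lemmata and multiplies them.

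The genuine gap is your choice of threshold $K\asymp(\vmax-\vmin)/t$. With this~$K$, Step~(ii) yields nothing via Azuma. Writing $D:=\va+t+1$ and $a:=\exp\bigl(-t^2/(2D^2)\bigr)$, the Azuma tail bound from level~$K$ gives
\[
 [r(K){\downarrow}]\ \le\ \sum_{i\ge K} a^i\ =\ \frac{a^{K}}{1-a}
 \ \asymp\ \frac{2D^2}{t^2}\,\exp\!\left(-\frac{K t^2}{2D^2}\right)
 \ \asymp\ \frac{2D^2}{t^2}\,\exp\!\left(-\frac{t}{2D}\right),
\]
and for small $t/D$ the right-hand side exceeds~$1$, so no non-return lower bound survives. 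Neither of your proposed splittings rescues this: $(L+it)^2\ge L^2+i^2t^2$ simply reproduces $a^K/(1-a)$ after dropping the first factor, while $(L+it)^2\ge 4Lit$ yields a term $\exp(-2Lt/D^2)$ independent of~$i$, whose infinite sum diverges. Thus your plan of extracting one factor of $t/D$ from Step~(i) and two from Step~(ii) cannot be realised with Azuma at this threshold.

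The paper's remedy is to push the level much higher: take $b$ the least integer $\ge 6(\va+t+1)^3/t^3$. One then checks directly that $a^{b}/(1-a)\le 1/2$, so Step~(ii) contributes a flat factor~$1/2$, while Step~(i) via optional stopping gives
\[
 \calP(\text{reach level }b\text{ before }0)\ \ge\ \frac{1}{b+1+\va}\ \ge\ \frac{t^3}{6(2\va+4)^3}\,.
\]
All three powers of~$t$ therefore come from Step~(i), and the product yields $t^3/\bigl(12(2\va+4)^3\bigr)$. With this single change of threshold, the rest of your argument goes through.
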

Hence, denoting the relative precision $\varepsilon R^3/8(c+1)^2$
of Proposition~\ref{prop-visiting-approx} by $\delta$, we obtain
that $\log(1/\delta)$ is bounded by a polynomial in $|\A|$ and
$\log(1/\varepsilon)$. Further, the transition probabilities
of $\G$ can be approximated up to the relative error $\delta$
in time polynomial in $|\A|$ and $\log(1/\varepsilon)$ by approximating
the termination probabilities of~$\A$ (see
Proposition~\ref{prop-termination}). This proves
Theorem~\ref{thm-omega-regular}.


\section{Experimental results, future work}
\label{sec-concl}

We have implemented a prototype tool in the form of a Maple worksheet%
\footnote{Available at {\scriptsize
    \texttt{http://www.comlab.ox.ac.uk/people/stefan.kiefer/pOC.mws}}.}%
, which allows to compute the termination probabilities of pOC, as
well as the conditional expected termination times.  Our tool employs
Newton's method to approximate the termination probabilities within
a sufficient accuracy so that the expected termination time is computed
with absolute error (at most) one by solving
the linear equation system from Section~\ref{subsec:fintermtime}.

We applied our tool to the pOC from Fig.~\ref{fig-and-or-model} for
various values of the parameters.
Fig.~\ref{fig:numbers} shows the results. We also show the
associated termination probabilities, rounded to three digits.
We write $[a{\downarrow}0]$ etc.\ to abbreviate
$[(\textit{and,init}){\downarrow}(\textit{or,return,0})]$ etc.,
 and $[a{\downarrow}]$ for $[a{\downarrow}0] + [a{\downarrow}1])$.

\begin{figure}[t]
\begin{center}
\begin{tabular}{l@{\quad}|@{\quad}r@{\quad}r@{\quad}r@{\quad}r@{\quad}r}
  & $[a{\downarrow}]$ & $[a{\downarrow}0]$ & $[a{\downarrow}1]$ & $E[a{\downarrow}0]$ & $E[a{\downarrow}1]$ \\
\hline
 $z = 0.5, y = 0.4, x_a = 0.2, x_o = 0.2$ & 0.800 & 0.500 & 0.300 & 11.000 & 7.667 \\
 $z = 0.5, y = 0.4, x_a = 0.2, x_o = 0.4$ & 0.967 & 0.667 & 0.300 & 104.750 & 38.917 \\
 $z = 0.5, y = 0.4, x_a = 0.2, x_o = 0.6$ & 1.000 & 0.720 & 0.280 & 20.368 & 5.489 \\
 $z = 0.5, y = 0.4, x_a = 0.2, x_o = 0.8$ & 1.000 & 0.732 & 0.268 & 10.778 & 2.758 \\
\hline
 $z = 0.5, y = 0.5, x_a = 0.1, x_o = 0.1$ & 0.861 & 0.556 & 0.306 & 11.400 & 5.509 \\
 $z = 0.5, y = 0.5, x_a = 0.2, x_o = 0.1$ & 0.931 & 0.556 & 0.375 & 23.133 & 20.644 \\
 $z = 0.5, y = 0.5, x_a = 0.3, x_o = 0.1$ & 1.000 & 0.546 & 0.454 & 83.199 & 111.801 \\
 $z = 0.5, y = 0.5, x_a = 0.4, x_o = 0.1$ & 1.000 & 0.507 & 0.493 & 12.959 & 21.555 \\
\hline
 $z = 0.2, y = 0.4, x_a = 0.2, x_o = 0.2$ & 0.810 & 0.696 & 0.115 & 7.827 & 6.266 \\
 $z = 0.3, y = 0.4, x_a = 0.2, x_o = 0.2$ & 0.811 & 0.636 & 0.175 & 8.928 & 6.783 \\
 $z = 0.4, y = 0.4, x_a = 0.2, x_o = 0.2$ & 0.808 & 0.571 & 0.236 & 10.005 & 7.258 \\
 $z = 0.5, y = 0.4, x_a = 0.2, x_o = 0.2$ & 0.800 & 0.500 & 0.300 & 11.000 & 7.667 \\
\end{tabular}
\end{center}
\caption{Quantities of the pOC from Fig.~\ref{fig-and-or-model}}
\label{fig:numbers}
\end{figure}

We believe that other interesting quantities and numerical
characteristics of pOC,
related to both finite paths and infinite runs, can also be
efficiently approximated using the methods developed in this paper.
An efficient
implementation of the associated algorithms would result in a
verification tool capable of analyzing an interesting class
of infinite-state stochastic programs, which is beyond the
scope of currently available tools limited to finite-state
systems only.



\bibliographystyle{plain}
\bibliography{db,str-short,concur}

\newpage
\appendix
\section{Proofs}
In this section we give the proofs that were omitted in the main
body of the paper. The appendix is structured according to sections
and subsections of the main part.

\subsection{Finiteness of the expected termination
time (Section~\ref{subsec:inftermtime})} \label{app:subsec-inftermtime}

Recall that $\A = (Q,\delta^{=0},\delta^{>0},P^{=0},P^{>0})$ is
a fixed pOC, $\X$ is the underlaying Markov chain of $\A$, and
$A$ is the transition matrix of~$\X$.


This section has two parts.
In the first part (Section~\ref{app:subsec-inftermtime}.1) we provide the proofs that apply specifically
 to the case where $\X$ is strongly connected.
In the second part (Section~\ref{app:subsec-inftermtime}.2) we deal with the general case,
 showing Theorem~\ref{thm-exp-infinite}.

\subsubsection{\ref{app:subsec-inftermtime}.1 Strongly connected $\X$}
\ \\[3mm]

\noindent
Recall that
\begin{itemize}
\item $\valpha \in (0,1]^Q$ is the invariant distribution of~$\X$,
\item $\vs \in \Rset^Q$ is the vector expected counter changes defined
  by
  \[
    \vs_p = \sum_{(p,c,q) \in \delta^{>0}} P^{>0}(p,c,q) \cdot c
  \]
\item $t$ is the trend of $\X$ given by $t = \valpha \vs$.
\end{itemize}

\noindent
A \emph{potential} is any vector $\vv$ that satisfies
$\vs + A \vv = \vv + \vone t$. The intuitive meaning of
a potential~$\vv$ is that, starting in any state $p \in Q$,
the expected counter increase after $i$ steps for large~$i$ is $i t + \vv_p$.
Given a potential~$\vv$, we define $\va := \vmax - \vmin$,
where $\vmax$ and $\vmin$ are the largest and the smallest
component of~$\vv$, respectively. Now we prove two lemmata that together
imply Proposition~\ref{prop-martingale}.
\begin{lemma} \label{lem:v}
 We have the following:
 \begin{itemize}
  \item[(a)]
   Let $W := \vone \valpha$, i.e., each row of~$W$ equals~$\valpha$.
   Let $Z := (I - A + W)^{-1}$.
   The matrix $Z$ exists and the vector $Z \vs$ is a potential.
  \item[(b)]
   Denote by~$\xmin$ the smallest nonzero coefficient of~$A$.
   There exists a potential~$\vv$ with $\va \le 2 |Q| / \xmin^{|Q|}$.
 \end{itemize}
\end{lemma}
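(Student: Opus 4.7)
The plan is to handle (a) by direct linear algebra and (b) by constructing an explicit potential out of expected hitting times.

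For part (a), I would first argue that $Z = (I - A + W)^{-1}$ exists by the standard fact that for an irreducible finite Markov chain, $I - A$ has a one-dimensional kernel spanned by $\vone$, and the rank-one perturbation by $W = \vone\valpha$ fills in the deficiency; this is the classical fundamental matrix. A concrete way to see it is to check $\valpha(I - A + W) = \valpha$ (using $\valpha A = \valpha$ and $\valpha\vone = 1$, so $\valpha W = \valpha$) and observe that on the complement of $\vone\Rset$ the spectrum of $A$ lies strictly inside the unit disk. To verify that $Z\vs$ is a potential, the key identity is $WZ = W$: from $\valpha Z = \valpha$ we get $WZ = \vone(\valpha Z) = \vone\valpha = W$. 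Consequently $(I-A)Z = (I - A + W)Z - WZ = I - W$, and applying this to $\vs$ gives $(I-A)Z\vs = \vs - W\vs = \vs - \vone(\valpha\vs) = \vs - \vone t$, which is exactly the defining relation $\vs + A(Z\vs) = Z\vs + \vone t$ for a potential.

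For part (b), I will construct a potential with controlled spread by fixing a reference state. Pick any $q_0 \in Q$ and let $\tau := \min\{n \ge 0 : X_n = q_0\}$. Define $\vv_{q_0} := 0$ and, for $p \ne q_0$,
\[
\vv_p \;:=\; \E_p\!\left[\sum_{i=0}^{\tau - 1}(\vs_{X_i} - t)\right].
\]
One step of the Markov property gives $\vv_p = (\vs_p - t) + \sum_q A_{pq}\vv_q$ for $p \ne q_0$, so the potential equation is automatic off $q_0$. At $q_0$ I need $\vs_{q_0} - t + \sum_q A_{q_0 q}\vv_q = 0$, and the left-hand side is precisely the expected total excess drift $\sum_{i=0}^{\tau^{+}-1}(\vs_{X_i} - t)$ over one return cycle to $q_0$. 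By Kac's formula this equals $\frac{1}{\valpha_{q_0}}\sum_p \valpha_p(\vs_p - t) = 0$, since $\valpha\vs = t$. Hence $\vv$ is a genuine potential.

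To bound the spread, I will use $|\vs_p - t| \le 2$ and estimate $\E_p[\tau]$ via irreducibility: from any $p$ there is a simple path to $q_0$ of length at most $|Q|-1$, whose probability under $\X$ is at least $\xmin^{|Q|-1}$; iterating independent length-$(|Q|-1)$ blocks gives $\E_p[\tau] \le (|Q|-1)/\xmin^{|Q|-1}$. Therefore $|\vv_p| \le 2(|Q|-1)/\xmin^{|Q|-1}$, and because $\vmin \le \vv_{q_0} = 0 \le \vmax$ we obtain
\[
\va \;=\; \vmax - \vmin \;\le\; 2\max_p |\vv_p| \;\le\; \frac{4(|Q|-1)}{\xmin^{|Q|-1}} \;\le\; \frac{2|Q|}{\xmin^{|Q|}},
\]
the last inequality being a routine bookkeeping step (trivially valid when $\xmin \le 1/2$, and easily checked in the remaining small cases where $|Q|$ is forced to be small by $\sum_q A_{p,q} = 1$).

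The main obstacles I anticipate are: (i) justifying that the equation defining a potential holds at the special state $q_0$, which requires invoking Kac's cycle formula together with $\valpha\vs = t$ rather than a direct one-step computation; and (ii) squeezing the hitting-time estimate into the exact form $2|Q|/\xmin^{|Q|}$, which is a bit of constant chasing rather than a real difficulty.
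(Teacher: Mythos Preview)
Your argument for part~(a) is essentially the paper's: both hinge on the fundamental-matrix identity $(I-A)Z = I - W$ (equivalently $I + AZ = Z + W$), which the paper cites from Kemeny--Snell and you derive directly from $\valpha Z = \valpha$.

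For part~(b) you take a genuinely different route. The paper starts from the potential $Z\vs$ of part~(a), shifts it by a scalar multiple of the Perron vector so that $\vmax = 2|Q|/\xmin^{|Q|}$, and then proves $\vmin \ge 0$ by an induction on the graph distance from the maximising state, using the potential equation $\vv_p = (A\vv)_p + \vs_p - t$ at each step. Your construction is instead the classical Poisson-equation solution via hitting times, with Kac's cycle formula supplying the one nontrivial check at the reference state~$q_0$; the spread bound then falls out of a standard geometric hitting-time estimate. Your approach is more probabilistic and arguably more conceptual (it explains \emph{what} the potential is), while the paper's is purely algebraic and self-contained, needing neither Kac's formula nor any hitting-time machinery.

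One small caveat: your final bookkeeping step is not quite right as stated. The inequality $4(|Q|-1)/\xmin^{|Q|-1} \le 2|Q|/\xmin^{|Q|}$ is equivalent to $\xmin \le |Q|/(2(|Q|-1))$, and when $\xmin > 1/2$ this does \emph{not} force $|Q|$ to be small; it forces every row of~$A$ to have a single nonzero entry, i.e.\ $\xmin = 1$ and $\X$ is a deterministic cycle of arbitrary length. In that case your construction gives $\vv_p$ as a partial sum of at most $|Q|-1$ terms $\vs_j - t$, each of absolute value at most~$2$, so directly $\va \le 2(|Q|-1) < 2|Q| = 2|Q|/\xmin^{|Q|}$. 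So the patch is trivial, but the reason you gave for it is off.
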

\begin{proof}~
 \begin{itemize}
  \item[(a)]
   The matrix $Z := (I - A + W)^{-1}$ exists by \cite[Theorem 5.1.3]{KS60}.
   (The matrix~$Z$ is sometimes called the \emph{fundamental matrix} of the finite Markov chain induced by~$A$.)
   Furthermore, by \cite[Theorem 5.1.3(d)]{KS60} the fundamental matrix~$Z$ satisfies $I + A Z = Z + W$.
   Multiplying with~$\vs$ and setting $\vu := Z \vs$, we obtain
    $\vs + A \vu = \vu + \vone \valpha \vs$; i.e., $Z \vs$ is a potential.
  \item[(b)]
   Let $\vu$ be the potential from~(a); i.e., we have
   \begin{equation}
     (I - A) \vu = \vs - \vone t \,. \label{eq:lem-bound-v-lin-eq}
   \end{equation}
   By the Perron-Frobenius theorem for strongly connected matrices, there exists a positive vector $\vd \in (0,1]^Q$ with $A \vd = \vd$;
    i.e., $(I - A) \vd = \vzero$.
   Observe that $\vu + r \vd$ is a potential for all $r \in \Rset$.
   Choose $r$ such that $\vv := \vu + r \vd$ satisfies $\vmax = 2 |Q| / \xmin^{|Q|}$.
   It suffices to prove $\vmin \ge 0$.
   Let $q \in Q$ such that $\vv_q = \vmax$.
   Define the \emph{distance} of a state $p \in Q$ as the distance of~$p$ from~$q$ in the graph induced by~$A$.
   Note that $q$ has distance~$0$ and all states have distance at most $n-1$, as $A$ is strongly connected.
   We prove by induction that a state~$p$ with distance~$i$ satisfies $\vv_p \ge 2 (n-i) / \xmin^{n-i}$.
   The claim is obvious for the induction base ($i=0$).
   For the induction step, let $p$ be a state with distance~$i+1$ and $i \ge 0$.
   Let $r$ be a state with distance $i$ and $A_{p r} > 0$.
   We have:
   \begin{align*}
    \vv_p & =   (A \vv)_p + \vs_p - t && \text{(as $\vv$ is a potential)} \\
          & \ge (A \vv)_p - 2         && \text{(as $\vs_p, t \in [-1,1]$)} \\
          & \ge \xmin \vv_r - 2       && \text{(as $A_{p r} > 0$ implies $A_{p r} \ge \xmin$)} \\
          & \ge \xmin \cdot 2 (n-i) / \xmin^{n-i} - 2 && \text{(by induction hypothesis)} \\
          & =   2 (n-i) / \xmin^{n-(i+1)} - 2 \\
          & \ge 2 (n-(i+1)) / \xmin^{n-(i+1)}         && \text{(as $\xmin \le 1$)} \,.
   \end{align*}
   This completes the induction step.
   Hence we have $\vmin \ge 0$ as desired.
 \end{itemize}
\qed
\end{proof}

\noindent In the following, the vector~$\vv$ is always a potential.
Recall that $\ps{i}$ and $\cs{i}$ are random variables
which to every run $w \in \run(r(c))$ assign the control state
and the counter value of the configuration $w(i)$, respectively, and
$\ms{i}$ is a random variable defined by
\[
 \ms{i}    = \begin{cases}
              \cs{i} + \vv_{\ps{i}} - i t & \text{if $\cs{j} \ge 1$ for all $0 \le j < i$} \\
              \ms{i-1} & \text{otherwise}
             \end{cases}
\]

\begin{lemma} \label{lem:martingale}
 The sequence $\ms{0}, \ms{1}, \ldots$ is a martingale.
\end{lemma}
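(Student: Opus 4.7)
The plan is to check the two defining properties of a martingale, namely integrability of each $\ms{i}$ and the identity $\E(\ms{i+1}\mid \ms{0},\dots,\ms{i}) = \ms{i}$ almost surely. Integrability is essentially for free: since the counter can change by at most $1$ per step we have $|\cs{i}|\le c+i$ deterministically; $|\vv_{\ps{i}}|$ is bounded by a finite constant because $Q$ is finite; and $it$ is a deterministic number. When the ``otherwise'' clause fires, $\ms{i}$ simply inherits an earlier, already-bounded value. Hence each $\ms{i}$ is deterministically bounded, and therefore integrable.

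For the conditional expectation identity I would split according to the counter history at step~$i$. If $\cs{j}=0$ for some $j<i$, then both $\ms{i}$ and $\ms{i+1}$ follow the ``otherwise'' clause and equal the same stopped value, so the identity is trivial. If $\cs{j}\ge 1$ for all $j<i$ but $\cs{i}=0$, then $\ms{i}$ is given by the first clause while $\ms{i+1}$ falls into the ``otherwise'' clause and by definition equals $\ms{i}$; again trivial. The only nontrivial case is $\cs{j}\ge 1$ for all $j\le i$, in which both $\ms{i}$ and $\ms{i+1}$ are given by the first formula. Conditioning on the natural filtration $\calF_i=\sigma(\ps{0},\cs{0},\dots,\ps{i},\cs{i})$, the rule applied at step $i$ is positive because $\cs{i}\ge 1$, so
\[
   \E[\cs{i+1}-\cs{i}\mid \calF_i] \;=\; \vs_{\ps{i}},
   \qquad
   \E[\vv_{\ps{i+1}}\mid \calF_i] \;=\; (A\vv)_{\ps{i}},
\]
and therefore
\[
   \E[\ms{i+1}-\ms{i}\mid \calF_i]
   \;=\; \vs_{\ps{i}} + (A\vv)_{\ps{i}} - \vv_{\ps{i}} - t
   \;=\; 0
\]
by the potential equation $\vs + A\vv = \vv + \vone\, t$ recorded just before the lemma. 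Since $\calF_i$ is finer than $\sigma(\ms{0},\dots,\ms{i})$, the tower property transfers the identity to the latter $\sigma$-algebra.

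The only conceptual point that I expect to require any care is the boundary behaviour at the first time the counter hits zero: the strict inequality ``$\cs{j}\ge 1$ for all $0\le j<i$'' (rather than $j\le i$) in the definition of $\ms{i}$ is arranged precisely so that at the hitting step the first clause is still in force (with $\cs{i}=0$), while at the very next step the ``otherwise'' clause freezes the value. Once this alignment of the definition with the stopping event is recognised, the hitting step drops out of the verification under the trivial $\ms{i+1}=\ms{i}$ identity, and the positive part of the argument reduces to the single invocation of the potential identity shown above.
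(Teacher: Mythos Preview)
Your proposal is correct and follows essentially the same approach as the paper: condition on the full history (the paper phrases this as conditioning on a basic cylinder $\run(u)$ for a finite path~$u$), split according to whether the counter has already hit zero, and in the active case reduce $\E[\ms{i+1}-\ms{i}\mid\calF_i]=0$ to the potential identity $\vs + A\vv = \vv + \vone\, t$. Your treatment is slightly more explicit than the paper's---you verify integrability and isolate the boundary step $\cs{i}=0$ as a separate trivial case---but the core computation is identical.
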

\begin{proof}
 Fix a path $u \in \fpath(\ps{0}(\cs{0}))$ of length $i \ge 1$. 
 First assume that $\cs{j} \ge 1$ does not hold for all $j \in \{0,\ldots,i-1\}$.
 Then for every run $w \in \run(u)$ we have $\ms{i}(w) = \ms{i-1}(w)$.
 Now assume that $\cs{j} \ge 1$ holds for all $j \in \{0,\ldots,i-1\}$.
 Then we have:
 \begin{align*}
  \E\left[ \ms{i} \;\middle\vert\; \run(u) \right]
  & = \E\left[ \cs{i} + \vv_{\ps{i}} - i t \;\middle\vert\; \run(u) \right] \\
  & = \cs{i-1} + \mathop{\sum_{(\ps{i-1},a,q) \in \delta^{>0}}}_{P^{>0}(\ps{i-1},a,q)=x} x \cdot a
               + \mathop{\sum_{(\ps{i-1},a,q) \in \delta^{>0}}}_{P^{>0}(\ps{i-1},a,q)=x} x \cdot \vv_q - i t \\
  & = \cs{i-1} + \vs_{\ps{i-1}} + \left( A \vv \right)_{\ps{i-1}} - i t \\
  & = \ms{i-1} + \vs_{\ps{i-1}} + \left( A \vv \right)_{\ps{i-1}} - \vv_{\ps{i-1}} - t \\
  & = \ms{i-1} \,,
 \end{align*}
 where the last equality holds because $\vv$ is a potential.
\qed
\end{proof}

\noindent
A direct corollary to Lemma~\ref{lem:v} and Lemma~\ref{lem:martingale}
is the following:

\begin{refproposition}{prop-martingale}
  There is a vector $\vv \in \Rset^{Q}$ such that the stochastic process
  $m^{(1)},m^{(2)},\dots$ defined by
  \[
    \ms{i} = \begin{cases}
            \cs{i} \ +\ \vv_{\ps{i}}  \ -\  i \cdot t &
                 \text{if $\cs{j} \ge 1$ for all $0 \leq j < i$;}\\
            \ms{i-1} & \text{otherwise}
          \end{cases}
  \]
  is a martingale, where $t$ is the trend of $\X$.

  Moreover, the vector $\vv$ satisfies
  $\vmax-\vmin\  \le\  2 |Q| / \xmin^{|Q|}$, where $\xmin$ is the smallest
  positive transition probability in~$\X$, and $\vmax$ and $\vmin$
  are the maximal and the minimal components of $\vv$, respectively.
\end{refproposition}

\noindent
Now we prove the propositions needed to justify Claims~(A) and~(B)
of Section~\ref{subsec:inftermtime}.

\begin{refproposition}{lem:expected-term-bound-prob}
 Let $p(k)$ be an initial configuration, and let $H_i$ be set of all
 runs initiated in $p(k)$ that visit a configuration with zero counter
 in exactly $i$~transitions.
 Let
  \[
   a = \exp\left(- \frac{t^2}{8 (\va + t + 1)^2} \right)\,.
  \]
 Note that $0 < a < 1$. Further, let
  \[
   h = \begin{cases} 2 \cdot \frac{-\va - \cs{0}}{t} & \text{if \ $t < 0$} \\
                      2 \cdot \frac{ \va - \cs{0}}{t} & \text{if \ $t > 0$ .}
        \end{cases}
  \]
 Then for all $i \in \Nset$ with $i \ge h$ we have that $\calP(H_i) \le a^i$.
\end{refproposition}
\begin{proof}
 For all runs in~$H_i$ we have $\ms{i} = \vv_{\ps{i}} - i t$ and so
  \begin{equation}
   \ms{0} - \ms{i} = \cs{0} + \vv_{\ps{0}} - \vv_{\ps{i}} + i t \,. \label{eq:lem-exp-term-bound-mart-hit}
  \end{equation}
 \begin{description}
  \item[Case $t < 0$:]
   By~\eqref{eq:lem-exp-term-bound-mart-hit} we have for $i \ge h$:
   \begin{align*}
    \calP(H_i)
    & = \calP(H_i \ \land \ \ms{i} - \ms{0} = -\cs{0} - \vv_{\ps{0}} + \vv_{\ps{i}} - i t) \\
    & \le \calP(\ms{i} - \ms{0} = -\cs{0} - \vv_{\ps{0}} + \vv_{\ps{i}} - i t) \\
    & \le \calP(\ms{i} - \ms{0} \ge -\cs{0} - \va - i t)  \\
    &  =  \calP\left(\ms{i} - \ms{0} \ge (i - h/2) \cdot (-t)\right) \\
    & \le \calP\left(\ms{i} - \ms{0} \ge (i/2) \cdot (-t)\right) \,.
   \end{align*}
  \item[Case $t > 0$:]
   By~\eqref{eq:lem-exp-term-bound-mart-hit} we have for $i \ge h$:
   \begin{align*}
    \calP(H_i)
    & = \calP(H_i \ \land \ \ms{0} - \ms{i} = \cs{0} + \vv_{\ps{0}} - \vv_{\ps{i}} + i t) \\
    & \le \calP(\ms{0} - \ms{i} = \cs{0} + \vv_{\ps{0}} - \vv_{\ps{i}} + i t) \\
    & \le \calP(\ms{0} - \ms{i} \ge \cs{0} - \va + i t)  \\
    &  =  \calP\left(\ms{0} - \ms{i} \ge (i - h/2) \cdot t \right) \\
    & \le \calP\left(\ms{0} - \ms{i} \ge (i/2) \cdot t \right) \,.
   \end{align*}
 \end{description}
 In each step, the martingale value changes by at most $\va + t + 1$.
 Hence Azuma's inequality (see~\cite{Williams:book}) asserts for $t \ne 0$ and $i \ge h$:
 \begin{align*}
  \calP(H_i)
  & \le \exp  \left(- \frac{(i/2)^2 t^2}{2 i (\va + t + 1)^2} \right) && \text{(Azuma's inequality)} \\
  & = a^i \,.
 \end{align*}
\qed
\end{proof}

\begin{refproposition}{prop-term-inf}
  Assume that $\pre^*(q(0))$ is infinite. Then almost all runs initiated
  in an arbitrary configuration reach $Q(0)$. Moreover, there is
  $k_1\in \Nset$ such that, for all $\ell\geq k_1$, the expected
  length of an honest path from $r(\ell)$ to $Q(0)$ is infinite.
%
\end{refproposition}
\begin{proof}
%
As $\pre^*(q(0))=\infty$ and $\X$ is strongly connected, $Q(0)$ is reachable from every configuration with positive probability.
\tomas{NOT SURE, PLEASE CHECK!!}
Also, recall that $t=0$.
Using strong law of large numbers (see~e.g.~\cite{Williams:book}) and results of~\cite{BBEKW:OC-MDP-arXiv} (in particular Lemma~19),
one can show that $Q(0)$ is reached from any configuration with probability one.

Consider an initial configuration $r(\ell)$ with $\ell + \vv_r > \vmax$.
We will show that the expected length of an honest path from~$r(\ell)$ to~$Q(0)$ is infinite; i.e., we can take $k_1 := \lceil \va + 1\rceil$.
Consider the martingale $m^{(1)},m^{(2)},\dots$ defined in Proposition~\ref{prop-martingale} over $\run(r(\ell))$.
Note that as $t=0$, the definition of the martingale simplifies to
\[
    \ms{i} = \begin{cases}
            \cs{i} \ +\ \vv_{\ps{i}} &
                 \text{if $\cs{j} \ge 1$ for all $0 \leq j < i$;}\\
            \ms{i-1} & \text{otherwise}
          \end{cases}
\]
Observe that $\ms{0} = \ell + \vv_{r}$ and that the martingale value changes by at most~$M := \lceil\va\rceil + 1$ in a single step.
Let us fix $k\in \Nset$ such that $\ell + \vv_r < \vmax + k$.
Define a {\em stopping time}~$\tau$ (see e.g.~\cite{Williams:book})
  which returns the first point in time in which either $m^{(\tau)}\geq \vmax + k$, or $m^{(\tau)}\leq \vmax$.
Observe that $\tau$ is almost surely finite and that $m^{(\tau)}\in
[\vmax - M, \vmax]\cup [\vmax + k, \vmax + k + M]$.
Define $x := \calP(\ms\tau \ge \vmax + k)$.
Then
\begin{eqnarray}\label{eq:expboundmart}
\E[\ms\tau] &\ \leq\ & x \cdot (\vmax+k+M)  + (1-x) \cdot \vmax = \vmax + x \cdot (k+M)
\end{eqnarray}
and by the optional stopping theorem (see e.g.~\cite{Williams:book}),
\begin{equation}\label{eq:expstart}
\E[\ms\tau]\ =\ \E[\ms0]\ =\ \ell+\vv_r \,.
\end{equation}
By putting the equations~(\ref{eq:expboundmart})~and~(\ref{eq:expstart}) together, we obtain that
\begin{equation}
\calP(\ms\tau \geq \vmax + k)\quad\geq\quad \frac{\ell + \vv_r - \vmax}{k+M} \,. \label{eq:mart-opt-stop-upper}
\end{equation}
Denote by~$T$ the time to hit~$Q(0)$.
We need to show $\E T = \infty$.
For any run~$w$ with $\ms\tau \ge \vmax + k$ we have
 \[
  \cs\tau = \ms\tau - \vv_{\ps\tau} \ge \vmax + k - \vv_{\ps\tau} \ge k \;,
 \]
 hence we have $T \ge k$ for~$w$, as at least $k$ steps are required to decrease the counter value from~$k$ to~$0$.
It follows $\calP(\ms\tau \ge \vmax + k) \le \calP(T \ge k)$.
Hence:
\begin{align*}
 \E T & = \sum_{k \in \Nset} \calP(T \ge k) \ge \sum_{k=\ell + 1}^\infty \calP(T \ge k)  \\
      & \ge \sum_{k=\ell + 1}^\infty \calP(\ms\tau \ge \vmax + k)
       \mathop{\ge}^{\eqref{eq:mart-opt-stop-upper}} \sum_{k=\ell + 1}^\infty \frac{\ell + \vv_r - \vmax}{k+M} = \infty \,.
\end{align*}
\qed
\end{proof}

\begin{refproposition}{prop-pre-closed}
  There is $k_2 \in \Nset$ such that for every configuration
  $r(\ell) \in \pre^*(q(0))$, where $\ell \geq k_2$, we have that
  if $r(\ell) \tran{} r'(\ell')$, then $r'(\ell') \in \pre^*(q(0))$.
\end{refproposition}
\begin{proof}
  We start by observing that $\pre^*(q(0))$ has an ``ultimately periodic''
  structure. For every $i \in \Nseto$, let 
  $\pre(i) = \{ r \in Q \mid r(i) \in  \pre^*(q(0))\}$. Note
  that if $\pre(i) = \pre(j)$ for some $i,j \in \Nseto$, then also
  $\pre(i{+}1) = \pre(j{+}1)$. Let $m_1$ be the least index such that
  $\pre(m_1) = \pre(j)$ for some $j>m_1$, and let $m_2$ be the least
  $j$ with this property. Further, we put $m = m_2 - m_1$. Observe that
  $m_1,m_2 \leq 2^{|Q|}$, and for every $\ell \geq m_2$ we have that
  $\pre(\ell) = \pre(\ell{+}m)$. 

  For every configuration $r(\ell)$ of $\A$, let $C(r(\ell))$ be the
  set of all configurations $r(\ell+i)$ such that $0\leq i < m$ and
  $r \in \pre(\ell{+}i)$. Note that $C(r(\ell))$ has at most $m$ elements,
  and we define the \emph{index} of $r(\ell)$ as the cardinality
  of $C(r(\ell))$. Due the periodicity of $\pre^*(q(0))$,
  we immediately obtain that for every $r(\ell)$ and $j \in \Nseto$, 
  where $\ell \geq m_1$, the index of $r(\ell)$ is the same as the
  index of  $r(\ell{+}j)$.

  Let $k_2 = m_1 + |Q| + 1$, and assume that there is a transition 
  $r(\ell) \tran{} r'(\ell')$ such that $r \in \pre(\ell)$, 
  $r' \not\in \pre(\ell')$, and $\ell \geq k_2$. Then 
  $r(\ell{+}i) \tran{} r'(\ell'{+}i)$ for all $0 \leq i  <m$. Obviously,
  if $r' \in \pre(\ell'{+}i)$, then also $r \in \pre(\ell{+}i)$, which
  means that the index of $r'(\ell')$ is \emph{strictly smaller}
  that the index of $r(\ell)$. Since $\X$ is strongly 
  connected, there is finite path from $r'(\ell')$ to $r(n)$ of length
  at most $|Q|$, where $n \geq m_1$. This means that there is 
  a finite path from $r'(\ell'{+}i)$ to $r(n{+}i)$ for every $0 \leq i <m$.
  Hence, the index of $r'(\ell')$ is at least as large as the index
  of $r(n)$. Since the indexes of $r(n)$ and $r(\ell)$ are the same,
  we have a contradiction.
\qed
\end{proof}


\subsubsection{\ref{app:subsec-inftermtime}.2 General Case}
\ \\[3mm]

\begin{lemma} \label{lem:hitting-time-finite-chain}
 Consider a finite Markov chain on a set~$Q$ of states with $|Q| = n$. 
 Let $x$ denote the smallest nonzero transition probability in the chain.
 Let $p \in Q$ be any state and $S \subseteq Q$ any subset of~$Q$.
 Define the random variable $T$ on runs starting in~$p$ by
  \[
   T := \begin{cases} k & \text{if the run hits a state in~$S$ for the first time after exactly $k$ steps} \\
                      \mathit{undefined} & \text{if the run never hits a state in~$S$ .}
        \end{cases}
  \]
 We have $\calP(T \ge k) \le 2 c^k$ for all $k \ge n$, where $c := \exp(-x^n/n)$.
\end{lemma}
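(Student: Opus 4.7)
Plan: The approach is a standard block-decay argument refined by a tight numerical estimate. First I reduce to a pure survival probability. Let $R$ be the set of states from which $S$ is reachable. Any run contributing to $\calP(T \ge k)$ must eventually hit $S$, so it cannot ever leave $R$; in particular $X_0, \ldots, X_{k-1} \in R \setminus S$. Hence $\calP(T \ge k) \le u_k$, where $u_k := \calP_p(X_0, \ldots, X_{k-1} \in R \setminus S)$.

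Next I establish a per-block decay. For any $q \in R$, a shortest path from $q$ to $S$ has length at most $n-1$ (since $|Q| = n$) and probability at least $x^{n-1} \ge x^n$, so $\calP_q(T \ge n+1) \le 1 - x^n$. Applying the Markov property at times $0, n, 2n, \ldots$ and iterating, a routine induction on $j$ yields $u_{jn} \le (1-x^n)^j$ for $j \ge 1$. By monotonicity of $u$ we obtain, for $k \ge n$, the binomial bound $u_k \le (1-x^n)^{\lfloor k/n \rfloor}$.

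Finally I convert to the exponential form $2\,c^k$. Writing $k = jn + r$ with $j = \lfloor k/n \rfloor \ge 1$ and $0 \le r < n$, and using the second-order Taylor estimate $\log(1-y) \le -y - y^2/2$ (valid on $[0,1]$), one gets $(1-x^n)^j \le \exp(-jx^n - jx^{2n}/2)$, whence
\[
\frac{(1-x^n)^j}{2\,c^k} \ \le\ \frac{1}{2}\exp\!\left(\frac{r\,x^n}{n} - \frac{j\,x^{2n}}{2}\right).
\]
Setting $y := x^n \in [0,1]$, the exponent is bounded by $(n-1)y/n - y^2/2$, a concave quadratic maximized on $[0,1]$ at $y = (n-1)/n$ with value $(n-1)^2/(2n^2) \le 1/2$. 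Hence the ratio is at most $e^{1/2}/2 < 1$, giving $u_k \le 2\,c^k$. The hardest step will be this final constant: the naive estimate $\log(1-y) \le -y$ yields only a factor of order $e$, because the defect $\exp(r\,x^n/n)$ coming from the floor function cannot be absorbed; the quadratic refinement contributes the extra $-j\,x^{2n}/2$ that exactly dominates this defect and recovers the sharp constant~$2$.
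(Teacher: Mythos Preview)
Your proof is correct and rests on the same block-decay idea as the paper: from any state that can still reach~$S$, one resolves (either hits~$S$ or becomes unable to reach it) within at most $n-1$ steps with probability at least $x^{n-1}\ge x^n$, giving a geometric tail of the form $(1-x^n)^{\lfloor k/n\rfloor}$ (the paper uses blocks of size $n-1$ and gets $(1-x^n)^{\lfloor (k-1)/(n-1)\rfloor}$, which is the same up to harmless constants).

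The only real difference is in how the factor~$2$ is obtained. You work with the second-order estimate $\log(1-y)\le -y-y^2/2$ and optimise $(n-1)y/n - y^2/2$ over $y\in[0,1]$; this is valid and indeed yields the constant $\sqrt{e}/2<1$ you need. The paper's route is simpler: it first disposes of the case $x=1$ (then the walk is deterministic and $\calP(T\ge n)=0$), and for $x<1$ observes that necessarily $x\le 1/2$, because outgoing probabilities sum to~$1$ and hence the smallest nonzero one cannot exceed~$1/2$. This gives $1/(1-x^n)\le 2$ for free, so the first-order bound $\log(1-y)\le -y$ already yields
\[
(1-x^n)^{k/n-1}\ =\ \frac{1}{1-x^n}\bigl((1-x^n)^{1/n}\bigr)^k\ \le\ 2\,c^k.
\]
So the ``hardest step'' you flagged has a one-line workaround via $x\le 1/2$; your quadratic refinement is correct but not needed here.
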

\begin{proof}
 If $x=1$ then all states that are visited are visited after at most $n-1$ steps and hence $\calP(T \ge n) = 0$.
 Assume $x < 1$ in the following.
 Since for each state the sum of the probabilities of the outgoing edges is~$1$, we must have $x \le 1/2$.
 Call \emph{crash} the event of, within the first $n-1$ steps, either hitting~$S$ or some state~$r \in Q$ from which $S$ is not reachable.
 The probability of a crash is at least $x^{n-1} \ge x^n$, regardless of the starting state.
 Let $k \ge n$.
 For the event where $T \ge k$, a crash has to be avoided at least $\lfloor \frac{k-1}{n-1} \rfloor$ times; i.e.,
 \[
  \calP(T \ge k) \le (1-x^n)^{\lfloor \frac{k-1}{n-1} \rfloor} \,.
 \]
 As $\lfloor \frac{k-1}{n-1} \rfloor \ge \frac{k-1}{n-1} - 1 \ge \frac{k}{n} - 1$, we have
 \begin{align*}
  \calP(T \ge k)
  & \le \frac{1}{1-x^n} \cdot \left((1-x^n)^{1/n}\right)^k
    \le 2 \cdot \left((1-x^n)^{1/n}\right)^k \\
  &  =  2 \cdot \exp\left(\frac1n \log(1-x^n)\right)^k
    \le 2 \cdot \exp\left(\frac1n \cdot (-x^n)\right)^k
    = 2 \cdot c^k \,.
 \end{align*}
\qed
\end{proof}

\newcommand{\Rs}[1]{R^{(#1)}}%
\begin{lemma} \label{lem:etime-case-C}
 Let $p, q \in Q$ such that $[p{\downarrow}q] > 0$ and $q$ is not in a BSCC of~$\X$.
 Then
 \[
  E(p{\downarrow}q) \le \frac{5 |Q|}{\xmin^{|Q| + |Q|^3}} \,.
 \]
\end{lemma}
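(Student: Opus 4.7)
The plan is to apply Lemma~\ref{lem:hitting-time-finite-chain} to the finite chain~$\X$ in order to show that $\calP(R_{p{\downarrow}q}\ge k)$ decays geometrically, and then sum the tail. Define $\pre_\X(q):=\{r\in Q\mid q\text{ is reachable from }r\text{ in }\X\}$ and $\mathcal{T}:=Q\setminus\pre_\X(q)$.

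First I would check that $\mathcal{T}$ is reachable in~$\X$ from every state: because $q$ is not in a BSCC, the SCC of~$q$ has an outgoing edge to a state outside this SCC, and such a state cannot return to~$q$ (otherwise it would lie in the SCC of~$q$), so it lies in~$\mathcal{T}$. From any $r\in\pre_\X(q)\setminus\{q\}$ one walks $r\to^*q\to^*\mathcal{T}$, and states of~$\mathcal{T}$ are trivially in~$\mathcal{T}$.

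The key step is the event inclusion $\{R_{p{\downarrow}q}\ge k\}\subseteq\{r_0,\dots,r_{k-1}\notin\mathcal{T}\}$. On the left event the counter stays positive for the first $k-1$ steps, so those transitions use positive rules and $r_0,\dots,r_{k-1}$ is an $\X$-path starting at~$p$; moreover each $r_j$ with $j<k$ must lie in $\pre_\X(q)$, for otherwise no honest continuation from $r_j$ can ever reach~$q$, contradicting eventual termination at $q(0)$. Aggregating the positive-rule probabilities into the kernel~$A$ of~$\X$ and summing over admissible $\X$-paths then yields
\[
\calP(R_{p{\downarrow}q}\ge k)\ \le\ \calP_\X(\tau_\mathcal{T}\ge k),
\]
where $\tau_\mathcal{T}$ denotes the first hitting time of~$\mathcal{T}$ in the $\X$-walk from~$p$.

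Lemma~\ref{lem:hitting-time-finite-chain} now gives $\calP_\X(\tau_\mathcal{T}\ge k)\le 2c^k$ for $k\ge|Q|$ with $c=\exp(-\xmin^{|Q|}/|Q|)$. Using $1-e^{-x}\ge x/2$ for $x\in[0,1]$ to bound $1-c\ge\xmin^{|Q|}/(2|Q|)$, the tail sum is
\[
E(p{\downarrow}q)\cdot[p{\downarrow}q]\ =\ \sum_{k\ge 1}\calP(R_{p{\downarrow}q}\ge k)\ \le\ |Q|+\frac{2}{1-c}\ \le\ \frac{5|Q|}{\xmin^{|Q|}},
\]
and dividing by $[p{\downarrow}q]\ge\xmin^{|Q|^3}$ (Proposition~\ref{prop:termprobs}) yields the claimed bound. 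The main subtlety will be the probabilistic comparison $\calP(R_{p{\downarrow}q}\ge k)\le\calP_\X(\tau_\mathcal{T}\ge k)$: the event containment is easy once the honest-path characterization is unpacked, but passing from the pOC measure to the $\X$-measure requires the observation that restricting the counter to remain positive can only shrink the pOC probability of a given control-state sequence below the unconstrained $\X$-product $\prod_j A_{r_j,r_{j+1}}$.
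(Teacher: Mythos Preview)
Your argument is correct and follows the same overall architecture as the paper: bound $\calP(R_{p{\downarrow}q}\ge k)$ by a hitting-time tail in the finite chain~$\X$, invoke Lemma~\ref{lem:hitting-time-finite-chain} for geometric decay, sum, and divide by the lower bound on $[p{\downarrow}q]$ from Proposition~\ref{prop:termprobs}. The numerical bookkeeping is identical.

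The one substantive difference is the choice of target set. The paper takes $S=\{q\}$ and asserts a ``probability-preserving mapping'' sending $\{R_{p{\downarrow}q}=k\}$ into $\{\widehat{R}=k\}$, where $\widehat{R}$ is the first hitting time of~$q$ in~$\X$. You instead take $S=\mathcal{T}=Q\setminus\pre_\X(q)$ and compare against~$\tau_{\mathcal{T}}$. Your route is in fact the more careful one: the paper's mapping claim is problematic as stated, because a run with $R_{p{\downarrow}q}=k$ may pass through control state~$q$ with positive counter at some step $j<k$, so its $\X$-projection has $\widehat{R}=j<k$ and the pointwise inequality $\calP(R_{p{\downarrow}q}=k)\le\calP(\widehat{R}=k)$ can fail. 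Your event inclusion $\{R_{p{\downarrow}q}\ge k\}\subseteq\{r_0,\dots,r_{k-1}\notin\mathcal{T}\}\cap\{c_0,\dots,c_{k-1}\ge 1\}$ avoids this issue entirely, and your closing remark about why the pOC probability of a constrained control-state sequence is dominated by the $\X$-product is exactly the observation that makes the measure comparison go through. The verification that $\mathcal{T}$ is reachable from every state (using that $q$ is not in a BSCC) is also needed, since otherwise $\calP_\X(\tau_{\mathcal{T}}\ge k)$ would not decay; you handle this correctly.
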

\begin{proof}
 Consider the finite Markov chain~$\X$.
 Define, for runs in~$\X$ starting in~$p$, the random variable $\widehat{R}$ as the time to hit~$q$,
  and set $\widehat{R} := \mathit{undefined}$ for runs that do not hit~$q$.
 There is a straightforward probability-preserving mapping that maps runs in~$\M_\A$ with $R_{p{\downarrow}q} = k$ to runs in~$\X$ with $\widehat{R} = k$.
 Hence, $\calP(R_{p{\downarrow}q} = k) \le \calP(\widehat{R} = k)$ for all $k \in \Nset_0$ and so
 \begin{align*}
  E(p{\downarrow}q) \cdot [p{\downarrow}q]
  &  =  \sum_{k \in \Nset_0} \calP(R_{p{\downarrow}q} = k) \cdot k
    \le \sum_{k \in \Nset_0} \calP(\widehat{R} = k) \cdot k \\
  &  =  \sum_{k \in \Nset} \calP(\widehat{R} \ge k)
    \le \sum_{k=1}^{|Q|} 1 + \sum_{k=0}^\infty 2 c^k  = |Q| + \frac{2}{1 - c} && \text{(Lemma~\ref{lem:hitting-time-finite-chain})\,.}
\intertext{We have $1-c = 1 - \exp(-\xmin^{|Q|}/|Q|) \ge \xmin^{|Q|}/(2 |Q|)$, hence}
  E(p{\downarrow}q) \cdot [p{\downarrow}q]
  & \le |Q| + \frac{4 |Q|}{\xmin^{|Q|}} \le \frac{5 |Q|}{\xmin^{|Q|}} \,.
 \end{align*}
 As $[p{\downarrow}q] \ge \xmin^{|Q|^3}$ by Proposition~\ref{prop:termprobs}, it follows
 \[
  E(p{\downarrow}q) \le \frac{5 |Q|}{\xmin^{|Q| + |Q|^3}} \,.
 \]
\qed
\end{proof}

\begin{lemma} \label{lem:etime-case-D}
 Let $p, q \in Q$ such that $[p{\downarrow}q] > 0$ and $q$ is in a BSCC with trend $t \ne 0$.
 Then
 \[
  E(p{\downarrow}q) \le 85000 \cdot \frac{|Q|^6}{\xmin^{5 |Q| + |Q|^3} \cdot t^4} \,.
 \]
\end{lemma}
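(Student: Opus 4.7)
The plan is to combine the exponential-tail estimate of Proposition~\ref{lem:expected-term-bound-prob} with a first-passage decomposition at the moment the run enters the BSCC $\B$ of $\X$ containing~$q$. On the event $\run(p{\downarrow}q)$ the counter stays positive until the final transition into $q(0)$, so up to that point the control states evolve exactly as in~$\X$; since $q\in\B$ and $\B$ is a BSCC of~$\X$, every such trajectory must enter $\B$ at some first time~$\sigma$ and then remain in~$\B$ until absorption at $q(0)$. I would therefore split
\[
E(p{\downarrow}q)\cdot [p{\downarrow}q] \ =\ \E\!\left[\sigma\cdot 1_{\run(p{\downarrow}q)}\right] \ +\ \E\!\left[(R_{p{\downarrow}q}-\sigma)\cdot 1_{\run(p{\downarrow}q)}\right]
\]
and treat the two summands separately.

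The first summand is dominated by the expected hitting time of any BSCC of $\X$ starting at~$p$. Lemma~\ref{lem:hitting-time-finite-chain} gives an exponential tail for this hitting time, which immediately yields both a mean and a second-moment estimate of polynomial order in $|Q|/\xmin^{|Q|}$.

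For the second summand I would condition on the configuration $r(c)$ reached at time~$\sigma$ (with $r\in\B$ and $c\le 1+\sigma$, since the counter changes by at most one per step) and invoke the strong Markov property, reducing the task to bounding $E(r(c){\downarrow}q)\cdot [r(c){\downarrow}q]$ with $r$ in the strongly connected sub-chain~$\B$. Applying Proposition~\ref{lem:expected-term-bound-prob} \emph{inside}~$\B$, with the martingale of Proposition~\ref{prop-martingale} built from the potential~$\vv$ of~$\B$ (whose span satisfies $\va\le 2|Q|/\xmin^{|Q|}$), one obtains $[r(c){\downarrow}q,i]\le a^i$ for $i\ge h(c)$, where $a=\exp(-t^2/(8(\va+t+1)^2))$ and $h(c)=O((\va+c)/|t|)$. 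Summing by tails gives $E(r(c){\downarrow}q)\cdot [r(c){\downarrow}q]\le \tfrac12 h(c)^2+1/(1-a)^2$; averaging $h(c)^2$ over the law of~$\sigma$ invokes the second-moment bound from the previous step, while $1/(1-a)^2 = O((\va+2)^4/t^4)$ provides the dominant $1/t^4$ contribution.

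Assembling the two estimates, dividing by $[p{\downarrow}q]\ge \xmin^{|Q|^3}$ (Proposition~\ref{prop:termprobs}), and substituting $\va\le 2|Q|/\xmin^{|Q|}$ yields a bound of the claimed form, with the loose numerical constants absorbed into~$85000$. The main technical nuisance is that the decay threshold~$h$ of Proposition~\ref{lem:expected-term-bound-prob} depends on the (random) counter value at the entry into~$\B$, which forces one to control not just $\E[\sigma]$ but also~$\E[\sigma^2]$; this is what ultimately pushes the $\xmin$-exponent up to $5|Q|+|Q|^3$ in the stated bound. Everything else reduces to routine estimates once the first-passage decomposition is in place.
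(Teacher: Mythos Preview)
Your proposal is correct and follows essentially the same route as the paper: the paper also splits $R_{p\downarrow q}$ at the first entry into the BSCC~$\B$, bounds the pre-$\B$ part via Lemma~\ref{lem:hitting-time-finite-chain} applied to~$\X$, and bounds the post-$\B$ part by conditioning on the counter value at entry (which is at most one plus the entry time) and applying Proposition~\ref{lem:expected-term-bound-prob} within~$\B$, summing the resulting $h(k_1)^2 + a/(1-a)^2$ against the exponential tail of the entry time. One small remark: the extra factor of $\xmin^{-|Q|}$ that brings the exponent to $5|Q|+|Q|^3$ in the paper actually comes from a mildly wasteful step (bounding $\sum_{k_1}\calP(\Rs1=k_1)$ by $\sum_{k_1}2c^{k_1}$ even against the constant term $a/(1-a)^2$) rather than from the second moment of~$\sigma$; your accounting would in fact yield the slightly sharper exponent $4|Q|+|Q|^3$, which of course still proves the stated bound.
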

\begin{proof}
 Let $B$ denote the BSCC of~$q$.
 For a run $w \in \run(p{\downarrow}q)$, define $\Rs1(w)$ as the time to hit~$B$, and $\Rs2(w)$ as the time to reach $q(0)$ after hitting~$B$.
 For other runs~$w$ let $\Rs1(w) := \mathit{undefined}$ and $\Rs2(w) := \mathit{undefined}$.
 Note that $R_{p{\downarrow}q}(w) = \Rs1(w) + \Rs2(w)$ whenever $\Rs1(w)$ and $\Rs2(w)$ are defined.
 We have:
 \begin{align*}
  E(p{\downarrow}q) \cdot [p{\downarrow}q]
  & = \sum_{k \in \Nset_0} \calP(R_{p{\downarrow}q} = k) \cdot k \\
  & = \sum_{k \in \Nset_0} \calP(\Rs1 + \Rs2 = k) \cdot k \\
  & = \sum_{k_1, k_2 \in \Nset_0} \calP(\Rs1 = k_1 \ \land \ \Rs2 = k_2) \cdot (k_1 + k_2) \\
  & = \sum_{k_1, k_2 \in \Nset_0} \calP(\Rs1 = k_1) \cdot \calP(\Rs2 = k_2 \mid \Rs1 = k_1) \cdot (k_1 + k_2) \\
  & = E_1 + E_2 \,,
 \end{align*}
 where
 \begin{align*}
   E_1 & := \sum_{k_1, k_2 \in \Nset_0} \calP(\Rs1 = k_1) \cdot \calP(\Rs2 = k_2 \mid \Rs1 = k_1) \cdot k_1 \qquad \text{and} \\
   E_2 & := \sum_{k_1, k_2 \in \Nset_0} \calP(\Rs1 = k_1) \cdot \calP(\Rs2 = k_2 \mid \Rs1 = k_1) \cdot k_2 \,.
 \end{align*}
 For a bound on~$E_1$ we have
 \begin{align*}
   E_1 & = \sum_{k_1 \in \Nset_0} \calP(\Rs1 = k_1) \cdot k_1 \cdot \sum_{k_2 \in \Nset_0} \calP(\Rs2 = k_2 \mid \Rs1 = k_1) \\
       & \le \sum_{k_1 \in \Nset_0} \calP(\Rs1 = k_1) \cdot k_1
 \end{align*}
 Consider the finite Markov chain~$\X$.
 Define, for runs in~$\X$ starting in~$p$, the random variable $\widehat{\Rs1}$ as the time to hit~$B$,
  and set $\widehat{\Rs1} := \mathit{undefined}$ for runs that do not hit~$B$.
 There is a straightforward probability-preserving mapping that maps runs in~$\M_\A$ with $\Rs1 = k_1$ to runs in~$\X$ with $\widehat{\Rs1} = k_1$.
 Hence, $\calP(\Rs1 = k_1) \le \calP(\widehat{\Rs1} = k_1)$ for all $k_1 \in \Nset_0$ and so
 \begin{equation} \label{eq:bound-E1}
  E_1  \le \sum_{k_1 \in \Nset_0} \calP(\widehat{\Rs1} = k_1) \cdot k_1
          \quad = \quad \sum_{k_1 \in \Nset} \calP(\widehat{\Rs1} \ge k_1)
          \quad \le \quad \frac{2}{1-c}
 \end{equation}
 with $c$ from Lemma~\ref{lem:hitting-time-finite-chain}.

 For a bound on $E_2$, fix any $k_1 \in \Nset_0$.
 We have:
 \begin{align*}
  & \quad \sum_{k_2 \in \Nset_0} \calP(\Rs2 = k_2 \mid \Rs1 = k_1) \cdot k_2 \\
  & = \sum_{j=0}^{k_1+1} \sum_{k_2 \in \Nset_0} \underbrace{\calP(\Rs2 = k_2 \mid \Rs1 = k_1, \ \cs{0} = j)}_{= \calP(\Rs2 = k_2 \mid \cs{0} = j)}  \cdot  k_2
              \cdot \calP(\cs{0} = j \mid \Rs1 = k_1) \,,
\intertext{%
 where we denote by~$\cs{0}$ the counter value when hitting~$B$.
 In the last equality we used the fact that in each step the counter value can increase by at most~$1$, thus $\Rs1 = k_1$ implies $\cs{0} \le k_1+1$.
 Denote by $m(k_1) \in \{0, \ldots, k_1+1\}$ the value of~$j$ that maximizes
  $\sum_{k_2 \in \Nset_0} \calP(\Rs2 = k_2 \mid \cs{0} = j) \cdot  k_2$.
 Then we can continue:
}
 & \le \sum_{k_2 \in \Nset_0} \calP(\Rs2 = k_2 \mid \cs{0} = m(k_1)) \cdot k_2 \cdot \underbrace{\sum_{j=0}^{k_1+1} \calP(\cs{0} = j \mid \Rs1 = k_1)}_{=1}
\intertext{%
 Denote by $h(\cs{0})$ the $h$ from Lemma~\ref{lem:expected-term-bound-prob}.
 We have $h(m(k_1)) \le 2 \frac{\va + m(k_1)}{|t|} \le 2 \frac{\va + k_1 + 1}{|t|} =: \hat h(k_1)$.
 So we can continue:
}
 & \le \sum_{k_2 = 0}^{\left\lfloor \hat h(k_1)\right\rfloor} k_2 + \sum_{k_2 = \left\lceil\hat h(k_1)\right\rceil}^\infty a^{k_2} \cdot k_2
       \qquad \text{(with $a$ from Proposition~\ref{lem:expected-term-bound-prob})} \\
 & \le \hat h(k_1)^2 + \frac{a}{(1-a)^2}
     = \frac{4 (\va + k_1 + 1)^2}{t^2} + \frac{a}{(1-a)^2} \,.
 \end{align*}
 With this inequality and the random variable $\widehat{\Rs2}$ from above at hand we get a bound on~$E_2$:
  \begin{align*}
   E_2 & = \sum_{k_1\in \Nset_0} \underbrace{\underbrace{\calP(\Rs1 = k_1)}_{\Large \le \calP(\widehat{\Rs1}=k_1)}}_{\le \calP(\widehat{\Rs1}\ge k_1)}
    \cdot {\underbrace{\sum_{k_2 \in \Nset_0} \calP(\Rs2 = k_2 \mid \Rs1 = k_1) \cdot k_2}_{{\Large \le \frac{4 (\va + k_1 + 1)^2}{t^2} + \frac{a}{(1-a)^2}}}}  \\
       & \le \sum_{k_1=0}^{|Q|-1}\left( \frac{4 (\va + k_1 + 1)^2}{t^2} + \frac{a}{(1-a)^2}\right)
           + \sum_{k_1=0}^\infty 2 c^{k_1} \frac{a}{(1-a)^2}
           + \sum_{k_1=0}^\infty 2 c^{k_1} \frac{4 (\va + k_1 + 1)^2}{t^2} \\
       & \le \frac{4 |Q| (\va + |Q|)^2}{t^2} + \frac{2 |Q|}{(1-c)(1-a)^2}
           + \frac{8}{t^2} \sum_{k_1=0}^\infty c^{k_1} (\va + k_1 + 1)^2
 \end{align*}
 The last series can be bounded as follows:
 \begin{align*}
  \sum_{k_1=0}^\infty c^{k_1} (\va + k_1 + 1)^2
  & \le \sum_{k_1=0}^{\left\lfloor \va + 1 \right\rfloor} \left(2 (\va + 1)\right)^2
      + \sum_{k_1=\left\lfloor \va + 1 \right\rfloor + 1}^\infty c^{k_1} \cdot (2 k_1)^2 \\
  & \le 4 (\va + 2)^3 + 4 \sum_{k_1=0}^\infty c^{k_1} \cdot k_1^2
     =  4 (\va + 2)^3 + 4 \frac{c (c+1)}{(1-c)^3} \\
  & \le 4 (\va + 2)^3 + \frac{8}{(1-c)^3}
 \end{align*}
 It follows:
 \begin{equation} \label{eq:bound-E2}
   E_2 \le \frac{4 |Q| (\va + |Q|)^2}{t^2} + \frac{2 |Q|}{(1-c)(1-a)^2}
    + \frac{32}{t^2} \left( (\va + 2)^3 + \frac{2}{(1-c)^3} \right)
 \end{equation}
 Recall the following bounds:
 \begin{align*}
   \va & \le 2 |Q| / \xmin^{|Q|} && \text{(Lemma~\ref{lem:v})} \\
   1-c & = 1 - \exp(-\xmin^{|Q|}/|Q|) \ge \xmin^{|Q|} / (2 |Q|) && \text{(Lemma~\ref{lem:hitting-time-finite-chain})} \\
   1-a & = 1 - \exp\left(- t^2 / \left(8 (\va + 2)^2\right)\right) \ge t^2 / \left(16 (\va + 2)^2\right) && \text{(Proposition~\ref{lem:expected-term-bound-prob})} \\
   [p{\downarrow}q] & \ge \xmin^{|Q|^3} && \text{(Proposition~\ref{prop:termprobs})}
 \end{align*}
 After plugging those bounds into \eqref{eq:bound-E1} and~\eqref{eq:bound-E2} we obtain using straightforward calculations:
 \begin{gather*}
  E_1 \le 4 \frac{|Q|}{\xmin^{|Q|}} \qquad \text{and} \qquad  E_2  \le 84356 \frac{|Q|^6}{\xmin^{5 |Q|} \cdot t^4} \;, \qquad \text{hence} \\
  E(p{\downarrow}q) = \frac{E_1 + E_2}{[p{\downarrow}q]} \le 85000 \cdot \frac{|Q|^6}{\xmin^{5 |Q| + |Q|^3} \cdot t^4} \,.
 \end{gather*}
\qed
\end{proof}

\begin{lemma} \label{lem:pumping-pre-post}
Let $p,q\in Q$.
If $\pre^*(q(0))\cap \post^*(p(1))$ is finite, then
\[
|\pre^*(q(0))\cap \post^*(p(1))|\quad \leq\quad |Q|^2\cdot (|Q|+2)
\]
\end{lemma}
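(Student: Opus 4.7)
The plan is to prove the contrapositive: I will show that if the set $S := \pre^*(q(0)) \cap \post^*(p(1))$ contains a configuration $r(\ell)$ whose counter value $\ell$ exceeds a polynomial threshold in $|Q|$, then $S$ is infinite. The stated cardinality bound then follows by counting the configurations with bounded counter value.

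Given $r(\ell)\in S$, I fix two witness paths: $\pi_1$ from $p(1)$ to $r(\ell)$, and $\pi_2$ from $r(\ell)$ to $q(0)$, both keeping the counter positive (except at the very end of $\pi_2$). The first step is a structural analysis of each path to extract a ``raising'' and a ``lowering'' cycle. On $\pi_1$, for each level $j \in \{1,\ldots,\ell\}$ let $\sigma_j$ be the \emph{last} time on $\pi_1$ at which the counter equals $j$, and let $v_j$ be the state at $\sigma_j$. Because each step changes the counter by at most~$1$ and $\pi_1$ ends at counter $\ell\ge j$, an elementary argument shows that the $\sigma_j$ are strictly increasing in $j$ and the counter stays $\ge j+1$ strictly between $\sigma_j$ and $\sigma_{j'}$ for $j<j'$. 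Pigeonhole on $v_1,\ldots,v_\ell$ then yields, once $\ell>|Q|$, indices $j_1<j_1'$ with $v_{j_1}=v_{j_1'}$, giving a cycle at $v_{j_1}$ that raises the counter by $d_1:=j_1'-j_1\ge 1$ and whose shifted copies keep the counter $\ge j_1\ge 1$. Symmetrically, using \emph{first}-reach times along $\pi_2$ I obtain a state $v'$ and a lowering cycle of height $d_2\ge 1$ whose shifted copies keep the counter $\ge 1$.

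The key step is pumping. For each $m\ge 0$ I splice the two cycles into the concatenated path $\pi_1\cdot\pi_2$: insert the raising cycle $md_2$ times at the occurrence of $v_{j_1}$ on $\pi_1$, and the lowering cycle $md_1$ times at the occurrence of $v'$ on $\pi_2$. The two insertions contribute $+md_1d_2$ and $-md_1d_2$ to the counter, which cancel, and because the cycles were designed so that their shifted versions preserve counter-positivity, the pumped path is a valid counter-positive path from $p(1)$ to $q(0)$ that now passes through $r(\ell+md_1d_2)$. Hence $\{r(\ell+md_1d_2)\mid m\ge 0\}\subseteq S$, contradicting finiteness of $S$.

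Consequently, if $S$ is finite then $\ell$ is polynomially bounded in $|Q|$ for every $r(\ell)\in S$, and a direct configuration count yields the bound $|S|\le |Q|^2(|Q|+2)$. I expect the main obstacle to be the verification that the splicing genuinely produces a counter-positive path: this rests on the precise asymmetric choice of \emph{last}-reach times on $\pi_1$ and \emph{first}-reach times on $\pi_2$, which is exactly what guarantees that each cycle, when applied at a higher counter value, stays above the relevant positivity threshold throughout the modified path.
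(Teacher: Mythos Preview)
Your argument is correct and takes a genuinely different route from the paper's proof. The paper does not work directly with the witness paths; instead it invokes the $\mathcal{P}$-automata constructions of~\cite{EHRS:MC-PDA}: it builds a finite automaton with $|Q|$ states recognising $\pre^*(q(0))$ and one with at most $|Q|+2$ states recognising $\post^*(p(1))$, forms the product automaton with $|Q|(|Q|+2)$ states, and then applies the standard pumping lemma for finite automata over the unary stack alphabet to conclude that if the accepted language is finite, every accepted word has length at most $|Q|(|Q|+2)$. The configuration count then gives $|S|\le |Q|^2(|Q|+2)$.

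Your approach is a direct combinatorial pumping on the two witness paths, extracting a raising loop via last-visit times on~$\pi_1$ and a lowering loop via first-visit times on~$\pi_2$, and balancing their net effects so that the spliced path still runs from $p(1)$ to $q(0)$ while visiting $r(\ell+md_1d_2)$. This is more elementary and self-contained---it needs no external automata-theoretic machinery---and in fact yields a sharper threshold: your pigeonhole only requires $\ell>|Q|$, so finiteness of~$S$ forces $\ell\le |Q|$ for every $r(\ell)\in S$, giving $|S|\le |Q|(|Q|+1)$, comfortably inside the stated bound. The paper's route, in exchange, avoids the careful verification of counter-positivity along the spliced path that you correctly identify as the main technical point; your asymmetric choice of last-reach versus first-reach indices is exactly what makes that verification go through.
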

\begin{proof}
In this proof we use some notions and results of~\cite{EHRS:MC-PDA} (in particular, we use the notion of $\calP$-automata as
defined in Section~2.1 of~\cite{EHRS:MC-PDA}).
Consider the pOC as a (non-probabilistic) pushdown system with one letter stack alphabet, say $\Gamma=\{X\}$
(the counter of height $n$ then corresponds to the stack content $X^n$).

A $\calP$-automaton $\A_{q(0)}$ accepting the set of configurations $\{q(0)\}$ can be defined to have the set of states $Q$, no transitions, and $q$ as the only accepting state. Let $\A_{\mathit{pre}^*}$ be the $\calP$-automaton accepting
$\pre^*(q(0))$ constructed using the procedure from Section~4 of \cite{EHRS:MC-PDA}. The automaton $\A_{\mathit{pre}^*}$ has the same set of states, $Q$, as $\A_{q(0)}$.

A $\calP$-automaton $\A_{p(1)}$ accepting the set of configurations $\{p(1)\}$ can be defined to have the set of states $Q\cup \{p_{acc}\}$, one transition $(p,X,p_{acc})$, and $q_{acc}$ as the only accepting state. Let $\A_{\mathit{post}^*}$ be the automaton accepting
$\post^*(p(1))$ constructed using the procedure from Section~6 of \cite{EHRS:MC-PDA}.
The automaton $\A_{\mathit{post}^*}$ has at most $|Q|+2$ states.

Using standard product construction we obtain a $\calP$-automaton $\A$ accepting $\pre^*(q(0))\cap \post^*(p(1))$, which
has $|Q|\cdot (|Q|+2)$ states.
Now note that if $\pre^*(q(0))\cap \post^*(p(1))$ is finite, then a standard pumping argument for finite automata
implies that the length of every word accepted by $\A$ is bounded by $|Q|\cdot (|Q|+2)$.
It follows that there are only $|Q|^2\cdot (|Q|+2)$ configurations in $\pre^*(q(0))\cap \post^*(p(1))$.
\qed
\end{proof}

\begin{lemma} \label{lem:etime-case-E}
 Let $p,q\in Q$ such that $\pre^*(q(0))\cap \post^*(p(1))$ is finite.
 Then
  \[
   E(p{\downarrow}q) \le E(p{\downarrow}q) \le \frac{15 |Q|^3}{\xmin^{4 |Q|^3}}
  \]
\end{lemma}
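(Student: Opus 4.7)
The plan is to exploit the finiteness of $S := \pre^*(q(0)) \cap \post^*(p(1))$ to reduce the analysis to a first-exit-time problem on a finite state space, and then to divide by the lower bound on $[p{\downarrow}q]$ from Proposition~\ref{prop:termprobs}.

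By Lemma~\ref{lem:pumping-pre-post}, $N := |S| \leq |Q|^2 (|Q|+2) \leq 3|Q|^3$. Any $r(0) \in \pre^*(q(0))$ with $r \neq q$ would give an honest path from $r(0)$ to $q(0)$ whose initial configuration violates honesty; hence $S' := S \setminus \{q(0)\}$ consists entirely of configurations with positive counter. The key structural observation is that, for every $r(k) \in S'$, all intermediate configurations of any honest path from $r(k)$ to $q(0)$ lie in $S'$: they remain in $\pre^*(q(0))$ because they still reach $q(0)$ honestly, and they remain in $\post^*(p(1))$ because $\post^*(p(1))$ is forward-closed under honest transitions from its members. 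A pigeonhole argument on distinct configurations therefore produces a honest path of length at most $N$, using only positive rules; such a path occurs with probability at least $\xmin^N$.

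Now let $\tau$ be the first time the run started at $p(1)$ leaves $S'$. The previous paragraph together with the strong Markov property yields $\calP(\tau > (i{+}1)N \mid \tau > iN) \leq 1 - \xmin^N$, hence $\calP(\tau > iN) \leq (1 - \xmin^N)^i$. Summing by chunks of length $N$,
\[
 \E[\tau] \;=\; \sum_{k=0}^\infty \calP(\tau > k) \;\leq\; N \sum_{i=0}^\infty (1-\xmin^N)^i \;=\; \frac{N}{\xmin^N}\,.
\]
On the event $\run(p{\downarrow}q)$, $\tau$ coincides with $R_{p{\downarrow}q}$ (the run exits $S'$ precisely by entering $q(0)$), while off this event $R_{p{\downarrow}q} = 0$; therefore $\E[R_{p{\downarrow}q}] \leq \E[\tau] \leq N/\xmin^N$. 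Dividing by $[p{\downarrow}q] \geq \xmin^{|Q|^3}$ (Proposition~\ref{prop:termprobs}) and using $N \leq 3|Q|^3$ gives
\[
 E(p{\downarrow}q) \;=\; \frac{\E[R_{p{\downarrow}q}]}{[p{\downarrow}q]} \;\leq\; \frac{3|Q|^3}{\xmin^{4|Q|^3}} \;\leq\; \frac{15|Q|^3}{\xmin^{4|Q|^3}}\,.
\]

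The only delicate point is establishing the uniform escape probability $\xmin^N$ from every state of $S'$ within $N$ steps; this rests on the forward-closure observation above, which confines honest paths to $q(0)$ inside $S' \cup \{q(0)\}$ and lets pigeonhole bound their length by $|S'|+1 \leq N$. Once this is in hand, the geometric tail bound for $\tau$, combined with the uniform lower bound for $[p{\downarrow}q]$ supplied by Proposition~\ref{prop:termprobs}, yields the claim without further work. Unlike in Claims~(A), (B), and~(D), the martingale of Proposition~\ref{prop-martingale} is not required here, because the finiteness of $S$ already trivializes the geometric argument.
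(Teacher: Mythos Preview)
Your argument is correct and follows essentially the same approach as the paper: bound $|S|$ by Lemma~\ref{lem:pumping-pre-post}, observe that runs in $\run(p{\downarrow}q)$ are confined to $S$ until they hit $q(0)$, derive a geometric tail for the exit time via a uniform lower bound on the escape probability, and finally divide by $[p{\downarrow}q]\ge\xmin^{|Q|^3}$ from Proposition~\ref{prop:termprobs}. The paper packages the geometric-tail step by explicitly constructing a finite Markov chain~$\Y$ on $S\cup\{o\}$ and invoking the generic Lemma~\ref{lem:hitting-time-finite-chain}, whereas you argue the same bound directly inside~$\M_\A$; this yields a slightly sharper constant ($3$ instead of $15$) but is otherwise the same proof.
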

\begin{proof}
 We construct a finite Markov chain~$\Y$ as follows.
 The states of~$\Y$ are the states in $\pre^*(q(0))\cap \post^*(p(1)) \cup \{o\}$, where $o$ is a fresh symbol.
 In general, the transitions in~$\Y$ are as in the infinite Markov chain~$\M_\A$, with the following exceptions:
  \begin{itemize}
   \item
    all transitions leaving the set~$\pre^*(q(0))\cap \post^*(p(1))$ are redirected to~$o$;
   \item
    all transitions leading to a configuration $r(0)$ with $r \ne q$ are redirected to~$o$;
   \item
    $o$ gets a probability~$1$ self-loop.
  \end{itemize}
 Let $T$ denote the time that a run in~$\Y$ starting from~$p(1)$ hits~$q(0)$ in exactly $k$ steps.
 This construction of~$\Y$ makes sure that $\calP(T=k) = \calP(R_{p{\downarrow}q} = k)$.
 Note that by Lemma~\ref{lem:pumping-pre-post} the chain~$\Y$ has at most $\ell := 3|Q|^3$ states.
 So we have:
 \begin{align*}
   [p{\downarrow}q] \cdot E(p{\downarrow}q)
   & \le \sum_{k \in \Nset} \calP(R_{p{\downarrow}q} \ge k) 
      =  \sum_{k \in \Nset} \calP(T \ge k) \\
   &  =  \sum_{k=1}^{\ell-1} \calP(T \ge k) + \sum_{k=\ell}^\infty \calP(T \ge k) \\
   & \le \ell + \sum_{k=0}^\infty 2 c^k = \ell + \frac{2}{1-c} && \text{(Lemma~\ref{lem:hitting-time-finite-chain})}
\intertext{We have $1-c = 1 - \exp(-\xmin^{\ell}/\ell) \ge \xmin^{\ell}/(2 \ell)$, hence}
   [p{\downarrow}q] \cdot E(p{\downarrow}q)
   & \le 3 |Q|^3 + \frac{12 |Q|^3}{\xmin^{3 |Q|^3}} \le \frac{15 |Q|^3}{\xmin^{3 |Q|^3}} \,,
 \end{align*}
 and so, by Proposition~\ref{prop:termprobs},
 \[
  E(p{\downarrow}q) \le \frac{15 |Q|^3}{\xmin^{4 |Q|^3}} \,.
 \]
\qed
\end{proof}

By combining Lemmata \ref{lem:etime-case-C}, \ref{lem:etime-case-D} and~\ref{lem:etime-case-E}
 we obtain the following proposition, which directly implies Theorem~\ref{thm-exp-infinite}:
\begin{proposition} \label{prop:grand-bound}
 Let $(p,q) \in T^{>0}$.
 Let $\B$ be the SCC of~$q$ in~$\X$.
 Let $\xmin$ denote the smallest nonzero probability in~$A$.
 Then we have:
 \begin{itemize}
  \item If $\pre^*(q(0)) \cap \post^*(p(1))$ is a finite set, then
   $\displaystyle
    E(p{\downarrow}q) \le 15 |Q|^3 / \xmin^{4 |Q|^3}
   $;
  \item otherwise, if $\B$ is not a BSCC of~$\X$, then
   $\displaystyle
    E(p{\downarrow}q) \le 5 |Q| / \left(\xmin^{|Q| + |Q|^3}\right)
   $;
  \item otherwise, if $\B$ has trend $t \ne 0$, then
   $\displaystyle
    E(p{\downarrow}q) \le 85000 |Q|^6 / \left(\xmin^{5 |Q| + |Q|^3} \cdot t^4\right)
   $.
  \item otherwise, $E(p{\downarrow}q)$ is infinite.
 \end{itemize}
\end{proposition}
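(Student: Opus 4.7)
The plan is to observe that the four listed cases are exhaustive and mutually exclusive, and then to dispatch each finite case to the corresponding quantitative lemma and the remaining case to Claim~(E) from Section~\ref{subsec:inftermtime}. For exhaustiveness, note that for any $(p,q) \in T^{>0}$ the set $\pre^*(q(0)) \cap \post^*(p(1))$ is either finite (case~1) or infinite; in the infinite situation either the SCC~$\B$ of~$q$ fails to be a BSCC of~$\X$ (case~2), or $\B$ is a BSCC with nonzero trend (case~3), or $\B$ is a BSCC with zero trend (case~4). These four options clearly partition the possibilities.

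To dispatch, in case~1 I would invoke Lemma~\ref{lem:etime-case-E} directly, obtaining the bound $15|Q|^3 / \xmin^{4|Q|^3}$. In case~2 the structural hypothesis that $q$ lies outside every BSCC of~$\X$ is exactly what Lemma~\ref{lem:etime-case-C} requires, yielding $5|Q| / \xmin^{|Q| + |Q|^3}$. In case~3 the hypothesis of Lemma~\ref{lem:etime-case-D} is met and gives $85000 |Q|^6 / (\xmin^{5|Q| + |Q|^3} \cdot t^4)$. In case~4 I would appeal to Claim~(E) of Section~\ref{subsec:inftermtime}, which says that whenever $q$ lies in a zero-trend BSCC of~$\X$ and $\pre^*(q(0)) \cap \post^*(p(1))$ is infinite, one has $E(p{\downarrow}q) = \infty$; both hypotheses hold under the ``otherwise'' clause leading into case~4.

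There is no real technical obstacle in this assembly; all the heavy lifting has already been done in Lemmata~\ref{lem:etime-case-C}, \ref{lem:etime-case-D}, \ref{lem:etime-case-E} and in the proof of Claim~(E). The only subtlety worth checking is that the ``otherwise'' prefix of cases~2 and~3 (which adds the information that $\pre^*(q(0)) \cap \post^*(p(1))$ is infinite) does not conflict with the hypotheses of Lemmata~\ref{lem:etime-case-C} and~\ref{lem:etime-case-D}. It does not, because those lemmata make no assumption whatsoever about the intersection $\pre^*(q(0)) \cap \post^*(p(1))$; the extra information is simply unused when they are applied, and the case partition serves only to ensure disjointness of the four clauses of the proposition.
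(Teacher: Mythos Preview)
Your proposal is correct and matches the paper's own argument: the paper simply states that Proposition~\ref{prop:grand-bound} follows ``by combining Lemmata~\ref{lem:etime-case-C}, \ref{lem:etime-case-D} and~\ref{lem:etime-case-E}'', with the fourth (infinite) case being exactly the content of Claim~(E). Your write-up is in fact more explicit than the paper's, since you spell out the case partition and verify that the surplus ``otherwise'' hypotheses are harmless for Lemmata~\ref{lem:etime-case-C} and~\ref{lem:etime-case-D}.
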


\subsection{Efficient approximation of finite expected termination time
(Section~\ref{subsec:fintermtime})}

\noindent
We will use the following theorem from numerical analysis
(see, e.g., \cite{EWY:one-counter}):

\begin{theorem}
\label{thm:error}
  Consider a system of linear equations, $B\cdot \vec{V}=\vec{b}$,
  where $B\in \Rset^{n\times n}$ and $\vec{b}\in \Rset^n$.  Suppose
  that $B$ is regular and $\vec{b}\not = \vec{0}$. Let
  $\vec{V}^*=B^{-1}\cdot\vec{b}$ be the unique solution of this system
  and suppose that $\vec{V}^*\not =
  \vec{0}$. 
  Denote by $\kappa(B)=\norm{B}\cdot \norm{B^{-1}}$ the condition
  number of $B$. Consider a system of equations $(B+{\Delta})\cdot
  \vec{V}=\vec{b}+\vec{\zeta}$ where ${\Delta}\in \Rset^{n\times n}$
  and $\vec{\zeta}\in \Rset^n$.  If
  $\norm{{\Delta}}<\frac{1}{\norm{B^{-1}}}$, then the system
  $(B+{\Delta})\cdot \vec{V}=\vec{b}+\vec{\zeta}$ has a unique
  solution $\vec{V}^*_{p}$.  Moreover, for every $\delta>0$ satisfying
  $\frac{\norm{\Delta}}{\norm{B}}\leq \delta$ and
  $\frac{\norm{\zeta}}{\norm{b}}\leq \delta$ and $4\cdot \delta\cdot
  \kappa(B)<1$ the solution $\vec{V}^*_{p}$ satisfies
\[
\frac{\norm{\vec{V}^* - \vec{V}^*_{p}}}{\norm{\vec{V}^*}}\quad
\leq\quad 4\cdot \delta\cdot \kappa(B)
\]
\end{theorem}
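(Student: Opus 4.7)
The plan is to prove the classical perturbation bound for linear systems using the Neumann series representation of the inverse and a straightforward norm-estimation argument. First I would show that $B+\Delta$ is invertible. Since $\norm{B^{-1}\Delta}\le \norm{B^{-1}}\cdot\norm{\Delta}<1$ by hypothesis, the Neumann series $\sum_{k\ge 0}(-B^{-1}\Delta)^k$ converges to $(I+B^{-1}\Delta)^{-1}$, giving
\[
 (B+\Delta)^{-1}\;=\;(I+B^{-1}\Delta)^{-1}B^{-1},
\qquad
 \norm{(B+\Delta)^{-1}}\;\le\;\frac{\norm{B^{-1}}}{1-\norm{B^{-1}}\cdot\norm{\Delta}}.
\]
Hence the perturbed system has a unique solution $\vec{V}^{*}_{p}=(B+\Delta)^{-1}(\vec{b}+\vec{\zeta})$.

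Next I would derive the error equation. Subtracting $B\vec{V}^{*}=\vec{b}$ from $(B+\Delta)\vec{V}^{*}_{p}=\vec{b}+\vec{\zeta}$ and rearranging gives
\[
 (B+\Delta)(\vec{V}^{*}_{p}-\vec{V}^{*})\;=\;\vec{\zeta}-\Delta\vec{V}^{*},
\]
so that $\norm{\vec{V}^{*}_{p}-\vec{V}^{*}}\le\norm{(B+\Delta)^{-1}}\bigl(\norm{\vec{\zeta}}+\norm{\Delta}\cdot\norm{\vec{V}^{*}}\bigr)$. From $\vec{b}=B\vec{V}^{*}$ we have $\norm{\vec{b}}\le\norm{B}\cdot\norm{\vec{V}^{*}}$, and combined with the hypotheses $\norm{\vec{\zeta}}\le\delta\norm{\vec{b}}$ and $\norm{\Delta}\le\delta\norm{B}$ this yields
\[
 \frac{\norm{\vec{V}^{*}_{p}-\vec{V}^{*}}}{\norm{\vec{V}^{*}}}\;\le\;2\delta\norm{B}\cdot\norm{(B+\Delta)^{-1}}.
\]

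Finally, plugging in the Neumann bound on $\norm{(B+\Delta)^{-1}}$ and using $\norm{\Delta}\le\delta\norm{B}$ together with $\kappa(B)=\norm{B}\cdot\norm{B^{-1}}$ gives
\[
 \frac{\norm{\vec{V}^{*}_{p}-\vec{V}^{*}}}{\norm{\vec{V}^{*}}}\;\le\;\frac{2\delta\kappa(B)}{1-\delta\kappa(B)}.
\]
Under the assumption $4\delta\kappa(B)<1$ we have $\delta\kappa(B)<1/4$, so $1-\delta\kappa(B)>1/2$ and the right-hand side is bounded above by $4\delta\kappa(B)$, which is precisely the claimed estimate.

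I do not expect any serious obstacle: this is a textbook perturbation estimate whose main content is the Neumann expansion of $(I+B^{-1}\Delta)^{-1}$. The only mildly delicate point is handling the two independent perturbations (in $B$ and in $\vec{b}$) simultaneously and making sure both estimates are expressed in terms of the same relative parameter $\delta$; the lower bound $\norm{\vec{V}^{*}}\ge\norm{\vec{b}}/\norm{B}$ is the step that ties these two contributions together and lets them be combined into a single factor of $2\delta\kappa(B)$.
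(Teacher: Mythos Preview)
Your argument is correct and is exactly the standard perturbation estimate for linear systems via the Neumann series. Note, however, that the paper does not prove this theorem at all: it simply quotes it as a known result from numerical analysis (with a citation), so there is no proof in the paper to compare against. Your write-up supplies precisely the textbook derivation the authors are implicitly invoking.
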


\begin{proposition}
\label{prop:error}
Consider a system of linear equations, $C\cdot \vec{W}=\vec{c}$, where
$C\in \Rset^{n\times n}$ and $\vec{c}\in \Rset^n$.  Suppose that $C$
is nonsingular and $\vec{c}\not = \vec{0}$.  Let
$\vec{W}^*=C^{-1}\cdot\vec{c}$ be the unique solution of this system.
Let $\norm{\cdot}$ be the $l_\infty$ norm.
Consider a system $(C+{\calE}) \cdot \vec{W}=\vec{c}$ where
${\calE}\in \Rset^{n\times n}$.  Let $\norm{C} \le u \ge 1$ and
$\norm{C^{-1}} \le v \ge 1$.  If $\norm{{\calE}}<1/v$, then the system
$(C+{\calE})\cdot \vec{W}=\vec{c}$ has a unique solution
$\vec{W}^*_{p}$.  Moreover, if $\norm{\calE} \le \delta < 1 / (4 u
v)$, then $\vec{W}^*_{p}$ satisfies
  \[
  \frac{\norm{\vec{W}^* - \vec{W}^*_{p}}}{\norm{\vec{W}^*}}\quad
  \leq\quad \delta\cdot 4 u v
  \]
\end{proposition}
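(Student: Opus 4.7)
The plan is to invoke Theorem~\ref{thm:error} directly, identifying $B$ with $C$, $\vec b$ with $\vec c$, $\Delta$ with $\calE$, and $\vec\zeta$ with $\vec 0$. The only subtlety is a change of convention: Theorem~\ref{thm:error} normalizes the perturbation by $\|B\|$ (and by $\|\vec b\|$), whereas Proposition~\ref{prop:error} bounds $\|\calE\|$ in absolute terms and has no right-hand-side perturbation. Accordingly, I would choose the parameter $\delta$ appearing in Theorem~\ref{thm:error} to be $\delta' := \|\calE\|/\|C\|$. With this choice $\|\Delta\|/\|B\| \le \delta'$ holds by construction, and $\|\vec\zeta\|/\|\vec b\| = 0 \le \delta'$ holds vacuously (using $\vec c \ne \vec 0$, which together with nonsingularity of $C$ also guarantees $\vec W^* \ne \vec 0$, the final nondegeneracy hypothesis of Theorem~\ref{thm:error}).

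Next I would verify the two quantitative preconditions. For $\|\calE\| < 1/\|C^{-1}\|$, the assumption $\|\calE\| < 1/v$ combined with $\|C^{-1}\| \le v$ (so $1/v \le 1/\|C^{-1}\|$) is enough. For $4\delta'\kappa(C) < 1$, observe
\[
4\,\delta'\,\kappa(C) \;=\; 4 \cdot \frac{\|\calE\|}{\|C\|} \cdot \|C\| \cdot \|C^{-1}\| \;=\; 4\|\calE\|\cdot\|C^{-1}\| \;\le\; 4\delta v,
\]
which is strictly less than $1$ because $\delta < 1/(4uv)$ gives $4\delta v < 1/u \le 1$ (using $u \ge 1$).

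With both preconditions met, Theorem~\ref{thm:error} delivers the uniqueness of $\vec W^*_p$ and the relative error bound $\|\vec W^* - \vec W^*_p\|/\|\vec W^*\| \le 4\delta'\kappa(C) \le 4\delta v \le 4\delta u v$, where the last inequality again uses $u \ge 1$. The argument is essentially bookkeeping, so there is no real obstacle; the only care needed is in selecting $\delta'$ so that the two normalization requirements of Theorem~\ref{thm:error} are satisfied simultaneously, which is exactly what the absolute hypothesis $\|\calE\| \le \delta < 1/(4uv)$ is designed to ensure. (Note, incidentally, that the factor $u$ in the conclusion is slack — the proof yields the slightly tighter $4\delta v$ — but the stated form $4\delta u v$ is convenient in later applications.)
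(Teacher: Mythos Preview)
Your argument is correct. Both your proof and the paper's reduce to Theorem~\ref{thm:error}, but the paper takes a slightly different route: rather than rescaling the perturbation parameter to $\delta' = \|\calE\|/\|C\|$, it augments the system by adjoining a dummy equation $x = 1$, setting
\[
 B := \begin{pmatrix} C & 0 \\ 0 & 1 \end{pmatrix}, \qquad
 \vec b := \begin{pmatrix} \vec c \\ 1 \end{pmatrix}, \qquad
 \Delta := \begin{pmatrix} \calE & 0 \\ 0 & 0 \end{pmatrix}.
\]
This forces $\|B\| = \max\{1,\|C\|\} \ge 1$ and $\|B^{-1}\| = \max\{1,\|C^{-1}\|\} \ge 1$, so that $\|\Delta\|/\|B\| \le \|\calE\| \le \delta$ holds without rescaling and $\kappa(B) \le uv$ directly. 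Your approach is arguably cleaner: it avoids the extra bookkeeping of the augmented system and, as you observe, yields the sharper bound $4\delta v$ rather than $4\delta u v$. One tiny edge case you gloss over is $\calE = 0$, where $\delta' = 0$ and Theorem~\ref{thm:error} formally requires $\delta > 0$; but then $\vec W^*_p = \vec W^*$ and the conclusion is trivial.
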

\begin{proof}
 We apply Theorem~\ref{thm:error} with
 \[
  B := \left(\begin{matrix} C \ & 0 \\ 0 \ & 1 \end{matrix}\right) \qquad \text{and} \qquad b := \left(\begin{matrix} c \\ 1 \end{matrix} \right)
   \qquad \text{and} \qquad \Delta := \left(\begin{matrix} \calE \ & 0 \\ 0 \ & 0 \end{matrix}\right) \,;
 \]
 i.e., a single equation $x=1$, for a new variable~$x$ is added to the system, without new errors.
 Notice that
 \[
  B^{-1} = \left(\begin{matrix} C^{-1} \ & 0 \\ 0 \ & 1 \end{matrix}\right) \qquad
   \text{and} \qquad \vec{V}^* := \left(\begin{matrix} \vec{W}^* \\ 1 \end{matrix} \right) \,.
 \]
 Further $\norm{B^{-1}} = \max\{1, \norm{C^{-1}}\}$.
 So we have $\norm{\Delta} = \norm{\calE} < 1/v \le 1/\max\{1,\norm{C^{-1}}\} = 1 / \norm{B^{-1}}$.
 Thus, by Theorem~\ref{thm:error} there is a unique solution of $(B+{\Delta})\cdot \vec{V}=\vec{b}$,
  hence $\vec{W}^*_{p}$ is unique too.
 Moreover, we have
 \begin{align*}
  \frac{\norm{\Delta}}{\norm{B}} & = \frac{\norm{\Delta}}{\max\{1,\norm{C}\}} \le \norm{\Delta} = \norm{\calE} \le \delta \qquad \text{and} \\
  4 \cdot \delta \cdot \kappa(B) & = 4 \cdot \delta \cdot \max\{1,\norm{C}\} \cdot \max\{1,\norm{C^{-1}}\} \le 4 \cdot \delta \cdot u \cdot v < 1\,,
 \end{align*}
 so Theorem~\ref{thm:error} implies
 \[
 \frac{\norm{\vec{W}^* - \vec{W}^*_{p}}}{\norm{\vec{W}^*}}\quad \leq\quad 4\cdot \delta\cdot \kappa(B) \quad \le \quad \delta \cdot 4 u v\,.
 \]
\qed
\end{proof}

\noindent With this at hand we can prove Proposition~\ref{prop-exp-approx}:
\begin{refproposition}{prop-exp-approx}
  Let $b\in \Rset^+$ satisfy $E(p{\downarrow}q)\leq b$ for all $(p,q) \in \pfterm$.
  For each $\varepsilon$, where $0 < \varepsilon < 1$, let
  $\delta = \varepsilon\, / (12\cdot b^2)$.
  If $\norm{G-H} \le \delta$, then the perturbed system
   $\vec{V} = G \cdot \vec{V} + \vone$
  has a unique solution $\vec{F}$.
  Moreover, we have that
  \[
   |E(p{\downarrow} q) - \vec{F}_{pq}| \quad \leq\quad \varepsilon \qquad \text{for all $(p,q) \in \pfterm$.}
  \]
  Here $\vec{F}_{pq}$ is the component of $\vec{F}$ corresponding to the variable $V(p{\downarrow}q)$.
\end{refproposition}
\begin{proof}
Denote by $\vec{E}$ the vector of expected termination times, i.e., the unique solution of $\calL'$,
 i.e., $\vec{E} = (I-H)^{-1} \vone$.
Recall that 
 all components of $\vec{E}$ are finite.

We will apply Proposition~\ref{prop:error} using the following assignments:
 $C=I-H, C+{\calE} =I-G, \vec{c}=\vec{1}, \vec{W}^*=\vec{E}, \vec{W}^*_p=\vec{F}$. 
To find a suitable~$u$, we need to find a bound on $\norm{I-H}$.
By comparing $\calL'$ with~\eqref{eq:termination-probabilities} it follows that $\norm{H \vone} \le 2$ and hence
 \begin{equation} \label{eq:norm-I-H}
  \norm{I - H} \quad \le \quad 1 + \norm{H} \quad = \quad 1 + \norm{H \vone} \quad \le \quad 3 \ =: \ u \,.
 \end{equation}
Further, we set $v := b$, so we need to show $\norm{(I-H)^{-1}} \le b$.
By our assumption, $\norm{\vec{E}} \ \leq\  b$.
Recall that $\vec{E} = (I-H)^{-1} \vone$, so if $(I-H)^{-1}$ is nonnegative, then
 $\norm{(I-H)^{-1}} = \norm{(I-H)^{-1} \vone} = \norm{\vec{E}} \le b$,
 hence it remains to show that $(I-H)^{-1}$ is  nonnegative.
To see this, note that $\vec{E}$ is the (unique) fixed point of a linear function $\calF$ which to every $\vec{V}$ assigns $H\cdot \vec{V}+\vec{1}$.
This function is continuous and monotone, so by Kleene's theorem we get that
 $\vec{E}=\sup_{i\in \Nset} \calF^i(\vec{0}) = \sum_{i=0}^\infty H^{i}\vone$.
Recall that $\vec{E}$ is finite, so the matrix series $H^* := \sum_{i=0}^\infty H^{i}$ converges and thus equals $(I-H)^{-1}$.
Hence $(I-H)^{-1} = H^*$, which is nonnegative as $H$ is nonnegative.

Now we are ready to apply Theorem~\ref{prop:error}.
Since $\norm{G - H} \le \varepsilon / (12\cdot b^2) < 1/v$,
 the perturbed system $\vec{V} = G \cdot \vec{V} + \vone$ has a unique solution $\vec{F}$ as desired.
By applying the second part of Theorem~\ref{prop:error} we get
 \begin{equation} \label{eq:prop-error-application}
  \frac{\norm{\vec{E} - \vec{F}}}{\norm{\vec{E}}} \ \le \ \delta \cdot 12 \cdot b \qquad \text{for $\norm{G - H} \le \delta \le 1 / (12 \cdot b)$.}
 \end{equation}
Hence,
\begin{align*}
 |E(p{\downarrow} q) - \vec{F}_{pq}|
 & \le \norm{\vec{E} - \vec{F}} && \text{(by the definition of the norm)} \\
 & \le b \cdot \frac{\norm{\vec{E} - \vec{F}}}{\norm{\vec{E}}} && \text{by $\norm{\vec{E}}\leq b$} \\
 & \le b \cdot \delta \cdot 12\cdot b && \text{(by~\eqref{eq:prop-error-application})} \\
 & = \varepsilon && \text{(by the definition of~$\delta$).}
\end{align*}
%
%
\qed
\end{proof}

\begin{refproposition}{prop:exp-time-bound-special}
 \stmtpropexptimeboundspecial
\end{refproposition}
\begin{proof}
 The proof follows directly from Proposition~\ref{prop:grand-bound}.
\qed
\end{proof}

\subsection{Quantitative Model-Checking of  $\omega$-regular 
Properties (Section~\ref{sec-LTL})}

\begin{refproposition}{prop-product}
  Let $\Sigma$ be a finite alphabet, $\A$ a pOC, $\nu$ a valuation,
  $\R$ a DRA over $\Sigma$, and $p(0)$ a configuration of $\A$. 
  Then there is a pOC $\A'$ with Rabin acceptance condition 
  and a configuration $p'(0)$ of $\A'$ constructible in polynomial time
  such that the probability of all $w \in \run_{\A}(p(0))$ where $\nu(w)$
  is accepted by $\R$ is equal to the probability of all accepting
  $w \in \run_{\A'}(p'(0))$.   
\end{refproposition}
\begin{proof}  
Let $(E_1,F_1),\dots,(E_k,F_k)$ be the Rabin acceptance condition
of $\R$. The automaton $\A'$ is the synchronized product of 
$\A$ and $\R$  where
\begin{itemize}
\item $Q \times R$ is the set of control states, where $R$ is the set of
  states of $\R$;
\item $(p,r) \prule{x,c} (p',r')$ iff $p \prule{x,c} p'$ and
  $r \xrightarrow{\nu(p(1))} r'$ is a transition in $\R$;
\item $(p,r) \zrule{x,c} (p',r')$ iff $p \zrule{x,c} p'$ and
  $r \xrightarrow{\nu(p(0))} r'$ is a transition in $\R$.
\end{itemize}
The Rabin acceptance condition of $\A'$ is
$(Q \times E_1, Q \times F_1),\dots,(Q \times E_k,Q \times F_k)$.
\qed
\end{proof}

\begin{refproposition}{prop-visiting-approx}
  Let $c = 2|Q|$. For every $s \in G$, let $R_s$ be the probability
  of visiting a BSCC of $\G$ from $s$ in at most $c$ transitions, and
  let $R = \min\{R_s \mid s \in G\}$. Then $R > 0$ and if all transition
  probabilities in $\G$ are computed with relative error at most
  $\varepsilon R^3/8(c+1)^2$, then the resulting system
  $(I-A')\vec{V} = \vec{b}'$ has a unique solution $\vec{U}^*$ such that
  $|\vec{V}^*_s - \vec{U}^*_s|/\vec{V}^*_s \leq \varepsilon$ for every
  $s \in G$.
\end{refproposition}
\begin{proof}
The first step towards applying Theorem~\ref{thm:error} is to estimate 
the condition number $\kappa = \norm{I-A} \cdot \norm{(I-A)^{-1}}$. Obviously,
$\norm{I-A} \leq 2$. Further, $\norm{(I-A)^{-1}}$ is bounded by the
expected number of steps needed to reach a BSCC of $\G$ from a
state of $G$ (here we use a standard result about absorbing finite-state
Markov chains). 
Since $G$ has at most $c$ states, we have that $R_s > 0$, and hence
also $R >0$. Obviously, 
the probability on \emph{non-visiting} a BSCC of $\G$ in 
at most $i$ transitions from a state of $G$ is bounded
by $(1-R)^{\lfloor i/c\rfloor}$. Hence, the probability of visiting
a BSCC of $\G$ from a state of $G$ after \emph{exactly} $i$
transitions is bounded by $(1-R)^{\lfloor (i-1)/c\rfloor}$.
Further, a simple calculation shows that
\begin{eqnarray*}
  \norm{(I-A)^{-1}} & \quad\leq\quad & 
  \sum_{i=1}^\infty i \cdot (1-R)^{\lfloor (i-1)/c\rfloor}  \quad=\quad 
  \sum_{i=0}^\infty \left(\frac{c(c+1)}{2} + ic^2\right)
       \cdot \left(1-R\right)^i\\
   & = & \frac{c(c+1)}{2R} + \frac{c^2 (1-R)}{R^2}
    \quad\leq\quad \left(\frac{c+1}{R}\right)^2
\end{eqnarray*}
Hence, $\kappa \leq 2(c+1)^2/R^2$. Let $\vec{V}^*$ be the unique
solution of $(I-A)\vec{V} = \vec{b}$. Since $\norm{\vec{V}^*} \leq 1$ and 
$\vec{V}^*_s \geq R$ for every $s \in G$, it suffices to compute
an approximate solution $\vec{U}^*$ such that
\[
  \frac{\norm{\vec{V}^*-\vec{U}^*}}{\norm{\vec{V}^*}} \quad\leq\quad
  \varepsilon \cdot R
\]
By Theorem~\ref{thm:error}, we have that
\[
  \frac{\norm{\vec{V}^*-\vec{U}^*}}{\norm{\vec{V}^*}} \quad\leq\quad
  4\tau\kappa \quad\leq\quad \frac{8\tau(c+1)^2}{R^2} 
\]
where $\tau$ is the relative error of $A$ and $\vec{b}$. Hence, it
suffices to choose $\tau$ so that
\[
   \tau \quad\leq\quad 
   \frac{\varepsilon R^3}{8(c+1)^2}
\]
and compute all transition probabilities in $\G$ up to the relative
error~$\tau$.  Note that the approximation $A'$ of the matrix 
$A$ which is obtained in this way is still regular, because 
\[
  \norm{A - A'} \quad\leq\quad  \tau \quad\leq\quad 
   \frac{\varepsilon R^3}{8(c+1)^2} \quad<\quad \frac{R^2}{(c+1)^2} 
   \quad\leq\quad \frac{1}{\norm{(I-A)^{-1}}}
\]
\qed
\end{proof}

\noindent
Now we prove the divergence gap theorem. Some preliminary lemmata
are needed.

\begin{lemma} \label{lem:reach-high} Let $A$ be strongly connected and
  $t \ge 0$.  Assume $[p{\downarrow}] > 0$ for all $p \in Q$.  Let
  $\cs{0} \ge 1$ and $\ps{0} \in Q$ such that $\vv_{\ps{0}} = \vmax$.
  Let $b \in \Nset$.  Then
  \[
   \calP \left(\exists i: \cs{i} \ge b \land \forall j \le i: \cs{j} \ge 1 \;\middle\vert\; \run(\ps{0}(\cs{0})) \right)
    \quad \ge \quad \frac{1}{b + 1 + \va} \,.
  \]
\end{lemma}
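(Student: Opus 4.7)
The plan is to apply the optional stopping theorem to the martingale $(\ms{i})$ from Proposition~\ref{prop-martingale} at the exit time of the counter from the interval $[1, b-1]$. The hypothesis $\vv_{\ps{0}} = \vmax$ is crucial: it makes the starting value $\ms{0} = \cs{0} + \vmax$ as large as possible, which will translate into a lower bound on the probability of exit through the top. First I would dispose of the trivial case $\cs{0} \geq b$, in which the event holds at $i = 0$ with probability $1 \geq 1/(b+1+\va)$. For $\cs{0} < b$, I would define
\[
  \tau \;:=\; \min\{i \geq 0 : \cs{i} = 0 \text{ or } \cs{i} \geq b\}.
\]
Since counter increments lie in $\{-1, 0, 1\}$, on $\{\tau < \infty\}$ we necessarily have $\cs{\tau} \in \{0, b\}$, and the event of the lemma is precisely $\{\tau < \infty \wedge \cs{\tau} = b\}$, whose probability I will denote $p$.

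The core computation is OST applied to the bounded stopping time $\tau \wedge N$, giving $\cs{0} + \vmax = \ms{0} = \E[\ms{\tau \wedge N}]$. Using $t \geq 0$ (so $-it \leq 0$) and $\vv_{\ps{i}} \leq \vmax$, one obtains the pointwise upper bounds $\ms{\tau} \leq \vmax$ on $\{\tau \leq N,\ \cs{\tau} = 0\}$, and $\ms{\tau \wedge N} \leq b + \vmax$ on the complementary event $\{\tau \leq N,\ \cs{\tau} = b\} \cup \{\tau > N\}$ (using $\cs{N} \leq b-1$ in the last case, which follows from the definition of $\tau$). Taking expectations,
\[
  \cs{0} + \vmax \;\leq\; \vmax \;+\; b \cdot (p_N + r_N),
\]
where $p_N = \calP(\tau \leq N,\ \cs{\tau} = b)$ and $r_N = \calP(\tau > N)$. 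Rearranging yields $p_N + r_N \geq \cs{0}/b \geq 1/b$.

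Finally I would let $N \to \infty$. Provided $\tau < \infty$ almost surely, $r_N \to 0$ and $p_N \uparrow p$, so $p \geq \cs{0}/b \geq 1/b \geq 1/(b+1+\va)$, as claimed. The main obstacle is therefore establishing the almost sure finiteness of $\tau$. For $t > 0$ this is relatively easy: the sharper pointwise bound $\ms{N} \leq (b-1) + \vmax - Nt$ on $\{\tau > N\}$, reinserted into the OST identity, forces $(Nt - (b-1)) \cdot r_N \leq b$, so $r_N = O(1/N)$. For $t = 0$ the martingale $\ms{\tau \wedge N}$ is uniformly bounded, and one must instead exploit the finite-state structure of the chain restricted to $Q \times \{1, \ldots, b-1\}$: strong connectivity of $A$ together with the hypothesis $[p{\downarrow}] > 0$ for every $p \in Q$ yields a uniform positive lower bound on the probability of leaving this bounded set in finitely many steps, whence $\calP(\tau > N) \to 0$ by the usual iteration via the strong Markov property. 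I note that the argument in fact yields the strictly stronger bound $p \geq 1/b$; the form $1/(b+1+\va)$ in the statement is presumably chosen because this is the shape of denominator that arises naturally when the lemma is used in the proof of Theorem~\ref{thm-gap}.
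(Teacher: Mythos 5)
Your proof is correct and follows essentially the same route as the paper: optional stopping applied to the martingale of Proposition~\ref{prop-martingale} at a two-sided exit time, with the hypothesis $\vv_{\ps{0}}=\vmax$ pushing the initial value $\ms{0}$ up, and almost-sure finiteness of the exit time derived from $[p{\downarrow}]>0$ exactly as in the paper's argument that $\E\tau<\infty$. The only notable difference is that you stop when the \emph{counter} exits $[1,b-1]$ rather than when the \emph{martingale} exits $[\vmax,\,b+\vmax]$; since the counter moves by at most one per step this removes the overshoot term $1+\va$ and indeed yields the slightly sharper bound $\cs{0}/b\ \ge\ 1/(b+1+\va)$.
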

\begin{proof}
 If $\cs{0} \ge b$, the lemma holds trivially.
 So we can assume that $\cs{0} < b$.
 For a run $w \in \run(\ps{0}(\cs{0}))$, we define a so-called \emph{stopping time} $\tau$ as follows:
 \[
  \tau := \inf\{ i \in \Nset_0 \mid \ms{i} \le \vmax \ \lor \ \ \ms{i} \ge b + \vmax \}
 \]
 Note that $1 + \vmax \le \ms{0} < b + \vmax$, i.e., $\tau \ge 1$.
 Let $E$ denote the subset of runs in~$\run(\ps{0}(\cs{0}))$ where $\tau < \infty$ and $\ms{\tau} \ge b + \vmax$;
  i.e., $E$ is the event that the martingale~$\ms{i}$ reaches a value of $b+\vmax$ or higher
  without previously reaching a value of $\vmax$ or lower.
 Similarly, let $D$ denote the subset of runs in~$\run(\ps{0}(\cs{0}))$ such that
  the counter reaches a value of~$b$ ore higher without previously hitting~$0$.
 To prove the lemma we need to show $\calP(D) \ge 1 / (b + 1 + \va)$.
 We will do that by showing that $D \supseteq E$ and $\calP(E) \ge 1 / (b + 1 + \va)$.

 First we show $D \supseteq E$.
 Consider any run in~$E$;
  i.e., $\ms{\tau} \ge b + \vmax$ and $\ms{i} > \vmax$ for all $i \le \tau$.
 So, for all $i \le \tau$ we have $\ms{i} = \cs{i} + \vv_{\ps{i}} - i t > \vmax$, implying $\cs{i} > 0$.
 Similarly, $\ms{\tau} = \cs{\tau} + \vv_{\ps{\tau}} - \tau t \ge b + \vmax$, implying $\cs{\tau} \ge b$.
 Hence, the run is in~$D$, implying $D \supseteq E$.
 Hence it remains to show $\calP(E) \ge 1 / (b + 1 + \va)$.

 Next we argue that $\E \tau$ is finite:
  Since $[p{\downarrow}] > 0$ for all $p \in Q$, there are constants $k \in \Nset$ and $x \in (0,1]$ such that,
   given any configuration $p(c)$ with $p \in Q$ and $c \ge 1$, the probability of reaching in at most~$k$ steps
   a configuration $q(c-1)$ for some $q \in Q$ is at least~$x$.
  Since $A$ is strongly connected, it follows that there are constants $k' \in \Nset$ and $x' \in (0,1]$ such that,
   given any configuration $p(c)$ with $p \in Q$ and $c \ge 1$, the probability of reaching in at most~$k'$ steps
   either a configuration with zero counter or a configuration $p(c-b)$ is at least~$x'$.
  It follows that whenever $\ms{i} < b + \vmax$ the probability that there is $j \le k'$ with $\ms{i+j} \le \vmax$ is at least~$x'$.
  Hence we have
   \[
    \E \tau
    =   \sum_{\ell = 0}^\infty \calP( \tau > \ell )
    \le k' \sum_{\ell = 0}^\infty \calP( \tau > k' \ell )
    \le k' \sum_{\ell = 0}^\infty (1-x')^\ell = k' / x' \,;
   \]
  i.e., $\E \tau$ is finite.
  Consequently, the \emph{Optional Stopping Theorem}~\cite{Williams:book} is applicable and asserts
   \begin{equation}
    \E \ms{\tau} = \E \ms{0} = \ms{0} \ge 1 + \vmax\,. \label{eq:optional-stopping}
   \end{equation}

  For runs in~$E$ we have $\ms{\tau-1} < b + \vmax$.
  Since the value of $\ms{i}$ can increase by at most $1 + \va$ in a single step,
   we have $\ms{\tau} \le b + \vmax + 1 + \va$ for runs in~$E$.
  It follows that
  \begin{align*}
   \E \ms{\tau} & \le \calP(E) \cdot (b + \vmax + 1 + \va) + (1 - \calP(E)) \cdot \vmax \\
   & = \vmax + \calP(E) \cdot (b + 1 + \va) \,.
  \end{align*}
  Combining this inequality with~\eqref{eq:optional-stopping} yields $\calP(E) \ge 1 / (b + 1 + \va)$.
  This completes the proof.
\qed
\end{proof}

Let $[\ps{0}(\cs{0}){\downarrow}]$ denote the probability that a run initiated in $\ps{0}(\cs{0})$ eventually reaches counter value zero.
The following lemma gives an upper bound on~$[\ps{0}(\cs{0}){\downarrow}]$.
\begin{lemma} \label{lem:azuma}
 Let $A$ be strongly connected and $t > 0$.
 Let
  \[
   a := \exp\left(- \frac{t^2}{2 (\va + t + 1)^2} \right)\,.
  \]
 Note that $0 < a < 1$.
 Let $\cs{0} \ge \va$.
 Then we have
  \[
   [\ps{0}(\cs{0}){\downarrow}] \quad \le \quad \frac{a^{\cs{0}}}{1-a} \qquad \text{for all $\ps{0} \in Q$.}
  \]
 Moreover, if $\cs{0} \ge 6 (\va + t + 1)^3 / t^3$, then $[\ps{0}(\cs{0}){\downarrow}] \le 1/2$ for all $\ps{0} \in Q$.
\end{lemma}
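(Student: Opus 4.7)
The plan is to mimic the proof of Proposition~\ref{lem:expected-term-bound-prob}, using the martingale $\ms{0},\ms{1},\ldots$ from Proposition~\ref{prop-martingale} over $\run(\ps{0}(\cs{0}))$ together with Azuma's inequality. Let $T$ denote the time at which the counter first reaches zero (with $T:=\infty$ otherwise). On the event $\{T=i\}$ the counter has stayed positive up to step $i{-}1$, so by the definition of the martingale $\ms{i}=\vv_{\ps{i}}-it\le \vmax-it$, whereas $\ms{0}=\cs{0}+\vv_{\ps{0}}\ge \cs{0}+\vmin$, giving
\[
\ms{0}-\ms{i}\ \ge\ \cs{0}-\va+it.
\]

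The single-step increments of the martingale are bounded by $\va+t+1$, so Azuma's inequality yields
\[
\calP(T=i)\ \le\ \exp\!\left(-\frac{(\cs{0}-\va+it)^2}{2i(\va+t+1)^2}\right)
\]
for every $i\ge 1$. The assumption $\cs{0}\ge \va$ makes the numerator at least $(it)^2$, so this bound simplifies to $a^i$ with $a$ as defined in the statement. Since at least $\cs{0}$ decrement steps are required to empty the counter we have $T\ge \cs{0}$ almost surely, and the geometric sum
\[
[\ps{0}(\cs{0}){\downarrow}]\ =\ \sum_{i=\cs{0}}^{\infty}\calP(T=i)\ \le\ \sum_{i=\cs{0}}^{\infty}a^i\ =\ \frac{a^{\cs{0}}}{1-a}
\]
gives the first statement.

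For the ``moreover'' part I would use the elementary bound $1-e^{-\alpha}\ge \alpha/2$ valid for $0\le \alpha\le 1$, applied to $\alpha:=t^2/(2(\va+t+1)^2)$. Since $|t|\le 1$ (the trend is a convex combination of counter changes in $\{-1,0,1\}$) and $\va\ge 0$, we have $\alpha\le 1/2$ and thus $1-a\ge \alpha/2$. The desired bound $a^{\cs{0}}/(1-a)\le 1/2$ then reduces to $\alpha\cs{0}\ge \ln(4/\alpha)$. Writing $M:=\va+t+1$ and $u:=M/t\ge 1$, the hypothesis $\cs{0}\ge 6M^3/t^3$ gives $\alpha\cs{0}\ge 3u$, while $\ln(4/\alpha)=\ln 8+2\ln u$; the inequality $3u\ge \ln 8+2\ln u$ holds at $u=1$ (since $3\ge \ln 8\approx 2.08$) and its derivative $3-2/u$ is positive on $[1,\infty)$, so it holds throughout.

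The main technical point, in my view, is keeping the numerical constants in the last step tight enough to reach the stated bound $6(\va+t+1)^3/t^3$; the structural part of the argument (martingale plus Azuma plus geometric summation) is a direct adaptation of Proposition~\ref{lem:expected-term-bound-prob} and presents no real difficulty.
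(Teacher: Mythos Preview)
Your proof is correct and follows essentially the same route as the paper: apply Azuma's inequality to the martingale of Proposition~\ref{prop-martingale} to bound $\calP(T=i)$ by $a^i$, then sum the geometric series from $i=\cs{0}$. For the ``moreover'' part the paper invokes the inequality $(\ln(1-e^{-d}) - \ln 2)/(-d) \le 2/d^{3/2}$ directly, whereas you go via $1-e^{-\alpha}\ge \alpha/2$ and a one-variable calculus check of $3u\ge \ln 8 + 2\ln u$; these are different but equally valid numerical verifications of the same threshold.
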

\begin{proof}
 Define $H_i$ as the event that the counter reaches zero for the first time after exactly $i$ steps;
  i.e., $H_i := \{w \in \run(\ps{0}(\cs{0})) \mid \cs{i} = 0 \ \land \ \forall 0 \le j < i: \cs{j} \ge 1\}$.
 We have $[\ps{0}(\cs{0}){\downarrow}] = \calP\left( H_0 \cup H_1 \cup \cdots \right)$.
 Observe that $H_i = \emptyset$ for $i < \cs{0}$, because in each step the counter value can decrease by at most~$1$.
 For all runs in~$H_i$ we have $\ms{i} = \vv_{\ps{i}} - i t$ and so
  \begin{equation*}
   \ms{0} - \ms{i} = \cs{0} + \vv_{\ps{0}} - \vv_{\ps{i}} + i t \,.
  \end{equation*}
 It follows that
 \begin{align*}
  \calP(H_i)
  & = \calP(H_i \ \land \ \ms{0} - \ms{i} = \cs{0} + \vv_{\ps{0}} - \vv_{\ps{i}} + i t) \\
  & \le \calP(\ms{0} - \ms{i} = \cs{0} + \vv_{\ps{0}} - \vv_{\ps{i}} + i t) \\
  & \le \calP(\ms{0} - \ms{i} \ge \cs{0} - \va + i t)  \\
  & \le \calP(\ms{0} - \ms{i} \ge i t) && \text{(as $\cs{0} \ge \va$)\,.}
 \end{align*}
 In each step, the martingale value changes by at most $\va + t + 1$.
 Hence Azuma's inequality (see~\cite{Williams:book}) asserts
 \begin{align*}
  \calP(H_i)
  & \le \exp  \left(- \frac{i t^2}{2 (\va + t + 1)^2} \right) && \text{(Azuma's inequality)} \\
  & = a^i \,.
 \end{align*}
 It follows that
 \begin{align*}
  [\ps{0}(\cs{0}){\downarrow}]
    = \sum_{i=0}^\infty \calP(H_i)
  & = \sum_{i=\cs{0}}^\infty \calP(H_i) && \text{(as $H_i = \emptyset$ for $i < \cs{0}$)} \\
  & \le \sum_{i=\cs{0}}^\infty a^i && \text{(by the computation above)} \\
  & = a^{\cs{0}} / (1-a) \,.
 \end{align*}
 This proves the first statement.
 For the second statement, we need to find a condition on~$\cs{0}$ such that $[\ps{0}(\cs{0}){\downarrow}] \le 1/2$.
 The condition provided by the first statement is equivalent to
  \begin{equation*}
   \cs{0} \ge \frac{\ln(1-a) - \ln 2}{\ln a}\,. \label{eq:cond-threshold}
  \end{equation*}
 Define $d := \frac{t^2}{2 (\va + t + 1)^2}$.
 Note that $a = \exp(-d)$ and $0 < d < 1$.
 It is straightforward to verify that
  \[
   \frac{\ln(1-\exp(-d)) - \ln 2}{-d} \le \frac{2}{d^{3/2}} \quad \text{for all $0 < d < 1$.}
  \]
 Since
  \[
   \frac{2}{d^{3/2}} = \frac{2 \cdot 2^{3/2} \cdot (\va + t + 1)^3}{t^3} \le \frac{6  (\va + t + 1)^3}{t^3}\,,
  \]
 the second statement follows.
\qed
\end{proof}

\begin{proposition} \label{prop:gap}
 Let $A$ be strongly connected and $t > 0$ and $[p{\downarrow}] > 0$ for all $p \in Q$.
 Let $p \in Q$ with $\vv_p = \vmax$.
 Then
 \[
  [p{\uparrow}] \ge \frac{t^3}{12 (2 \va + 4)^3}\,.
 \]
\end{proposition}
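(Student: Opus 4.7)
My plan is to combine Lemma~\ref{lem:reach-high} and Lemma~\ref{lem:azuma}. The first lemma allows me to lift the counter from $1$ to a prescribed threshold $b$ without hitting zero, with an explicit lower bound on the probability; the second, applied from the resulting configuration, bounds the conditional termination probability away from $1$. Multiplying these two estimates and using that $\vv_p = \vmax$ (which is exactly the hypothesis needed to invoke Lemma~\ref{lem:reach-high} from $p(1)$) gives the divergence gap.

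Concretely, I would first set $b := \lceil 6(\va + t + 1)^3/t^3 \rceil$, which is precisely the threshold for which the second statement of Lemma~\ref{lem:azuma} guarantees $[q(b){\downarrow}] \le 1/2$, and hence a non-termination probability of at least $1/2$, from any configuration $q(b)$ with $q \in Q$. Then I would apply Lemma~\ref{lem:reach-high} with $\ps{0} = p$ and $\cs{0} = 1$ to obtain a probability of at least $1/(b+1+\va)$ that the run reaches some configuration $q(c)$ with $c \geq b$ along a path that stays strictly positive. Concatenating these two events via the strong Markov property at this (almost surely finite) hitting time yields
\[
  [p{\uparrow}] \ \ge \ \frac{1}{2(b+1+\va)}\,.
\]

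It remains to reduce $1/(2(b+1+\va))$ to the stated form $t^3/(12(2\va+4)^3)$ by an arithmetic estimate. Here I would use two facts: (i) the trend $t = \valpha \vs$ is a convex combination of values in $\{-1,0,1\}$, so under the hypothesis $t > 0$ we have $t \in (0,1]$, giving $\va + t + 1 \le \va + 2$ and therefore $(\va+t+1)^3 \le (2\va+4)^3/8$; and (ii) $t \le 1$ implies $(2\va+4)^3/t^3 \ge 64$, so the additive terms ``$+1$'' (from the ceiling in the definition of $b$) and $1 + \va$ are comfortably absorbed into a small multiple of $(2\va+4)^3/t^3$. Together these give $b + 1 + \va \le 6(2\va+4)^3/t^3$, which yields the claimed bound.

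The main obstacle is the algebraic bookkeeping at the end: the two lemmata supply the core of the argument almost immediately, but merging their parameters into a single clean expression with the constant $12$ in the denominator requires a careful chain of inequalities. The only other point that needs a brief justification is the invocation of the strong Markov property at the random hitting time produced by Lemma~\ref{lem:reach-high}; this is standard for countable-state Markov chains but should be mentioned explicitly, since the conditional non-termination probability from the hitting configuration is what turns the two separately-proved estimates into the desired product bound.
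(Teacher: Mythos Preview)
Your proposal is correct and follows essentially the same approach as the paper: choose $b$ as the ceiling of $6(\va+t+1)^3/t^3$, apply Lemma~\ref{lem:reach-high} from $p(1)$ (using the hypothesis $\vv_p=\vmax$) to reach counter $\ge b$ with probability at least $1/(b+1+\va)$, then invoke Lemma~\ref{lem:azuma} via the Markov property for the factor $1/2$, and finish with the arithmetic bound $b+1+\va\le 6(2\va+4)^3/t^3$ using $0<t\le 1$. Your discussion of the strong Markov property and the absorption of the additive terms is slightly more explicit than the paper's, but the argument is the same.
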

\begin{proof}
%
Define $b$ as the smallest integer $b \ge 6 (\va + t + 1)^3 / t^3$.
By Lemma~\ref{lem:reach-high} we have
  \[
   \calP \left(\exists i: \cs{i} \ge b \land \forall j \le i: \cs{j} \ge 1 \;\middle\vert\; \run(p(1)) \right)
    \quad
     \ge \quad \frac{1}{b + 1 + \va} \,.
  \]
Since $0 < t \le 1$, we have
 \[
  b + 1 + \va \quad\le\quad 6 (\va + t + 2)^3 / t^3 + 1 + \va \quad\le\quad 6 (2 \va + 4)^3 / t^3
 \]
and so
 \[
   \calP \left(\exists i: \cs{i} \ge b \land \forall j \le i: \cs{j} \ge 1 \;\middle\vert\; \run(p(1)) \right)
    \quad \ge \quad \frac{t^3}{6 (2 \va + 4)^3}\,.
 \]
Using the Markov property and Lemma~\ref{lem:azuma} we obtain
 \[
  [p{\uparrow}] \ge \frac{t^3}{12 (2 \va + 4)^3} \,.
 \]
\qed
\end{proof}
Now let us drop the assumption that $A$ is strongly connected. Each 
BSCC $\B$ of $A$ induces
a strongly connected pOC in which we have a trend $t$ and 
a potential $\vec{v}$.
\begin{reftheorem}{thm-gap}
  Let $\A = (Q,\delta^{=0},\delta^{>0},P^{=0},P^{>0})$ be a pOC and
  $\X$ the underlying finite-state Markov chain of $\A$. 
  Let $p \in Q$ such that $[p{\uparrow}]>0$. Then there are two 
  possibilities:
  \begin{enumerate}
     \item There is $q\in Q$ such that $[p,q]>0$ and $[q{\uparrow}]=1$.
        Hence, $[p{\uparrow}] \geq [p,q]$.
     \item There is a BSCC $\B$ of $\X$ and a state $q$ of $\B$ such 
        that $[p,q]>0$, $t > 0$, and $\vec{v}_{q}=\vec{v}_{\max}$
        (here $t$ is the trend, $\vec{v}$ is the vector
        of Proposition~\ref{prop-martingale}, and $\vec{v}_{\max}$ 
        is the maximal component of~$\vec{v}$; all of these
        are considered in $\B$). Further,
        \[
            [p{\uparrow}]\quad \ge\quad 
            [p,q]\cdot \frac{t^3}{12 (2 |\vec{v}| + 4)^3}\,.
        \]       
 \end{enumerate}
\end{reftheorem}
\begin{proof}
  Assume that $[q{\uparrow}]<1$ for all $q\in Q$. Given a BSCC $\B$,
  denote by $R_{\B}$ the set of runs of $\run(p{\uparrow})$ that reach
  $\B$. Almost all runs of $\run(p{\uparrow})$ belong to $\bigcup_{\B}
  R_{\B}$. Moreover, using strong law of large numbers
  (see~e.g.~\cite{Williams:book}) and results of~\cite{BBEKW:OC-MDP-arXiv} (in
  particular Lemma~19), one can show that almost every run of
  $\run(p{\uparrow})$ belongs to some $R_{\B}$ satisfying $t>0$.  It
  follows that there is a BSCC $\B$ such that $t>0$ and
  $\calP(R_\B)>0$.  Now almost all runs of $R_\B$ either terminate, or
  visit all states of $\B$ infinitely many times. In particular, almost
  all runs of $R_\B$ reach a state $q$ satisfying
  $\vec{v}_{q}=\vec{v}_{\max}$, and thus $[p,q]>0$.  \qed
\end{proof}



\end{document}